\newtheorem{theorem}{Theorem}
\newtheorem{lemma}{Lemma}
\newtheorem{corollary}{Corollary}
\newtheorem{observation}{Observation}
\newcommand{\eop}{\hspace*{\fill}$\Box$}
\newproof{pf}{Proof}
\newenvironment{proof}{\begin{pf}}{\eop\end{pf}}
\newcommand{\GridLabel}{M}
\newcommand{\Grid}[1]{\GridLabel(#1)}
\newcommand{\bpi}{\bar\pi}
\newcommand{\gM}{\mu}
\newcommand{\lmax}{\ell_{\text{\upshape max}}}
\newcommand{\lmin}{\ell_{\text{\upshape min}}}
\newcommand{\V}[1]{V^{#1}}
\newcommand{\CF}[2]{W_{#1}^{#2}}
\newcommand{\CS}[3]{S_{#1}^{#2,#3}}
\begin{document}

\begin{frontmatter}

\title{Separator-Based Graph Embedding into Multidimensional Grids
 with Small Edge-Congestion}
\author{Akira Matsubayashi\corref{cor1}}
\ead{mbayashi@t.kanazawa-u.ac.jp}
\address{Division of Electrical Engineering and Computer Science,
 Kanazawa University, Kanazawa 920-1192, Japan}
\cortext[cor1]{Tel/Fax: +81 76 234 4837}

\begin{abstract}
We study the problem of embedding a guest graph
 with minimum edge-congestion into a multidimensional grid
 with the same size as that of the guest graph.
Based on a well-known notion of graph separators,
 we show that an embedding with a smaller edge-congestion can be 
 obtained if the guest graph has a smaller separator,
 and if the host grid has a higher but constant dimension.
Specifically, we prove that
 any graph with $N$ nodes, maximum node degree $\Delta$,
 and with a node-separator of size $O(n^\alpha)$ ($0\leq\alpha<1$)
 can be embedded into
 a grid of
 a fixed dimension $d\geq 2$ with
 at least $N$ nodes,
 with an edge-congestion of
 $O(\Delta)$ if $d>1/(1-\alpha)$,
 $O(\Delta\log N)$ if $d=1/(1-\alpha)$, and
 $O(\Delta N^{\alpha-1+\frac{1}{d}})$ if $d< 1/(1-\alpha)$.
This edge-congestion achieves constant ratio approximation
 if $d>1/(1-\alpha)$, and
 matches
 an existential lower bound within a constant factor
 if $d\leq 1/(1-\alpha)$.
Our result implies
 that if the guest graph has an excluded minor of a fixed size,
 such as a planar graph,
 then we can obtain an edge-congestion of
 $O(\Delta\log N)$ for $d=2$ and $O(\Delta)$ for any fixed $d\geq 3$.
Moreover, 
 if the guest graph has a fixed treewidth,
 such as a tree, an outerplanar graph, and a series-parallel graph,
 then we can obtain an edge-congestion of
 $O(\Delta)$ for any fixed $d\geq 2$.
To design our embedding algorithm,
 we introduce \emph{edge-separators bounding expansion}, such that
 in partitioning a graph into isolated nodes using edge-separators recursively,
 the number of outgoing edges from a subgraph to be partitioned in a recursive
 step is bounded.
We present an algorithm to construct
 an edge-separator with expansion of $O(\Delta n^\alpha)$
 from a node-separator of size~$O(n^\alpha)$.
\end{abstract}

\begin{keyword}
graph embedding\sep edge-congestion\sep grid\sep separator\sep expansion
\end{keyword}

\end{frontmatter}

\section{Introduction}
The \emph{graph embedding} of a guest graph into a host graph is
 to map (typically one-to-one) nodes and edges of the guest graph
 onto nodes and paths of the host graph, respectively,
 so that an edge of the guest graph is mapped onto a path
 connecting the images of end-nodes of the edge.
The graph embedding problem is
 to embed a guest graph into a host graph with certain
 constraints and/or optimization criteria.
This problem has applications such as
 efficient VLSI layout and parallel computation.
I.e., the problem of efficiently laying out VLSI can be formulated as
 the graph embedding problem with modeling a design rule on wafers
 and a circuit to be laid out as host and guest graphs, respectively.
Also, the problem of efficiently implementing a parallel algorithm
 on a message passing parallel computer system consisting
 of processing elements connected by an interconnection network
 can be formulated as the graph embedding problem with
 modeling the interconnection network and
 interprocess communication in the parallel algorithm
 as host and guest graphs, respectively.
See for a survey, e.g., \cite{RH01}.
The major criteria to measure the efficiency of an embedding
 are dilation, node-congestion, and edge-congestion.
In this paper, we consider the problem of embedding a guest graph
 with the minimum edge-congestion
 into a $d$-dimensional grid
 with $d\geq 2$ and the same size as that of the guest graph.
Embeddings into grids with the minimum edge-congestion are important
 for both VLSI layout and parallel computation.
Actually, design rules on wafers in VLSI
 are usually modeled as $2$-dimensional grids,
 and an edge-congestion provides a lower bound on the number of layers
 needed to lay out a given circuit.
As for parallel computation, multidimensional grid networks, including
 hypercubes, are popular for
 interconnection networks.
On interconnection networks adopting
 circuit switching or wormhole routing, in particular,
 embeddings with the edge-congestion of $1$ are essential
 to minimize the communication latency \cite{KL91,LC97,TT00}.
In addition, the setting that host and guest graphs have the same number of
 nodes is important for parallel computation because
 the processing elements are expensive resource and
 idling some of them is wasteful.

\subsection*{Previous Results}
Graph embedding into grids with small edge-congestion
 has extensively been studied.
Table~\ref{tb:summary} summarizes previous results of
 graph embeddings minimizing edge-congestion
 (and other criteria as well in some results) for various combinations of
 guest graphs and host grids.

\begin{table}
\caption{Previous results of graph embeddings minimizing edge-congestion.}
\label{tb:summary}
\begin{center}
\footnotesize
\begin{tabular}{lccccc}
\hline
Guest Graph & \multicolumn{2}{c}{Host Grid} & Congestion & Dilation &\\
$N$: \# nodes, $\Delta$: max degree
 & \# nodes & dimension & & &\\
$s$: separator size & & & & &\\
\hline
\hline
connected planar graph & $N$ & 2 & NP-hard for 1 & any &\cite{FW91}\\
connected graph & $2^{\lceil\log_2N\rceil}$ & $\lceil\log_2N\rceil$
 & NP-hard for 1 & any &\cite{KL91}\\
complete binary tree & $N+1$ &  2 & 2 & $O(\sqrt N)$ &\cite{Zi91}\\
complete binary tree & $N+1$ & 4 & 1 & $O(N^{1/4})$&\cite{LC97}\\
complete binary tree & $N+O(\sqrt{N})$ & 2 & 1 & $O(\sqrt N)$&\cite{Go87}\\
complete $k$-ary tree ($k\geq 3$) & $N+O(N/\sqrt{k})$ & 2
 & $\lceil k/2\rceil+1$ & $O(\sqrt N)$&\cite{TT00}\\
binary tree & $2^{\lceil\log_2N\rceil}$ & $\lceil\log_2N\rceil$
 & 5 & $\lceil\log_2 N\rceil$&\cite{MU99}\\
2-D $h\times w$-grid ($h\leq w$) & $h'w'\geq N^\ast$ & 2
 & $\lceil h/h'\rceil+1$ & $\lceil h/h'\rceil+1$ &\cite{RS01}\\
2-D $h\times w$-grid ($h\leq w$) & $h'w'\geq N^\dag$ & 2 & 5 & 5 &\cite{RS01}\\
2-D $h\times w$-grid ($h\leq w$) & $h'w'\geq N^\dag$ & 2 & 4 & $\geq 4h-3$
 &\cite{RS01}\\
2-D grid & $2^{\lceil\log_2N\rceil}$ & $\lceil\log_2N\rceil$
 & 2 & 3 &\cite{RS98}\\
$\Delta\leq 4$, $s=O(n^\alpha)$, $\alpha<1/2$ & $O(N)$ & 2 & 1
 & $O(\sqrt N/\log N)$ &\cite{BL84}\\
$\Delta\leq 4$, $s=O(\sqrt{n})$ & $O(N\log^2N)$ & 2 & 1
 & $O(\frac{\sqrt N\log N}{\log\log N})$ &\cite{BL84}\\
$\Delta\leq 4$, $s=O(n^\alpha)$, $\alpha>1/2$
 & $O(N^{2\alpha})$ & 2 & 1
 & $O(N^\alpha)$ &\cite{BL84}\\
tree width $t$ & $2^{\lceil\log_2N\rceil}$ & $\lceil\log_2N\rceil$
 & $O(\Delta^4t^3)$ & $O(\log(\Delta t))$&\cite{HM02-1}\\
 $s=\log^{O(1)}N$ & $2^{\lceil\log_2N\rceil}$
 & $\lceil\log_2N\rceil$ & $\Delta^{O(1)}$ & $O(\log\Delta)$&\cite{HM03}\\
$\Delta=O(1)$ & $N$ & $d=O(1)$
 & $O(N^{1/d}\log N)$ & $O(N^{1/d}\log N)$ &\cite{LR99}\\
$\Delta\leq 2\lceil\log_2N\rceil$
 & $2^{2\lceil\log_2N\rceil}$ & $2\lceil\log_2N\rceil$
 & 1 & $2\lceil\log_2N\rceil$ &\cite{MU99}\\
\hline
\end{tabular}\\
$\mbox{}^\ast$ $h'\times w'$-grid with $h'<h\leq w< w'$\\
$\mbox{}^\dag$ $h'\times w'$-grid with $h<h'\leq w'< w$
\end{center}
\end{table}

VLSI layout has been studied through formulating the layout as
 the graph embedding into a $2$-dimensional grid
 with objective of minimizing the grid under
 constrained congestion-$1$ routing \cite{Ul84}.
Leiserson \cite{Le83} and Valiant \cite{Va81} independently proposed
 such embeddings based on graph separators.
In particular, 
 it was proved in \cite{Le83} that
 any $N$-node graph with maximum node degree at most $4$
 and an edge-separator of size $O(n^\alpha)$ can be laid out
 in an area of $O(N)$ if $\alpha<1/2$,
 $O(N\log^2 N)$ if $\alpha=1/2$,
 and $O(N^{2\alpha})$ if $\alpha>1/2$.
A separator of a graph $G$ is a set $S$ of either nodes or edges
 whose removal partitions the node set $V(G)$ of $G$ into two subsets
 of roughly the same size with no edge between the subsets.
The graph $G$ is said to have a (recursive) separator of size $s(n)$ if
 $|S|\leq s(|V(G)|)$ and the subgraphs partitioned by $S$ recursively
 have separators of size $s(n)$.
Separators are important tools to design divide-and-conquer algorithms
 and have been extensively studied.
Bhatt and Leighton \cite{BL84} achieved a better layout with
 several nice properties including reduced dilation as well as
 the same or better area as that of \cite{Le83}
 by introducing a special type of edge-separators called \emph{bifurcators}.
An approximation algorithm for VLSI layout was proposed in \cite{EGS03}.
Separator-based graph embeddings on hypercubes were presented
 in \cite{BCLR92,Ob94,HM03}.
In particular, Heun and Mayr \cite{HM03} proved that
 any $N$-node graph with
 maximum node degree $\Delta$ and
 an extended edge-bisector of polylogarithmic size 
 can be embedded into
 a $\lceil\log_2 N\rceil$-dimensional cube with
 a dilation of $O(\log\Delta)$
 and an edge-congestion of $\Delta^{O(1)}$.

A quite general embedding based on the multicommodity flow
 was presented by Leighton and Rao~\cite{LR99}, who proved
 that any $N$-node bounded degree graph $G$ can be embedded into
 an $N$-node bounded degree graph $H$
 with both dilation and edge-congestion of $O((\log N)/\alpha)$,
 where $\alpha$ is the \emph{flux} of $H$, i.e.,
$\min_{U\subset V(H)}\frac{
|\{(u,v)\in E(H)\mid u\in U,\ v\in V(H)\setminus U\}|}{
\min\{|U|,|V(H)\setminus U|\}}
$.
This implies that
 $G$ can be embedded into an $N$-node $d$-dimensional grid with
 both dilation and edge-congestion of $O(N^{1/d}\log N)$
 for any fixed $d$.

\subsection*{Contributions and Technical Overview}
In this paper,
 we improve previous graph embeddings into grids and hypercubes
 in terms of
 edge-congestion, arbitrary dimension, and
 minimum size of host grids.
In particular, we claim that if a guest graph has a small separator, then
 we do not need grids with large dimension, such as hypercubes, to
 suppress the edge-congestion.

First, we present an embedding algorithm based on the permutation routing.
The permutation routing is to
 construct paths connecting given pairs of source and destination nodes
 such that
 no two pairs have the same sources or the same destinations.
This embedding algorithm achieves an edge-congestion as stated in the
 following theorem:

\begin{theorem}
\label{th:PermutationEmbedding}
Any graph with $N$ nodes and maximum node degree $\Delta$
 can be embedded into a $d$-dimensional
 $\ell_1\times\cdots\times \ell_d$-grid ($\prod_{i=1}^d \ell_i\geq N$) with
 a dilation at most $2\sum_{i=1}^{d}\ell_i$ and
 an edge-congestion at most $2\lceil\Delta/2\rceil\cdot\max_{i}\{\ell_i\}$.
\end{theorem}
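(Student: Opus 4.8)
The plan is to reduce the embedding to routing a few \emph{permutations} on the grid, each routed with small dilation and congestion by a dimension‑by‑dimension scheme. Since $\prod_{i=1}^d\ell_i\ge N$, I may pad the guest graph $G$ with isolated nodes until it has exactly $\prod_i\ell_i$ nodes and fix an \emph{arbitrary} bijection $\phi$ from $V(G)$ onto the grid nodes as the node‑embedding; because the target dilation bound is essentially twice the diameter of the grid, no care is needed in choosing $\phi$. The real work is to route the edges.

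I would decompose $E(G)$ into $\lceil\Delta/2\rceil$ subgraphs of maximum degree at most $2$, using the standard fact that such a decomposition exists: reducing to even $\Delta$, make $G$ a $\Delta$‑regular multigraph, orient each connected component along an Euler tour so every node has in‑degree and out‑degree $\Delta/2$, form the $(\Delta/2)$‑regular bipartite multigraph on two copies of $V(G)$ with an edge $u^-v^+$ for each arc $u\to v$, and König‑edge‑color it into $\Delta/2$ perfect matchings, each of which is a spanning union of vertex‑disjoint directed cycles in $G$, i.e.\ a subgraph of maximum degree at most $2$. Call the parts $G_1,\dots,G_{\lceil\Delta/2\rceil}$; each is a disjoint union of paths and cycles. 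Orienting each path and cycle of $G_j$ consistently makes every node have in‑degree and out‑degree at most $1$ in $G_j$, so the map sending the tail of each arc to its head is an injection on $V(G)$; extending it to a permutation $\sigma_j$ of $V(G)$ (arbitrarily on the leftover path‑endpoints and isolated nodes) and conjugating by $\phi$ gives a permutation $\pi_j$ of the grid nodes such that every edge $\{u,v\}\in E(G_j)$, oriented $u\to v$, is exactly the source–destination pair $(\phi(u),\pi_j(\phi(u)))$. Hence it suffices, for each $j$, to route $\pi_j$, assign to the edge $\{u,v\}$ the path of $\phi(u)$, and discard the paths of the padding pairs.

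The core is therefore the routing lemma: every permutation $\pi$ of the nodes of the $\ell_1\times\cdots\times\ell_d$‑grid can be realized by paths with dilation at most $2\sum_i\ell_i$ and edge‑congestion at most $2\max_i\ell_i$. I would prove this by induction on $d$ with the classical three‑phase mesh‑routing scheme. Consider the $N/\ell_1$ lines parallel to dimension~$1$; the bipartite multigraph on the source‑lines and destination‑lines, with one edge per node $v$ joining the line of $v$ to the line of $\pi(v)$, is $\ell_1$‑regular, so König's theorem splits it into $\ell_1$ perfect matchings, assigning each node a color in $\{1,\dots,\ell_1\}$ that is a bijection within every source‑line and within the set of nodes entering any fixed destination‑line. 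Phase~1 routes each node within its dimension‑$1$ line to the coordinate equal to its color, a permutation within each such line, so every dimension‑$1$ edge carries at most $2\lfloor\ell_1/2\rfloor\le\ell_1$ paths (in a permutation on a path the numbers of packets crossing a given edge in the two directions are equal). The color property makes each hyperplane $x_1=a$ full and turns the task of delivering its nodes to the dimension‑$1$ lines of their destinations into a permutation of that $(d-1)$‑dimensional subgrid, which Phase~2 routes recursively; Phase~3 again routes within dimension‑$1$ lines, a permutation per line by a final counting check, adding at most another $\ell_1$. Thus a dimension‑$1$ edge carries at most $2\ell_1$ paths, a dimension‑$i$ edge with $i\ge2$ at most $2\max_{i\ge2}\ell_i$ by the inductive hypothesis, and any path has length at most $2(\ell_1-1)$ plus the recursive dilation, i.e.\ at most $2\sum_i\ell_i$; the base $d=1$ is a single path. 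Summing the $\lceil\Delta/2\rceil$ permutation routings gives edge‑congestion at most $\lceil\Delta/2\rceil\cdot2\max_i\ell_i$ and dilation at most $2\sum_i\ell_i$, as claimed.

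I expect the main obstacle to be the routing lemma — in particular verifying that the König‑coloring makes Phases~2 and~3 genuine permutations in the $d$‑dimensional recursion (the two counting invariants) and checking that each grid edge, lying in a single dimension, is touched only by the two ``line'' phases at the corresponding recursion level, so that per‑dimension congestion does not accumulate across levels. The secondary point requiring care is obtaining exactly $\lceil\Delta/2\rceil$ degree‑$2$ parts (rather than the $\Delta+1$ that a naive proper edge‑coloring gives), which is what the even/odd case split and the Euler‑tour orientation are for.
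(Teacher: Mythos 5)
Your proposal is correct and follows essentially the same route as the paper: the reduction to $\lceil\Delta/2\rceil$ permutation instances via an Euler-tour orientation and a K\H{o}nig edge-coloring of the bipartite source--target graph is exactly the paper's Lemma~2, and your three-phase, line-colored routing is precisely the Baumslag--Annexstein algorithm the paper invokes, with the same $2\ell_1$ accounting on dimension-$1$ edges and recursion on the remaining dimensions. Padding to a full permutation rather than working directly with partial ($1$-$1$) routing graphs is only a cosmetic difference.
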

We prove this theorem in Sect.~\ref{ssc:PermutationRoutingEmbedding}
 by observing that
 for any one-to-one mapping of nodes of a guest graph $G$
 to nodes of a host graph $H$,
 routing edges of $G$ on $H$ can be reduced to at most $\lceil\Delta/2\rceil$
 instances of permutation routing, and that
 the permutation routing algorithm proposed in \cite{BA91}
 has an edge-congestion at most $2\cdot\max_i\{\ell_i\}$.
 Theorem~\ref{th:PermutationEmbedding} achieves
 an edge-congestion
 of $2\lceil\Delta/2\rceil\lceil N^{1/d}\rceil$ if
 $\ell_i=\lceil N^{1/d}\rceil$ for each $i$.
It is worth noting that this edge-congestion can slightly be improved 
 if the host grid $H$ is a $d$-dimensional cube.
It is well-known that
 any one-to-one mapping of $2^{d+1}$ inputs to $2^{d+1}$ outputs on
 a $d$-dimensional Bene\v{s} network
 can be routed with the edge-congestion $1$ \cite{Be65}.
We can easily observe that
 mapping the nodes in each row of the Bene\v{s} network
 to each node of $H$
 induces a (many-to-one) embedding with the edge-congestion $4$.
Because
 each node of $H$ has exactly two inputs and two outputs in a row
 of the Bene\v{s} network,
 any pair of instances of permutation routing on $H$
 can be routed with an edge-congestion at most $4$.
At most $\lceil\Delta/2\rceil$
 instances of permutation routing, obtained from
 any one-to-one mapping of nodes of $G$ to nodes of $H$ and
 from edges of $G$, can be grouped into
 $\lceil\lceil\Delta/2\rceil/2\rceil=\lceil\Delta/4\rceil$ pairs of instances
 of permutation routing.
Therefore,
 $G$ can be embedded into
 a $\lceil\log_2 N\rceil$-dimensional cube with an edge-congestion at most
 $4\lceil\Delta/4\rceil$.

Second, we present an embedding algorithm based on separators that
 achieves an edge-congestion as stated in the following theorem:

\begin{theorem}
\label{th:EmbedGeneral}
Suppose that $G$ is a graph with $N$ nodes,
 maximum node degree $\Delta$,
 and with a node-separator of size $O(n^\alpha)$
 ($0\leq\alpha< 1$), and that
 $\GridLabel$ is a grid with a fixed dimension $d\geq 2$,
 at least $N$ nodes, and with constant aspect ratio.
Then,
 $G$ can be embedded into $\GridLabel$ with
 a dilation of $O(dN^{1/d})$, and with
 an edge-congestion of
 $O(\Delta)$ if $d>1/(1-\alpha)$,
 $O(\Delta\log N)$ if $d=1/(1-\alpha)$, and
 $O(\Delta N^{\alpha-1+\frac{1}{d}})$  if $d<1/(1-\alpha)$.
\end{theorem}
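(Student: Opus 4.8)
The plan is to embed $G$ recursively, mirroring the recursive separator decomposition with a geometric decomposition of the host grid $\GridLabel$ into subgrids. First I would convert the node-separator of size $O(n^\alpha)$ into an \emph{edge-separator bounding expansion} of size $O(\Delta n^\alpha)$, using the construction promised in the abstract: roughly, replace each separator node by the bundle of its incident edges, and charge the resulting cut to the $O(\Delta)$ edges at each of the $O(n^\alpha)$ separator nodes. The key invariant this buys is that when the recursion reaches a subgraph $G'$ on $n'$ nodes, the number of edges leaving $G'$ to the rest of $G$ is $O(\Delta (n')^\alpha)$ — this is exactly what keeps the congestion contributions from different recursion levels geometrically controlled rather than accumulating linearly.

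Next I would set up the host-side recursion. Given a target subgrid with $m \geq n'$ nodes and constant aspect ratio, split it along its longest axis into two pieces of sizes matching the two halves produced by the edge-separator of $G'$ (so the node-to-subgrid assignment stays one-to-one and balanced), and recurse. The images of the $O(\Delta (n')^\alpha)$ cut edges at this level must be routed across the cut hyperplane of the current subgrid; I would route each such edge monotonically toward the dividing hyperplane, cross it, and continue — deferring the detailed routing inside each half to the recursive calls, which are responsible for delivering each edge-endpoint to the correct sub-subgrid (this is essentially a permutation-routing subproblem at each level, so Theorem~\ref{th:PermutationEmbedding}, or the simple monotone-routing bound behind it, can be invoked as a black box on each subgrid).

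The congestion analysis is the heart of the argument. A grid edge $e$ near the level-$j$ cut hyperplane of a subgrid with $\approx N/2^j$ nodes carries at most the $O(\Delta (N/2^j)^\alpha)$ cut edges of that level, spread over the $\Omega((N/2^j)^{1-1/d})$ grid edges crossing that hyperplane; summing the contribution $O\!\left(\Delta (N/2^j)^{\alpha - 1 + 1/d}\right)$ over all levels $j = 0, 1, \dots, \log N$ gives a geometric series whose behavior is governed by the sign of the exponent $\alpha - 1 + 1/d$: it is a decreasing geometric series dominated by its first term ($O(\Delta)$) when $d > 1/(1-\alpha)$, a sum of $\Theta(\log N)$ equal terms when $d = 1/(1-\alpha)$, and an increasing geometric series dominated by its last term ($O(\Delta N^{\alpha - 1 + 1/d})$) when $d < 1/(1-\alpha)$ — precisely the three cases in the statement. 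The dilation bound $O(dN^{1/d})$ follows because a guest edge is routed across $O(\log N)$ nested cut hyperplanes, and within each subgrid the monotone detours have length at most the subgrid's perimeter, which forms another geometric series summing to $O(dN^{1/d})$.

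The main obstacle I anticipate is bookkeeping the interaction between the guest recursion and the host recursion when the separator splits $G'$ into two \emph{unequal} halves: the subgrid must be split in the same ratio while keeping its aspect ratio bounded (needed so that the $(N/2^j)^{1-1/d}$ cross-section lower bound stays valid at every level), and one must ensure the $O(\Delta (n')^\alpha)$ cut edges can actually be assigned to distinct grid edges across the (possibly short) dividing hyperplane — this requires that the hyperplane have $\Omega((n')^{1-1/d})$ grid edges, i.e.\ that the subgrid not be too degenerate, which is where the constant-aspect-ratio hypothesis on $\GridLabel$ and careful choice of the splitting axis both get used. A secondary technical point is handling the case $\alpha = 0$ (separators of constant size) and very small subgrids at the bottom of the recursion, where the asymptotic estimates must be replaced by direct bounds; I would dispose of these base cases by brute force before running the main recurrence.
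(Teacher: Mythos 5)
Your overall architecture coincides with the paper's: convert the node-separator into an edge-separator with bounded expansion $O(\Delta n^\alpha)$ (this is the paper's Lemma~\ref{lm:DecompositionTreeExpansion}, and your reason for wanting it -- controlling the external edges of every subgraph in the recursion -- is exactly right), partition the host grid in lockstep with the decomposition tree, and sum per-level congestion contributions, with the three cases of the theorem falling out of the sign of $\alpha-1+\frac 1 d$. The cases $d\leq 1/(1-\alpha)$ would survive a careful write-up of your argument.

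However, there is a genuine gap in the congestion analysis for $d>1/(1-\alpha)$, which is the case where the theorem claims $O(\Delta)$. Your per-level bound $O\bigl(\Delta(N/2^j)^{\alpha-1+\frac 1 d}\bigr)$ is an \emph{average} over the $\Omega((N/2^j)^{1-\frac 1 d})$ edges crossing the level-$j$ hyperplane, and for $d>1/(1-\alpha)$ most of these terms are far below $1$. But congestion is integral: any recursion level whose routing actually touches a fixed grid edge $r$ contributes at least one unit to the congestion of $r$, so the geometric series does not bound the congestion on a worst-case edge -- a single edge used by all $\Theta(\log N)$ levels accumulates congestion $\Omega(\log N)$ no matter how small the fractional averages are. (A second, related omission: your bound only covers the hyperplane-crossing portion of each route; the ``delivery'' portion from a node's image to the hyperplane runs through the interior of the subgrid and is exactly where endpoints of cut edges from \emph{higher} levels also live, so it needs its own load-balancing argument -- the paper's uniform mappings $\psi$ and Lemma~\ref{lm:GeneralRouting}.) The paper resolves the integrality problem by partitioning $E(\GridLabel)$ into $\Theta(\log N)$ edge-disjoint \emph{channels} indexed by an integer $w$ that is a function of the current subgraph size, and showing (Lemmas \ref{lm:P1} and~\ref{lm:P0}) that each channel is reused in only $O(d)$ or $O(1/(1-\alpha-\frac 1 d))$ consecutive recursive calls; without this device, or an equivalent one, your recursion cannot get below $O(\Delta\log N)$ for any $d$.
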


The basic idea of Theorem~\ref{th:EmbedGeneral} is to partition
 the guest graph and the host grid using their edge-separators,
 embed the partitioned guest graphs into the partitioned host grids recursively,
 and to route cut edges of the guest graph on the host grid.
We use Theorem~\ref{th:PermutationEmbedding} to route cut edges
 with a nearly minimum edge-congestion in each recursive step.
However,
 just doing this is not sufficient for our goal.
In fact, we need further techniques
 to suppress the total edge-congestion incurred
 by whole recursive steps from upper to lower levels.
There are two reasons of the insufficiency.

The first reason is that
 recursive steps from upper to lower levels
 may use the same edge of the grid, which yields
 an edge-congestion of $\Omega(\log N)$
 if we minimize the edge-congestion only in each individual recursive step.
This is a crucial barrier to achieve
 an edge-congestion of $O(\Delta)$ for $d> 1/(1-\alpha)$.
To solve this, we divide the edge set of the grid into
 $\Theta(\log N)$ subsets of appropriate size and use each subset only in
 a constant number of recursive steps.

The second and more significant reason is that
 a small subgraph of the guest graph to be embedded in a lower recursive step
 may have nodes incident to quite a large number of edges that have been
 cut in upper levels, which yields a large edge-congestion.
Specifically, if such a subgraph has $n$ nodes and $x$ outgoing edges
 to the other part of the guest graph,
 then because a subgrid into which the subgraph is embedded has
 $O(dn^{1-\frac 1 d})$ outgoing edges,
 the edge-congestion is lower bounded by
 $x/O(dn^{1-\frac 1 d})=\Omega(xn^{\frac 1 d-1}/d)$.
A standard edge-separator aims to minimize the number of edges
 to be cut to partition a graph.
Thus, if we recursively use such edge-separators to partition a graph
 into small pieces,
 then
 although the number of cut edges in each recursive step is bounded,
 the number of outgoing edges from a subgraph to be embedded
 in a lower recursive step may become extremely large compared to
 the number of nodes of the subgraph.
Therefore,
 we introduce edge-separators bounding \emph{expansion}, i.e.,
 the number of outgoing edges from a subgraph in each recursive step, and
 present an algorithm to construct
 an edge-separator with expansion of $O(\Delta n^\alpha)$
 from a node-separator of size $O(n^\alpha)$.
We describe the algorithm for edge-separators with bounded expansion
 in Sect.~\ref{sc:Separators} and prove Theorem~\ref{th:EmbedGeneral}
 in Sect.~\ref{ssc:SeparatorBasedEmbedding}.

Theorem~\ref{th:EmbedGeneral}
 achieves constant ratio approximation
 for a fixed $d>1/(1-\alpha)$
 because any embedding has an edge-congestion at least $\Delta/(2d)$.
If $d\leq 1/(1-\alpha)$, then
 the edge-congestion of Theorem~\ref{th:EmbedGeneral} matches
 an existential lower bound within a constant factor.
The lower bound of $\Omega(\log N)$
 for $d=1/(1-\alpha)=2$ and $\Delta=O(1)$
 is derived from the following fact:
There exists an $N$-node guest graph with constant degree and a node-separator
 of size $O(\sqrt n)$
 whose any embedding into a $2$-dimensional grid with the edge-congestion~$1$
 requires $\Omega(N\log^2 N)$ nodes of the grid \cite{Le84}.\footnote{%
Strictly, this result is proved for the VLSI layout model.
However, we can easily generalize this result to the embedding model considered
 in this paper.
}
This implies that any embedding of the guest graph into a
 $2$-dimensional grid
 with $N$ nodes
 requires an edge-congestion of $\Omega(\log N)$.
This is because we can easily transform
 an embedding into an $N$-node grid with an edge-congestion $c$
 into
 another embedding into an $O(c^2N)$-node grid with the edge-congestion $1$
 by replacing each row and each column of the $N$-node grid
 with $O(c)$ rows and $O(c)$ columns, respectively.\footnote{%
It should be noted that the inverse transformation cannot be
 done in such a simple way.
In fact, we do not know whether or not the inverse transformation
 is always possible.
}
A similar transformation for VLSI layout is described in \cite{Ul84}.

The lower bound of $\Omega(\Delta N^{\alpha-1+\frac 1 d})$
 for $d<1/(1-\alpha)$ can be obtained as follows:
We consider a guest graph $G$ with
 $N$ nodes and
 a node-separator of size $n^\alpha$
 such that each node in a cut set $U\subseteq V(G)$
 with $|U|=N^\alpha$ is adjacent to every other node in $G$.
The graph $G$ obviously has $\Delta=N-1$.
Suppose that we arbitrarily divide $V(G)$ into two subsets of the same size.
Then, at least $(|U|/2)(N-1)/2=\Delta N^\alpha/4$ edges
 join nodes in one of the subsets and nodes in the other subset
 because
 at least half nodes of $U$ are contained in one of the subsets and
 adjacent to all nodes in the other subset.
On the other hand, we can divide a $d$-dimensional
 $N$-node grid
 into two subgrids of the same size by removing
 $O(N^{1-\frac 1 d})$ edges.
Thus, any embedding of $G$ into the grid has an edge-congestion at least
 $(\Delta N^\alpha/4)/O(N^{1-\frac 1 d})=\Omega(\Delta N^{\alpha-1+\frac 1 d})$.

Theorem~\ref{th:EmbedGeneral} has the following applications.
It is well-known that any planar graph has a node-separator
 of size $O(\sqrt{n})$ \cite{LT79}.
This was generalized in \cite{AST90b} so that
 any graph with an excluded minor of a fixed size has a node-separator of
 size $O(\sqrt{n})$.
Therefore, we obtain the following corollary:
\begin{corollary}
\label{cr:EmbedPlanar}
Any graph with $N$ nodes, maximum node degree $\Delta$, and
 with an excluded minor of a fixed size
 can be embedded into a grid of
 a fixed dimension $d$ with at least $N$ nodes
 and constant aspect ratio,
 with an edge-congestion of
 $O(\Delta\log N)$ for $d=2$ and $O(\Delta)$ for $d\geq 3$.
\end{corollary}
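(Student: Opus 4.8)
The plan is to obtain the corollary directly from Theorem~\ref{th:EmbedGeneral} by pinning down the separator exponent~$\alpha$. First I would invoke the classical separator theorems for minor-closed families: by~\cite{LT79} every planar graph, and more generally by~\cite{AST90b} every graph excluding a fixed minor, has a node-separator of size $O(\sqrt n)$. So a guest graph~$G$ of the stated kind meets the hypothesis of Theorem~\ref{th:EmbedGeneral} with $\alpha=1/2$, provided this separator is \emph{recursive} in the sense required there, i.e., the two parts left after deleting the separator again admit node-separators of size $O(\sqrt n)$ (with the same hidden constant at every level).

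The one point I would treat with a sentence of care is exactly this recursiveness. Deleting vertices together with their incident edges from an $H$-minor-free graph yields again an $H$-minor-free graph, so the class is closed under the vertex-deletion operation used in the recursive partition; hence the $O(\sqrt n)$ bound of~\cite{AST90b} propagates unchanged down every level, and the recursive node-separator size of~$G$ is $O(n^{1/2})$. (Maximum degree can only drop under vertex deletion, so the degree bound~$\Delta$ is likewise inherited by all the subgraphs, which is all Theorem~\ref{th:EmbedGeneral} needs.)

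With $\alpha=1/2$ we have $1/(1-\alpha)=2$, and it then suffices to read off the three cases of Theorem~\ref{th:EmbedGeneral} for a fixed dimension $d\geq 2$ and a host grid with at least $N$ nodes and constant aspect ratio: for $d=2$ we are in the boundary case $d=1/(1-\alpha)$, which gives edge-congestion $O(\Delta\log N)$; for every fixed $d\geq 3$ we are in the case $d>1/(1-\alpha)$, which gives edge-congestion $O(\Delta)$. This is precisely the statement of Corollary~\ref{cr:EmbedPlanar}.

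There is essentially no substantive obstacle beyond the bookkeeping above; the only thing one must not overlook is that Theorem~\ref{th:EmbedGeneral} requires a recursive separator rather than merely a single-level one, which is why the hereditary nature of minor-freeness is the real (if easy) crux of the argument. Note also that the corollary asserts nothing about dilation, so the bound $O(dN^{1/d})$ supplied by Theorem~\ref{th:EmbedGeneral} is available but not needed.
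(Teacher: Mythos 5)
Your proposal matches the paper's argument exactly: invoke the separator theorems of \cite{LT79} and \cite{AST90b} to get $\alpha=1/2$, then read off the cases $d=1/(1-\alpha)=2$ and $d>1/(1-\alpha)$ of Theorem~\ref{th:EmbedGeneral}. Your extra remark that recursiveness of the separator follows from minor-freeness being hereditary under vertex deletion is a correct point the paper leaves implicit.
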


Graphs with a fixed treewidth,
 such as trees, outerplanar graphs, and series-parallel graphs
 have a node-separator of a fixed size \cite{Kl94}.
Therefore, we obtain the following corollary:
\begin{corollary}
\label{cr:EmbedSeriesParallel}
Any graph
 with $N$ nodes,
 maximum node degree $\Delta$,
 and with a fixed treewidth
 can be embedded into a grid of a fixed dimension at least $2$
 with at least $N$ nodes and constant aspect ratio,
 with an edge-congestion of $O(\Delta)$.
\end{corollary}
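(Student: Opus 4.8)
The plan is to obtain Corollary~\ref{cr:EmbedSeriesParallel} as the $\alpha=0$ specialization of Theorem~\ref{th:EmbedGeneral}. First I would invoke the standard fact \cite{Kl94} that a graph of treewidth at most $t$ has a balanced node-separator of size at most $t+1$ (a single bag of a tree decomposition), together with the observation that each of the two parts obtained by deleting such a separator again has treewidth at most $t$. Iterating this, a guest graph $G$ of fixed treewidth has a recursive node-separator of size $O(1)=O(n^0)$, so in the terminology of Theorem~\ref{th:EmbedGeneral} we may take $\alpha=0$.

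Next I would simply apply Theorem~\ref{th:EmbedGeneral}. With $\alpha=0$ we have $1/(1-\alpha)=1$, and since the target dimension $d$ is a fixed integer with $d\geq 2$, we are always in the regime $d>1/(1-\alpha)$. Hence the theorem embeds $G$, with its $N$ nodes and maximum degree $\Delta$, into any $d$-dimensional grid with at least $N$ nodes and constant aspect ratio, with dilation $O(dN^{1/d})$ and edge-congestion $O(\Delta)$, which is exactly the claimed bound.

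The only point requiring a little care is confirming that the hypotheses of Theorem~\ref{th:EmbedGeneral} are genuinely met --- in particular that bounded-treewidth graphs admit a \emph{recursive} separator of size $O(n^\alpha)$, not merely a single balanced cut. This is immediate, since treewidth does not increase when passing to an induced subgraph, so the same constant bound holds at every level of the recursion; there is no real obstacle here beyond unwinding the recursive definition of a separator. All quantitative content of the corollary then comes directly from Theorem~\ref{th:EmbedGeneral}.
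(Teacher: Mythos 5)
Your proposal is correct and follows exactly the route the paper takes: fixed treewidth gives a recursive node-separator of constant size \cite{Kl94}, so $\alpha=0$, and any fixed $d\geq 2$ satisfies $d>1/(1-\alpha)=1$, placing us in the $O(\Delta)$ regime of Theorem~\ref{th:EmbedGeneral}. Your added remark that treewidth is hereditary under induced subgraphs, so the separator really is recursive, is a correct and worthwhile detail that the paper leaves implicit.
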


Our separator-based embedding algorithm performs in a polynomial time
 on the condition that a separator of the guest graph is given.
Although finding a separator of minimum size is generally
 NP-hard~\cite{GJS76,BJ92},
 approximation algorithms presented in \cite{LR99,FHL08,ARV09}
 can be applied to our algorithm.

All our embedding algorithms yield a dilation of order of the diameter
 of the host grid.
Although such a dilation is trivial when only the dilation is minimized,
 this is not the case when edge-congestion is minimized.
As we will demonstrate in Sect.~\ref{sc:Dilation}, in fact, 
 there exists an $N$-node guest graph
 whose any embedding
 with the edge-congestion $1$
 into an $N$-node
 $2$-dimensional grid requires a dilation of $\Theta(N)$,
 far from the diameter $\Theta(\sqrt N)$.
We do not know whether or not we can always achieve
 both a dilation of the host grid's diameter
 (even with a multiplicative constant factor)
 and constant ratio approximation for edge-congestion.
This is negative if the host graph is general.
As an example, suppose that $H$ is the host graph obtained from a
 complete binary tree with $N$ leaves
 by adding edges so that
 the leaves induce a $\sqrt N\times\sqrt N$-grid.
To be precise, 
 the $N/2^i$ leaves of a subtree rooted by a node
 at an even distance $i$ to the root
 induce a $\sqrt{N/2^i}\times\sqrt{N/2^i}$-subgrid.
If the guest graph $G$ is an $N$-node complete graph,
 then any embedding of $G$ into $H$ with a dilation of the diameter
 $O(\log N)$ of $H$ has an edge-congestion of $\Omega(N^2)$
 because $\Omega(N^2)$ edges of $G$ must be routed through
 a single node of the tree part in $H$ to achieve such a dilation, while
 $G$ can be embedded into the grid part in $H$ with a dilation of $O(\sqrt N)$
 and an edge-congestion of $O(N^{3/2})$ using a simple row-column routing.

\section{Preliminaries}
\label{sc:Preliminaries}
For a graph $G$, $V(G)$ and $E(G)$ are the node set and edge set
 of $G$, respectively.
We denote the set of integers $\{i\mid 1\leq i\leq\ell\}$ by $[\ell]$.
For a $d$-dimensional vector $v:=(x_i)_{i\in [d]}$,
 let $\pi_j(v):=x_j$ and $\bpi_j(v):=(x_i)_{i\in [d]\setminus\{j\}}$
 for $j\in [d]$.
We use $\pi_j$ and $\bpi_j$ also for a set of vectors and for a graph
 whose nodes are vectors.
I.e.,
 for a set $V$ of $d$-dimensional vectors,
 we denote
 $\{\pi_j(v)\mid v\in V\}$ and $\{\bpi_j(v)\mid v\in V\}$
 as $\pi_j(V)$ and $\bpi_j(V)$, respectively.
Moreover, for a graph $G$ with $V(G)=V$,
 we denote
 the graph with the node set
 $\bpi_j(V(G))$ and edge multiset
 $\{(\bpi_j(u),\bpi_j(v))\mid (u,v)\in E(G)\}$
 as $\bpi_j(G)$.
For positive integers $\ell_1,\ldots,\ell_d$,
 the \emph{$d$-dimensional $\ell_1\times\cdots\times \ell_d$-grid},
 denoted as $\Grid{\ell_i}_{i\in [d]}$,
 is a graph
 with the node set $\prod_{i\in [d]}[\ell_i]$, i.e.,
 the Cartesian product of sets $[l_1],\ldots,[l_d]$,
 and edge set
 $\{(u,v)\mid\exists j\in [d]\ \pi_j(u)=\pi_j(v)\pm 1, \bpi_j(u)=\bpi_j(v)\}$.
The \emph{aspect ratio} of $\Grid{\ell_i}_{i\in [d]}$ is
 $\max_{i,j\in [d]}\{\ell_j/\ell_i\}$.
An edge $(u,v)$ of $\Grid{\ell_i}_{i\in [d]}$ with
 $\pi_j(u)=\pi_j(v)\pm 1$
 is called a \emph{dimension-$j$ edge}.
The grid $\Grid{\ell_i}_{i\in[d]}$ is called the \emph{$d$-dimensional cube}
 if $\ell_i=2$ for every $i\in [d]$.

A \emph{routing request} on a graph $H$ is a pair of nodes,
  a \emph{source} and \emph{target}, of $H$.
A multiset of routing requests can be represented as
 a \emph{routing graph} $R$ with the node set $V(H)$ and directed edges
 joining the sources and targets of all the routing requests.
It should be noted that $R$ may have parallel edges and loops.
In particular, if $H$ is a $d$-dimensional grid,
 then
 $\bpi_j(R)$ is a routing graph
 with the multiset of edges $(\bpi_j(u),\bpi_j(v))$ for every $(u,v)\in E(R)$
 on the $(d-1)$-dimensional grid with
 node set $\bpi_j(V(H))$ (Fig~\ref{fig:RoutingGraph}).
$R$ is called a \emph{$p$-$q$ routing graph}
 if the maximum outdegree and indegree of $R$
 are at most $p$ and $q$, respectively.
A $1$-$1$ routing graph is also called a
 \emph{permutation routing graph}.
We define a \emph{routing} of $R$ as
 a mapping $\rho$ that maps each edge $(u,v)\in E(R)$ onto
 a set of edges of $H$ inducing a path connecting $u$ and $v$.
We denote $\rho((u,v))$ simply as $\rho(u,v)$.
The \emph{dilation} and \emph{edge-congestion} of
 $\rho$ are $\max_{e\in E(R)}|\rho(e)|$ and
$\max_{e'\in E(H)}|\{e\in E(R)\mid e'\in\rho(e)\}|$, respectively.

\begin{figure}
\begin{center}
\includegraphics[scale=1.0]{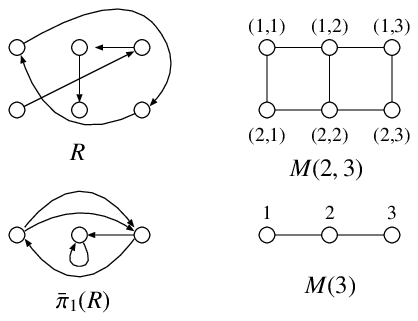}
\end{center}
\caption{A routing graph $R$ on $\Grid{2,3}$ and
 $\bpi_1(R)$ on $\Grid{3}$.}
\label{fig:RoutingGraph}
\end{figure}

An \emph{embedding} $\langle\phi,\rho\rangle$
 of a graph $G$ into a graph $H$ is a pair of mappings
 consisting of a one-to-one mapping $\phi:V(G)\rightarrow V(H)$
 and a routing $\rho$ of an arbitrary orientation of
 the graph with the node set $V(H)$ and
 edge set $\{(\phi(u),\phi(v))\mid (u,v)\in E(G)\}$.
The \emph{dilation} and \emph{edge-congestion} of
 the embedding $\langle\phi,\rho\rangle$ are defined as
 the dilation and edge-congestion of $\rho$, respectively.

\section{Edge-Separators with Bounded Expansion}
\label{sc:Separators}

The (recursive) node- and edge-separators are formally defined as follows:
Let $1/2\leq\beta<1$ and
 $s(n)$ be a non-decreasing function.
A graph $G$ has a \emph{$\beta$-node(edge, resp.)-separator
 of size $s(n)$} if $|V(G)|=1$, or
 if $G$ can be partitioned into two subgraphs with at most
 $\beta|V(G)|$ nodes
($\lceil\beta|V(G)|\rceil$ nodes, resp.)
 and with no edges connecting the subgraphs
 by removing at most $s(|V(G)|)$ nodes (edges, resp.),
 and the subgraphs recursively have a $\beta$-node(edge, resp.)-separator
 of size $s(n)$.
The process of partitioning $G$ into isolated nodes
 using the edge-separator repeatedly
 is often
 referred to 
 as a \emph{decomposition tree}.
The decomposition tree $\mathcal T$ 
 is a rooted tree having a set of subgraphs of $G$ as its node set
 $V(\mathcal T)$
 such that
 the root of $\mathcal T$ is $G$,
 each non-leaf node $H\in V(\mathcal T)$ has exactly two children
 obtained from $H$ by removing the edge-separator of $H$,
 and that
 each leaf node of $\mathcal T$ consists of a single node of $G$.
We call $\mathcal T$ a \emph{$\beta$-decomposition tree with expansion $x(n)$}
 if it can be constructed using a $\beta$-edge-separator, and
 for each $H\in V(\mathcal T)$, at most $x(|V(H)|)$ edges
 (called \emph{external edges} of $H$ in this paper) connect
 $V(H)$ and $V(G)\setminus V(H)$ (Fig.~\ref{fig:DecompositionTree}).

\begin{figure}
\begin{center}
\includegraphics[scale=1.0]{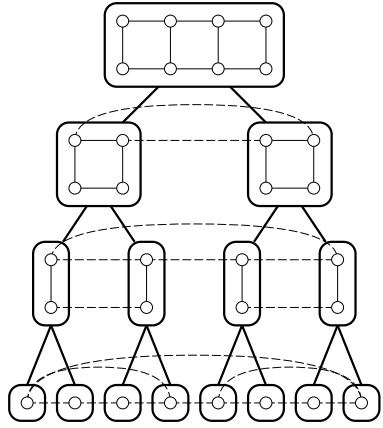}
\end{center}
\caption{%
A $(1/2)$-decomposition tree for $\Grid{2,4}$ with expansion $3$.
Dashed lines represent external edges of nodes of the decomposition tree.
}
\label{fig:DecompositionTree}
\end{figure}

A decomposition tree with reasonably small expansion
 can be obtained from
 a node-separator as stated in the following lemma:
\begin{lemma}
\label{lm:DecompositionTreeExpansion}
Any graph $G$ with maximum node degree $\Delta$ and a $\beta$-node-separator
 of size $Cn^\alpha$ ($C>0$, $0\leq\alpha<1$, $1/2\leq\beta<1$) has a
 $\frac\beta{1-\epsilon}$-decomposition tree with expansion
 $O(C\Delta n^\alpha/\epsilon)$, where $0<\epsilon<1-\beta$.
\end{lemma}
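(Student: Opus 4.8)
The plan is to build the decomposition tree $\mathcal T$ top-down, converting the given node-separator into an edge-separator at every step. At a subgraph $H$ with $m:=|V(H)|$ nodes, take a $\beta$-node-separator $S$ of $H$ with $|S|\le Cm^\alpha$, so that $V(H)\setminus S$ splits into parts $A,B$ of size at most $\beta m$ each with no $A$--$B$ edge; attach $S$ to the smaller part, say $A$, and delete the edges joining $S$ and $B$. This removes at most $\Delta|S|\le C\Delta m^\alpha$ edges (there are no $A$--$B$ edges) and leaves exactly the two pieces $G[A\cup S]$ and $G[B]$, which become the children of $H$; recursing down to single nodes produces $\mathcal T$.

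I would first check that $\mathcal T$ is a $\frac{\beta}{1-\epsilon}$-decomposition tree. The larger child has at most $\beta m+Cm^\alpha$ nodes, and $\beta m+Cm^\alpha\le\frac{\beta}{1-\epsilon}m$ precisely when $m\ge n_0:=\lceil(C(1-\epsilon)/(\beta\epsilon))^{1/(1-\alpha)}\rceil$, a constant depending only on $C,\alpha,\beta,\epsilon$. So the rule above is $\frac{\beta}{1-\epsilon}$-balanced for $m\ge n_0$, and for $m<n_0$ I would finish by any $\frac{\beta}{1-\epsilon}$-balanced edge cut (e.g.\ isolating $\lceil m/2\rceil$ nodes). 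No separate expansion bound is needed for these small subgraphs: whenever $|V(H)|\le O(n_0)$ the trivial estimate $\mathrm{ext}(H)\le\Delta|V(H)|$ already yields $\mathrm{ext}(H)=O(C\Delta|V(H)|^\alpha/\epsilon)$, using $|V(H)|^{1-\alpha}\le n_0^{1-\alpha}=O(C/\epsilon)$, where $\mathrm{ext}(H)$ denotes the number of external edges of $H$.

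It remains to bound $\mathrm{ext}(H)$ for $|V(H)|>c\,n_0$; for such $H$ every proper ancestor in $\mathcal T$ is also above the threshold and was split by the rule above. When $H$ is split, the external edges of a child are those external edges of $H$ on its side together with the at most $\Delta|S|$ newly cut edges. The obstacle is that an arbitrary separator may send \emph{all} of $\mathrm{ext}(H)$ to one child, so naive unrolling gives only $O(C\Delta|V(G)|^\alpha)$ — a bound in the size of the whole graph, not of $H$ — with an extra $\log|V(G)|$ factor when $\alpha$ is small. I would avoid this by choosing at each step a separator that is balanced not only in node count but also with respect to the current distribution of external edges over the nodes, to within any fixed ratio close to $1/2$; a graph with a recursive $O(n^\alpha)$-node-separator admits such a simultaneously weight-balanced separator of the same order $O(m^\alpha)$, whose induced edge cut is still $O(\Delta m^\alpha)$. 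Then every child $H'$ satisfies $\mathrm{ext}(H')\le\lambda\,\mathrm{ext}(H)+O(C\Delta m^\alpha)$ for a fixed $\lambda<1$, and since no nodes are lost in an edge decomposition and the split is nearly balanced, both children have $\Theta(m)$ nodes, so the $\alpha$-th powers of the parent's and a child's sizes are within a constant factor; the geometric recurrence then unrolls, down to level $n_0$, to $\mathrm{ext}(H)=O(C\Delta m^\alpha/\epsilon)$, the $1/\epsilon$ recording the slack between $\beta$ and $\frac{\beta}{1-\epsilon}$ that fixes both $n_0$ and how balanced a step can be made.

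The hard part is precisely this last point: stopping external edges cut high in the tree from accumulating over the $\Theta(\log|V(G)|)$ levels, so that the bound scales with the \emph{current} subgraph size rather than with $|V(G)|$. The two ingredients that make the recurrence contract — a separator that also approximately bisects the external-edge weight, and the constant-factor lower bound on the sizes of the children — must be obtained while keeping the separator, and hence the induced edge cut, of size $O(\Delta m^\alpha)$ with a usable balance ratio; supplying the weighted version of the separator theorem with these quantitative guarantees from the hypothesis is the technical core of the proof.
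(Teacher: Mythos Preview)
Your diagnosis of the central difficulty is right, and reaching for a weight-balanced separator is a natural reflex, but there is a gap in the recursion. Once you merge the separator $S$ into one side and form the child $H:=G[A\cup S]$, you then write ``take a $\beta$-node-separator of $H$'' at the next step. But the hypothesis is a \emph{recursive} $\beta$-node-separator of $G$: it guarantees the property only for the pieces $G[A]$ and $G[B]$ obtained after \emph{removing} $S$, not for $G[A\cup S]$; the definition in the paper is not hereditary to arbitrary induced subgraphs. So already at depth two your weighted-separator lemma has no hypothesis to stand on. The obvious repair---separate only $G[A]$ and carry the accumulated separator vertices along on the side---is exactly the device the paper uses, and once you do that the weighted-separator detour becomes unnecessary.

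The paper's argument is different and avoids weighted separators entirely. At each node $H$ of $\mathcal T$ it tracks the set $X\subseteq V(H)$ of all separator vertices inherited from ancestors; by construction every external edge of $H$ touches $X$, so $\mathrm{ext}(H)\le\Delta|X|$. The separator hypothesis is applied only to $H':=H-X$, which \emph{is} a piece of the original node-decomposition and hence has a separator $S'$ of size $\le C|V(H')|^\alpha$. The children are formed from the two pieces $H'_1,H'_2$ of $H'-S'$ by splitting $X\cup S'$ between them in the \emph{same proportion} $\beta':(1-\beta')$ as $|V(H'_1)|:|V(H'_2)|$. This single move both makes the children $\frac{\beta}{1-\epsilon}$-balanced in node count and yields the recurrence $|X_{\text{child}}|\le\beta'(|X|+|S'|)+1$, which unrolls along the path to the root to $|X|=O(Cm^\alpha/\epsilon)$ by a geometric sum---no weight-balancing of the separator is needed, because proportionally distributing the separator \emph{vertices} already controls the external \emph{edges} via the factor $\Delta$.
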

\begin{proof}
We present an algorithm constructing a desired decomposition tree $\mathcal T$.
We initially set $G$ as the root of $\mathcal T$ and
 construct $\mathcal T$ from the root toward leaves.
Assume that we have constructed $\mathcal T$ up to depth
 (distance to the root) $i-1\geq 0$.
For a subgraph $H$ of $G$ at depth $i-1$ in $\mathcal T$, we construct
 children $H_1$ and $H_2$ of $H$ as follows:
\begin{enumerate}
\item
We inductively assume the following:
\begin{enumerate}
\item
Each node of $\mathcal T$ up to depth $i-1$ has been constructed
 by partitioning a subgraph of $G$ using a $\beta$-node-separator
 and distributing the node-separator between the partitioned graphs.
Let $X_{i-1}$ be the set of nodes of $H$ contained in the node-separator
 used for any ancestor of $H$ in~$\mathcal T$.
\item
All the external edges of $H$ are incident to nodes in $X_{i-1}$.
\item
The graph $H'$ obtained from $H$ by removing $X_{i-1}$
 has a $\beta$-node-separator $S_i\subseteq V(H')$ of size $Cn^\alpha$.
\end{enumerate}
It should be noted that $X_0=\emptyset$, and therefore, these assumptions
 hold if $H=G$.
\item
\label{it:DecompositionTreeSeparation}
If $C|V(H')|^\alpha\leq\epsilon|V(H')|$, then we partition $H'$ into subgraphs
 $H'_1$ and $H'_2$ using the node-separator $S_i$
 with $|S_i|\leq C|V(H')|^\alpha$.
It follows that
$
|V(H'_1)|+|V(H'_2)|=|V(H')|-|S_i|\geq (1-\epsilon)|V(H')|
\geq\frac{1-\epsilon}\beta|V(H'_1)|.
$
Assume without loss of generality that $|V(H'_1)|\geq |V(H'_2)|$.
Then, there exists $1/2\leq\beta'\leq\frac\beta{1-\epsilon}$ with
 $|V(H'_1)|=\beta'(|V(H'_1)|+|V(H'_2)|)$.
\item
\label{it:DecompositionTreeNoSeparation}
If $C|V(H')|^\alpha>\epsilon|V(H')|$, then reset $S_i:=V(H')$, and
 arbitrarily choose
 $1/2\leq\beta'\leq\frac\beta{1-\epsilon}$.
\item
Partition $X_{i-1}\cup S_i$ into two disjoint sets $Y_1$ and $Y_2$ such that
 $|Y_1|=\lceil\beta'(|X_{i-1}|+|S_i|)\rceil$ and
 $|Y_2|=\lfloor(1-\beta')(|X_{i-1}|+|S_i|)\rfloor$.
\item
Let $H_j$ be the subgraph of $H$ induced by $V(H'_j)\cup Y_j$ for $j=1,2$.
We illustrate the construction in Fig.~\ref{fig:HPartition}.
\end{enumerate}

\begin{figure}
\begin{center}
\includegraphics[scale=1.0]{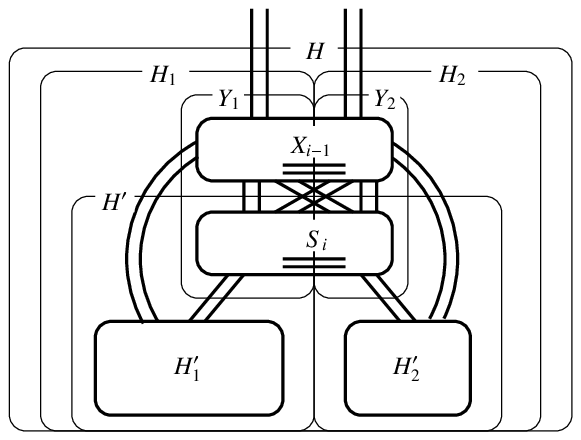}
\end{center}
\caption{Partition of $H$ into $H_1$ and $H_2$.
}
\label{fig:HPartition}
\end{figure}

We first observe that $H_1$ and $H_2$ satisfy the inductive assumptions of
 the algorithm.
For $j\in\{1,2\}$, by inductive assumption,
 $Y_j$ is the set of nodes of $H_j$ contained in
 the node-separator used for an ancestor of $H_j$.
As shown in Fig.~\ref{fig:HPartition}, all the external edges
 of $H_j$ are incident to nodes of $Y_j$.
Moreover, the subgraph of $H_j$ obtained by removing $Y_j$ is $H'_j$,
 which is the subgraph of $H'$ partitioned by the node-separator $S_i$ of~$H'$.
Therefore, $H'_j$ has a $\beta$-node-separator of size $Cn^\alpha$.

We then estimate the numbers of nodes of $H_1$ and $H_2$.
By definition, it follows that
\begin{align}
\label{eq:H1}
\begin{split}
|V(H_1)|&=|V(H'_1)|+|Y_1|
=\beta'(|V(H'_1)|+|V(H'_2)|)+\lceil\beta'(|X_{i-1}|+|S_i|)\rceil\\
&=\lceil\beta'|V(H)|\rceil,\ \text{and}
\end{split}\\
\label{eq:H2}
\begin{split}
|V(H_2)|&=|V(H'_2)|+|Y_2|
=(1-\beta')(|V(H'_1)|+|V(H'_2)|)+\lfloor(1-\beta')(|X_{i-1}|+|S_i|)\rfloor\\
&=\lfloor(1-\beta')|V(H)|\rfloor\leq\beta'|V(H)|.
\end{split}
\end{align}
These imply
 that the algorithm constructs $\mathcal T$ as
 a $\beta'$-decomposition tree of $G$.

We finally prove that for $j=1,2$, $H_j$ has
 $O(C\Delta|V(H_j)|^\alpha/\epsilon)$ external edges, implying
 expansion $O(C\Delta n^\alpha/\epsilon)$ of $\mathcal T$.
We prove this only for $H_1$ because the proof for $H_2$ is obtained with
 a similar argument.
Because all the external edges of $H_1$ are incident to $Y_1$,
 it suffices to show that $|Y_1|=O(C|V(H_1)|^\alpha/\epsilon)$.
When we construct children of $H_1$ using the algorithm, $X_i$ is set to $Y_1$.
Let
 $n_i:=|V(H_1)|$ and
 $n_j$ ($0\leq j<i$) be the number of nodes of the ancestor of
 $H_1$ at depth $j$ in $\mathcal T$.
Moreover, let $\beta_j$ ($1\leq j\leq i$) be $\beta'$ or $1-\beta'$ defined
 in Step \ref{it:DecompositionTreeSeparation}
 or \ref{it:DecompositionTreeNoSeparation}
 in partitioning the ancestor at depth $j-1$.
This implies that
 $n_j=\lceil\beta_jn_{j-1}\rceil$ as in (\ref{eq:H1}) or
 $n_j=\lfloor\beta_jn_{j-1}\rfloor$ as in (\ref{eq:H2}).
Therefore,
\begin{align}
\label{eq:nj_upperbound}
n_j
&\leq\lceil\beta_jn_{j-1}\rceil
\leq\beta_jn_{j-1}+1
\leq n_0\prod_{h=1}^j\beta_h+\sum_{\ell=1}^j\prod_{h=\ell+1}^{j}\beta_h
 =n_0\prod_{h=1}^j\beta_h+O(1),\ \text{and}\\
\label{eq:nj_lowerbound}
n_j&\geq\lfloor\beta_jn_{j-1}\rfloor\geq\beta_jn_{j-1}-1
\geq n_0\prod_{h=1}^j\beta_h-\sum_{\ell=1}^j\prod_{h=\ell+1}^{j}\beta_h
 =n_0\prod_{h=1}^j\beta_h-O(1).
\end{align}
Here, we have used the fact that
 $\sum_{\ell=1}^j\prod_{h=\ell+1}^{j}\beta_h
\leq\sum_{\ell=1}^j(\frac\beta{1-\epsilon})^{j-\ell}=O(1)$.
By the definition of $Y_1$,
 we have the following recurrence of $|X_i|$:
\[
|X_i|=|Y_1|=\lceil\beta_i(|X_{i-1}|+|S_i|)\rceil\leq\beta_i(|X_{i-1}|+|S_i|)+1
\leq\sum_{j=1}^i|S_j|\prod_{h=j}^i\beta_h+\sum_{j=1}^i\prod_{h=j+1}^{i}\beta_h.
\]
The number $|S_j|$ is less than $Cn_{j-1}^\alpha/\epsilon$ because
 $|S_j|\leq C|V(H')|^\alpha\leq Cn_{j-1}^\alpha<Cn_{j-1}^\alpha/\epsilon$
 if $S_j$ is defined in Step~\ref{it:DecompositionTreeSeparation}, and
 $|S_j|=|V(H')|<C|V(H')|^\alpha/\epsilon\leq Cn_{j-1}^\alpha/\epsilon$
 if $S_j$ is defined in Step~\ref{it:DecompositionTreeNoSeparation}.
Moreover, 
 $\sum_{j=1}^i\prod_{h=j+1}^{i}\beta_h=O(1)$ as estimated for
 (\ref{eq:nj_upperbound}) and~(\ref{eq:nj_lowerbound}).
Therefore, 
\[
\begin{split}
|Y_1|&<\sum_{j=1}^i\frac{C n_{j-1}^\alpha}\epsilon\prod_{h=j}^i\beta_h+O(1)
\leq\sum_{j=1}^i\frac C \epsilon\left(n_0\prod_{h=1}^{j-1}\beta_h
 +O(1)\right)^\alpha\prod_{h=j}^i\beta_h+O(1)
\quad\text{[by (\ref{eq:nj_upperbound})]}\\
&=O\left(\frac{Cn_0^\alpha}\epsilon\sum_{j=1}^i\prod_{h=1}^{j-1}\beta_h^\alpha
 \cdot\prod_{h=j}^i\beta_h\right)
=O\left(\frac{Cn_0^\alpha}\epsilon\prod_{h=1}^i\beta_h^\alpha\cdot\sum_{j=1}^i
 \prod_{h=j}^i\beta_h^{1-\alpha}\right)\\
&=O\left(\frac{C}\epsilon\left(n_0\prod_{h=1}^i\beta_h\right)^\alpha\cdot
\sum_{j=1}^i\left(\frac\beta{1-\epsilon}\right)^{(1-\alpha)(i-j+1)}\right)\\
&=O\left(\frac{C}\epsilon(n_i+O(1))^\alpha\cdot O(1)\right)
\quad\text{[by (\ref{eq:nj_lowerbound})]}\\
&=O\left(\frac{Cn_i^\alpha}\epsilon\right).\\
\end{split}
\]

Therefore, $\mathcal T$ is a desired decomposition tree.
\end{proof}

\section{Embedding Algorithm}
\label{sc:Embedding}
In this section, we first prove Theorem~\ref{th:PermutationEmbedding} by
 estimating the edge-congestion of
 the previously known permutation routing algorithm
 on multidimensional grids
 presented in~\cite{BA91}.
We then
 provide an embedding algorithm
 based on edge-separators with bounded expansion
 as well as the permutation routing algorithm.
Combining this algorithm with Lemma~\ref{lm:DecompositionTreeExpansion},
 we prove Theorem~\ref{th:EmbedGeneral}.
\subsection{Permutation Routing and Embedding}
\label{ssc:PermutationRoutingEmbedding}
Any permutation routing can be used to construct a graph embedding as follows:
\begin{lemma}
\label{lm:PermutationToEmbedding}
If any $1$-$1$ routing graph on a host graph $H$ can be routed with
 an edge-congestion at most $c$,
 then any graph with maximum node degree $\Delta$ can be embedded into
 $H$ with an edge-congestion at most $c\lceil\Delta/2\rceil$.
\end{lemma}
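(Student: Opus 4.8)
The plan is to reduce the embedding to a small number of independent permutation-routing instances. First I would fix an arbitrary one-to-one map $\phi:V(G)\to V(H)$; a graph embedding is free to choose this, and once it is fixed it only remains to route the edge set $\{(\phi(u),\phi(v))\mid(u,v)\in E(G)\}$ on $H$ with small edge-congestion.

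The combinatorial heart of the argument is the classical fact that any graph $G$ of maximum degree $\Delta$ admits an edge-partition $E(G)=E_1\cup\cdots\cup E_k$ with $k=\lceil\Delta/2\rceil$ such that each subgraph $G_i:=(V(G),E_i)$ has maximum degree at most $2$. I would derive this by first enlarging $G$ to a $2k$-regular multigraph $\widehat G$ — for instance, take two disjoint copies of $G$ and, for every vertex $v$, join the two copies of $v$ by $2k-\deg_G(v)$ parallel edges, which makes every vertex have degree exactly $2k$ — and then applying Petersen's $2$-factorization theorem, which decomposes any $2k$-regular multigraph into $k$ spanning $2$-regular subgraphs. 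Restricting these $2$-factors to the edges of one of the two copies of $G$ yields subgraphs $G_1,\dots,G_k$ with the stated properties.

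Each $G_i$, having maximum degree at most $2$, is a disjoint union of paths and cycles, so orienting every path from one endpoint to the other and every cycle cyclically makes each vertex have out-degree at most $1$ and in-degree at most $1$. Transporting this orientation through $\phi$ gives, for each $i$, a routing graph $R_i$ on $V(H)$ of maximum out-degree and in-degree at most $1$ — injectivity of $\phi$ is what preserves these bounds, and since $G$ has no loops neither does $R_i$ — i.e.\ $R_i$ is a $1$-$1$ routing graph. By hypothesis each $R_i$ admits a routing $\rho_i$ on $H$ with edge-congestion at most $c$. Taking $\rho:=\rho_1\cup\cdots\cup\rho_k$, and using that the $E_i$ partition $E(G)$, I get a legitimate routing of all of $\{(\phi(u),\phi(v))\}$ in which every edge of $H$ carries at most $\sum_{i=1}^{k}c=c\lceil\Delta/2\rceil$ guest edges. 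Hence $\langle\phi,\rho\rangle$ is an embedding of $G$ into $H$ with edge-congestion at most $c\lceil\Delta/2\rceil$.

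The only non-routine ingredient is the decomposition into exactly $\lceil\Delta/2\rceil$ subgraphs of maximum degree $2$; I expect the padding step that realizes $G$ as a subgraph of a $2k$-regular multigraph (so that Petersen's theorem applies cleanly) to be the point requiring care. Everything afterward — orienting each degree-$2$ piece, pushing it along $\phi$, and summing congestions over the $\lceil\Delta/2\rceil$ pieces — is straightforward bookkeeping.
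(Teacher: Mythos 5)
Your proposal is correct and follows essentially the same strategy as the paper: fix an arbitrary one-to-one node map and split the induced edge set into $\lceil\Delta/2\rceil$ permutation-routing instances, each routed with congestion at most $c$. The only difference is cosmetic --- you obtain the decomposition by padding to a $2\lceil\Delta/2\rceil$-regular multigraph and invoking Petersen's $2$-factorization, whereas the paper inlines the standard proof of that very theorem (an Euler-circuit orientation giving in- and out-degree at most $\lceil\Delta/2\rceil$, followed by a bipartite edge-coloring into $1$-$1$ routing graphs).
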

\begin{proof}
Let $G$ be a graph with maximum node degree $\Delta$
 to be embedded into $H$.
We arbitrarily choose
 a one-to-one mapping $\phi:V(G)\rightarrow V(H)$.
Let $G'$ be the graph with node set $\phi(V(G))$ and edge
 set $\{(\phi(u),\phi(v))\mid (u,v)\in E(G)\}$.
Because $G'$ is an undirected graph with maximum node degree $\Delta$,
 there is an orientation $R$ of $G'$ 
 whose maximum indegree and outdegree are both
 at most $\lceil\Delta/2\rceil$.
Such an orientation can be obtained
 by adding dummy edges joining nodes with odd degree so that
 the resulting graph has an Euler circuit,
 and by orienting edges along with the Euler circuit.
It suffices to prove that
 $R$ as a routing graph on $H$ can be routed with an edge-congestion
 at most $c\lceil\Delta/2\rceil$.

We decompose $R$ into at most
 $\lceil\Delta/2\rceil$ edge-disjoint $1$-$1$ routing graphs
 each of which has nodes $V(R)$ and edges with the same color
 in an edge-coloring of $R$ such that
 no two edges with the same
 sources or with the same targets have the same color.
Such coloring can be obtained by
 edge-coloring
 the bipartite graph consisting of the source and target sets of $R$, i.e.,
 two copies $V^+$ and $V^-$ of $V(R)$,
 and
 edges joining $u\in V^+$ and $v\in V^-$ for all $(u,v)\in E(R)$.
It should be noted that the resulting bipartite graph has node-degree
 at most $\lceil\Delta/2\rceil$, and hence,
 $\lceil\Delta/2\rceil$ colors are enough for the coloring.
Therefore, $R$ can be routed on $H$ with an edge-congestion
 at most $c\lceil\Delta/2\rceil$
 if each of the $1$-$1$ routing graphs
 can be routed with an edge-congestion at most $c$.
\end{proof}

The algorithm of \cite{BA91} routes
 a $1$-$1$ routing graph $R$ on $\GridLabel:=\Grid{\ell_i}_{i\in [d]}$
 as follows:
\begin{enumerate}
\item
\label{it:ColoringRoutingGraph}
Color edges of $R$ using at most $\ell_1$ colors so that
 when we identify edges in $R$ with corresponding edges in $\bpi_1(R)$,
 no two edges with the same sources or with the same targets in $\bpi_1(R)$
 have the same color.
This coloring can be obtained
 as done in the proof of Lemma~\ref{lm:PermutationToEmbedding}.
It should be noted that $\bpi_1(R)$ is a $\ell_1$-$\ell_1$ routing graph
 with node set $\bpi_1(V(\GridLabel))$.
\item
\label{it:DecompositionRoutingGraph}
Decompose $R$ into edge-disjoint subgraphs $R_1,\ldots,R_{\ell_1}$
 each of which has nodes $V(R)$ and edges with the same color.
\item
For each $i\in [\ell_1]$, 
 $\bpi_1(R_i)$ is a $1$-$1$ routing graph
 with node set $\bpi_1(V(\GridLabel))$.
Therefore, we can
 recursively find a routing $\rho_i$
 of $\bpi_1(R_i)$ on
 the $(d-1)$-dimensional subgrid
 $\GridLabel_i$ induced by the nodes $\{v\in V(\GridLabel)\mid\pi_1(v)=i\}$.
If $d=2$, i.e., if $\GridLabel_i$ is a path, then
 $\rho_i$ simply routes each routing request of $\bpi_1(R_i)$ on
 the path connecting its source and target in $\GridLabel_i$.
\item
We route each $(s,t)\in E(R_i)$ on the edge set consisting of
 dimension-$1$ edges connecting $s$ to $\GridLabel_i$,
 $\rho_i(\bpi_1(s),\bpi_1(t))$,
 and dimension-$1$ edges connecting $t$ to~$\GridLabel_i$.
\end{enumerate}

We can easily observe that in this algorithm,
 any dimension-$i$ edge of $\GridLabel$
 is contained in at most $2\ell_i$ images of $\rho$.
Moreover, each image of $\rho$ contains at most $2\ell_i$ dimension-$i$ edges.
I.e., $\rho$ has an edge-congestion of $2\cdot\max_{i\in [d]}\{\ell_i\}$
 and a dilation of $2\sum_{i=1}^d\ell_i$.
This property and Lemma~\ref{lm:PermutationToEmbedding} prove
 Theorem~\ref{th:PermutationEmbedding}.
With our aim of using this permutation routing algorithm to prove
 Theorem~\ref{th:EmbedGeneral},
 we generalize this property as the following lemma:
\begin{lemma}
\label{lm:GeneralRouting}
Let $R$ be a routing graph on
 $\GridLabel:=\Grid{\ell_i}_{i\in [d]}$ with $d\geq 2$ and
 $\ell_h:=\max_{i\in [d]}\{\ell_i\}$.
If $\bpi_h(R)$ is a $p$-$q$ routing graph
 with node set $\bpi_h(V(\GridLabel))$, then
 $R$ can be routed on $\GridLabel$
 with a dilation at most $2\sum_{i=1}^d \ell_i$ and
 an edge-congestion at most
 $2\cdot\max\{p,q\}$.
\end{lemma}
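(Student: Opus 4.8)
The plan is to mimic the analysis already carried out for the algorithm of \cite{BA91}, but to track the edge-congestion in terms of the out- and in-degree bounds $p$ and $q$ on $\bpi_h(R)$ rather than the special value $1$. First I would set $j=h$ and note that, since $\bpi_h(R)$ has maximum outdegree $p$ and maximum indegree $q$, the associated source/target bipartite graph has maximum degree $\max\{p,q\}$; hence, exactly as in the proof of Lemma~\ref{lm:PermutationToEmbedding}, its edges can be properly colored with $\max\{p,q\}$ colors so that no two edges sharing a source (in $\bpi_h(R)$) or sharing a target (in $\bpi_h(R)$) receive the same color. Color the edges of $R$ accordingly. This is where the condition $\ell_h=\max_i\{\ell_i\}$ is used: we are routing "along" the longest dimension first, which keeps $\max_i\{\ell_i\}$ from reappearing as a factor.

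Next I would decompose $R$ into color classes $R_1,\dots,R_{\max\{p,q\}}$ and observe that for each $k$, $\bpi_h(R_k)$ is a $1$-$1$ routing graph on $\bpi_h(V(\GridLabel))$, the $(d-1)$-dimensional grid of dimensions $(\ell_i)_{i\in[d]\setminus\{h\}}$. By Theorem~\ref{th:PermutationEmbedding} (applied to the permutation routing problem, i.e. the $\Delta=2$, one-to-one case) — or equivalently by induction using the $\cite{BA91}$ algorithm described just above — each $\bpi_h(R_k)$ can be routed on $\bpi_h(\GridLabel)$ with dilation at most $2\sum_{i\neq h}\ell_i$ and edge-congestion at most $2$. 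For each color class pick such a routing $\rho_k$, and assemble the routing $\rho$ of $R$ on $\GridLabel$ exactly as in Step~4 of the $\cite{BA91}$ algorithm: route each $(s,t)\in E(R_k)$ by the dimension-$h$ edges from $s$ down to the slice $\pi_h=\pi_h(s)$... wait, more precisely, route $s$ along dimension $h$ to the fixed slice used by $\rho_k$ (say the slice $\GridLabel_{c}$ for a canonical coordinate depending on $k$), follow $\rho_k(\bpi_h(s),\bpi_h(t))$ inside that slice, then route along dimension $h$ back up to $t$.

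Finally I would bound the two quantities. For the dilation: each image of $\rho$ uses at most $2\ell_h$ dimension-$h$ edges plus at most $2\sum_{i\neq h}\ell_i$ edges inside a slice, giving at most $2\sum_{i=1}^d\ell_i$. For the edge-congestion: a dimension-$h$ edge $e'$ lies on $\rho(s,t)$ only if $\bpi_h(s)$ or $\bpi_h(t)$ equals $\bpi_h(e')$ and $(s,t)$ lies in the appropriate color class; since $\bpi_h(R)$ has outdegree $\le p$ and indegree $\le q$ at the node $\bpi_h(e')$, and since the coloring separates edges sharing that source or target, at most $\max\{p,q\}$ edges per relevant direction can use $e'$ — careful accounting gives a bound of $2\max\{p,q\}$ once the two halves (from-$s$ legs and to-$t$ legs) are combined appropriately. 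A non-dimension-$h$ edge $e'$ lies in a unique slice, and within that slice it is used only by the single color class whose canonical slice it is, with edge-congestion $\le 2$ from $\rho_k$; across all color classes routed into the same slice one must check this does not stack — this is the point to be most careful about, since several color classes may be assigned to overlapping slices depending on how Step~3 distributes them. I expect the main obstacle to be precisely this bookkeeping of which color class is routed in which $(d-1)$-dimensional slice and verifying that the resulting congestion on intra-slice edges is still $2\max\{p,q\}$ rather than something larger; the cleanest route is to route all color classes into \emph{disjoint} slices when $\ell_h\ge\max\{p,q\}$ (which holds because $\ell_h=\max_i\ell_i$ and $\bpi_h(R)$, being a $p$-$q$ routing graph on a grid of $\prod_{i\neq h}\ell_i$ nodes, forces $p,q\le$ that product — one needs $p,q\le\ell_h$, which may require a small additional argument or a mild hypothesis), so that intra-slice edges inherit congestion exactly $2$ and dimension-$h$ edges carry the $2\max\{p,q\}$ bound.
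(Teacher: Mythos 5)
There is a genuine gap here, and it is fatal to the approach rather than a matter of bookkeeping. Your plan decomposes $R$ into $\max\{p,q\}$ color classes $R_1,\dots,R_{\max\{p,q\}}$ so that each $\bpi_h(R_k)$ is a $1$-$1$ routing graph, and then asserts that each such permutation instance ``can be routed on $\bpi_h(\GridLabel)$ with edge-congestion at most $2$.'' That assertion is false: a single permutation on a $(d-1)$-dimensional grid can require large congestion (the order-reversing permutation on a path with $m$ nodes forces congestion $\Omega(m)$ on the middle edge), and what Theorem~\ref{th:PermutationEmbedding} and the algorithm of \cite{BA91} actually deliver for one $1$-$1$ routing graph is edge-congestion $2\cdot\max_i\{\ell_i\}$, not $2$. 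Consequently, even if you could place the $\max\{p,q\}$ classes in pairwise disjoint slices, the intra-slice congestion would be $\Theta(\max_{i\neq h}\{\ell_i\})$, which is in general far larger than the target $2\cdot\max\{p,q\}$ and, worse, depends on the side lengths rather than on $p$ and $q$ alone --- defeating the entire point of the lemma. The secondary worries you raise (whether $p,q\le\ell_h$, and how classes stack within slices) are symptoms of this: decomposition into permutation classes is the wrong unit of work to distribute among the $\ell_h$ slices.

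The paper's proof distributes differently. It splits $R$ into exactly $\ell_h$ edge-disjoint pieces, one per slice, chosen not so that each piece projects to a permutation but so that each piece is \emph{small} in the sense the recursion needs: for $d=2$ each $R_i$ has at most $\lceil\ell_2\max\{p,q\}/\ell_1\rceil\le\max\{p,q\}$ edges in total, and for $d\ge3$ the double projection $\bpi_2(\bpi_1(R_i))$ is a $\max\{p,q\}$-$\max\{p,q\}$ routing graph (obtained by coloring with $\ell_2\max\{p,q\}$ colors, using that $\bpi_2(\bpi_1(R))$ is an $\ell_2p$-$\ell_2q$ routing graph, and handing each slice a bundle of at most $\max\{p,q\}$ colors). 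The invariant ``the projection along the longest remaining dimension has in- and out-degree at most $\max\{p,q\}$'' is preserved down the recursion and is what yields congestion $2\cdot\max\{p,q\}$ inside each slice; the dimension-$h$ edges receive congestion $p+q\le 2\cdot\max\{p,q\}$ exactly as in your accounting, which is the one part of your argument that is sound, as is the dilation bound. Note also that the hypothesis $\ell_h=\max_i\{\ell_i\}$ is used precisely to guarantee $\lceil\ell_2\max\{p,q\}/\ell_1\rceil\le\max\{p,q\}$, i.e., that $\ell_h$ slices suffice to absorb the $\ell_2\max\{p,q\}$ colors --- not to ``route along the longest dimension first'' in the sense you describe.
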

\begin{proof}
Assume without loss of generality that $h=1$ and $\ell_1\geq\cdots\geq \ell_d$.
We prove the lemma by induction on $d$.
If $d=2$, then $R$ has at most $\ell_2\cdot\max\{p,q\}$ edges.
We decompose $R$ into $\ell_1$ edge-disjoint subgraphs
 $R_1,\ldots,R_{\ell_1}$ so that
 $\bigcup_{i=1}^{\ell_1} E(R_i)=E(R)$ and
 $|E(R_i)|\leq\lceil\ell_2\cdot\max\{p,q\}/\ell_1\rceil\leq\max\{p,q\}$
 for $i\in [\ell_1]$.
For each $i\in [\ell_1]$,
 $\bpi_1(R_i)$ can be routed on the $1$-dimensional grid $\GridLabel_i$
 induced by the nodes $\{v\in V(\GridLabel)\mid \pi_1(v)=i\}$ with
 a dilation at most $\ell_2$ and
 an edge-congestion at most $\max\{p,q\}$.
The routing of $R$ is completed by adding
 the dimension-$1$ edges connecting $s$ to $\GridLabel_i$
 and $t$ to $\GridLabel_i$ for each $(s,t)\in E(R_i)$.
Any dimension-$1$ edge has a congestion at most $p+q$.
Moreover, at most $2\ell_1$ dimension-$1$ edges are added to each image of
 the routing.
Therefore, we have the lemma for $d=2$.

If $d\geq 3$, then
 we color edges of $R$ using at most $\ell_2\cdot\max\{p,q\}$ colors so that
 when we identify edges in $R$ with corresponding edges in $\bpi_2(\bpi_1(R))$,
 no two edges with the same sources or with the same targets
 in $\bpi_2(\bpi_1(R))$ have the same color.
Such coloring exists because $\bpi_2(\bpi_1(R))$ is
 a $\ell_2p$-$\ell_2q$ routing graph
 with node set $\bpi_2(\bpi_1(V(\GridLabel)))$.
Then,
 we decompose $R$ into $\ell_1$ edge-disjoint subgraphs
 $R_1,\ldots,R_{\ell_1}$ that have
 edge sets with disjoint sets of
 $\lceil\ell_2\cdot\max\{p,q\}/\ell_1\rceil\leq\max\{p,q\}$ colors.
This implies that
 $\bpi_2(\bpi_1(R_i))$ is a
 $\max\{p,q\}$-$\max\{p,q\}$ routing graph
 with node set $\bpi_2(\bpi_1(V(\GridLabel)))$
 for $i\in [\ell_1]$.
By induction hypothesis,
 $\bpi_1(R_i)$ can be routed on the $(d-1)$-dimensional subgrid 
 induced by the nodes $\{v\in V(\GridLabel)\mid \pi_1(v)=i\}$ with
 a dilation at most $2\sum_{i=2}^d\ell_i$ and
 an edge-congestion at most
 $2\cdot\max\{p,q\}$.
The routing of $R$ is completed by adding
 dimension-$1$ edges as done in the case of $d=2$, so that
 any dimension-$1$ edge has congestion at most $p+q$, and
 at most $2 \ell_1$ dimension-$1$ edges are added to each
 image of the routing.
Thus, we have routed $R$ with a dilation at most $2\sum_{i=1}^d\ell_i$ and
 an edge-congestion at most $2\cdot\max\{p,q\}$.
\end{proof}

If we do not have the assumption $\ell_h=\max_{i\in [d]}\{\ell_i\}$ in
 Lemma~\ref{lm:GeneralRouting}, then
 we can estimate
 $\lceil\ell_2\cdot\max\{p,q\}/\ell_1\rceil\leq
 \lceil\gM\cdot\max\{p,q\}\rceil$ in its proof, where
 $\gM$ is the aspect ratio of $\GridLabel$.
This means that
 $|E(R_i)|\leq\lceil\gM\cdot\max\{p,q\}\rceil$ for $d=2$, and that
 $\bpi_2(\bpi_1(R_i))$ is a
 $\lceil\gM\cdot\max\{p,q\}\rceil$-$\lceil\gM\cdot\max\{p,q\}\rceil$
 routing graph on $\bpi_2(\bpi_1(\GridLabel))$ for $d\geq 3$.
Therefore, initially assuming without loss of generality that
 $\ell_1=\ell_h$ and $\ell_2\geq\cdots\geq\ell_d$ in the proof,
 we have the following lemma:
\begin{lemma}
\label{lm:GeneralRouting2}
Let $R$ be a routing graph on
 $\GridLabel:=\Grid{\ell_i}_{i\in [d]}$ with $d\geq 2$ and aspect ratio~$\gM$.
If $\bpi_h(R)$ is a $p$-$q$ routing graph
 with node set $\bpi_h(V(\GridLabel))$
 for some $h\in [d]$, then
 $R$ can be routed on $\GridLabel$
 with a dilation at most $2\sum_{i=1}^d \ell_i$ and
 an edge-congestion at most
 $2\lceil\gM\cdot\max\{p,q\}\rceil$.
\end{lemma}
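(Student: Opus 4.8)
The plan is to re-run the induction of the proof of Lemma~\ref{lm:GeneralRouting}, tracking the aspect ratio $\gM$ through the bound on $|E(R_i)|$ instead of relying on $\ell_h=\max_i\ell_i$. First I would reduce to the case $h=1$ by permuting coordinates (this does not change $\gM$), and—crucially—order the remaining coordinates so that $\ell_2\geq\cdots\geq\ell_d$; note we cannot assume $\ell_1$ is the largest, so $\ell_2/\ell_1$ may exceed $1$, but in any case $\ell_2/\ell_1\leq\gM$ by definition of the aspect ratio. The induction is again on $d$, and the only place the hypothesis $\ell_1=\max_i\ell_i$ was used in the original proof is in estimating $\lceil \ell_2\cdot\max\{p,q\}/\ell_1\rceil\leq\max\{p,q\}$ when splitting $R$ into $R_1,\dots,R_{\ell_1}$; here we instead get $\lceil\ell_2\cdot\max\{p,q\}/\ell_1\rceil\leq\lceil\gM\cdot\max\{p,q\}\rceil$.

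In the base case $d=2$: $\bpi_1(R)$ is a $p$-$q$ routing graph on the path $\bpi_1(V(\GridLabel))$ (which has $\ell_2$ nodes), so $R$ has at most $\ell_2\cdot\max\{p,q\}$ edges. Decompose $R$ into $\ell_1$ edge-disjoint subgraphs $R_1,\dots,R_{\ell_1}$ with $|E(R_i)|\leq\lceil\ell_2\cdot\max\{p,q\}/\ell_1\rceil\leq\lceil\gM\cdot\max\{p,q\}\rceil$. Route $\bpi_1(R_i)$ on the $1$-dimensional subgrid $\GridLabel_i$ (dilation $\leq\ell_2$, edge-congestion $\leq\lceil\gM\cdot\max\{p,q\}\rceil$) and add the at most $2\ell_1$ dimension-$1$ edges per image to connect sources and targets to $\GridLabel_i$; dimension-$1$ edges carry congestion at most $p+q\leq 2\max\{p,q\}\leq 2\lceil\gM\cdot\max\{p,q\}\rceil$. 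This gives dilation $\leq 2(\ell_1+\ell_2)$ and edge-congestion $\leq 2\lceil\gM\cdot\max\{p,q\}\rceil$.

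For $d\geq 3$: color $E(R)$ with at most $\ell_2\cdot\max\{p,q\}$ colors so that no two edges of $\bpi_2(\bpi_1(R))$ sharing a source or a target get the same color—possible since $\bpi_2(\bpi_1(R))$ is an $\ell_2 p$-$\ell_2 q$ routing graph on $\bpi_2(\bpi_1(V(\GridLabel)))$—then group the colors into $\ell_1$ disjoint bunches of $\lceil\ell_2\cdot\max\{p,q\}/\ell_1\rceil\leq\lceil\gM\cdot\max\{p,q\}\rceil$ colors, yielding subgraphs $R_1,\dots,R_{\ell_1}$ with $\bpi_2(\bpi_1(R_i))$ a $\lceil\gM\cdot\max\{p,q\}\rceil$-$\lceil\gM\cdot\max\{p,q\}\rceil$ routing graph. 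Here is the one subtle point: to apply the induction hypothesis to $\bpi_1(R_i)$ on the $(d-1)$-dimensional subgrid $\Grid{\ell_i}_{i\in\{2,\dots,d\}}$, the relevant "largest" coordinate for that subgrid must be $\ell_2$, which holds by our ordering $\ell_2\geq\cdots\geq\ell_d$; moreover that subgrid's aspect ratio is $\ell_2/\ell_d\leq\gM$, so the induction hypothesis delivers edge-congestion $\leq 2\lceil(\ell_2/\ell_d)\cdot\lceil\gM\max\{p,q\}\rceil\rceil$—which is not obviously $\leq 2\lceil\gM\max\{p,q\}\rceil$. This is the main obstacle, and the clean fix is to strengthen the induction: prove by induction on $d$ the statement "if $\bpi_h(R)$ is a $p$-$q$ routing graph for the coordinate $h$ attaining $\ell_h=\max_i\ell_i$, then $R$ routes with edge-congestion $\leq 2\max\{p,q\}$" (that is exactly Lemma~\ref{lm:GeneralRouting}), and then derive Lemma~\ref{lm:GeneralRouting2} as a one-step consequence: given arbitrary $h$, split $R$ into $\ell_1$ pieces as above so that $\bpi_2(\bpi_1(R_i))$ has parameters $\lceil\gM\max\{p,q\}\rceil$, and within each piece the coordinate with the largest $\ell$ among $\{2,\dots,d\}$ is now the distinguished one, so Lemma~\ref{lm:GeneralRouting} (applied to the $(d-1)$-dimensional subgrid in the honest ordering $\ell_2\geq\cdots\geq\ell_d$) routes $\bpi_1(R_i)$ with edge-congestion $\leq 2\lceil\gM\max\{p,q\}\rceil$. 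Combining with the $\leq 2\ell_1$ dimension-$1$ edges per image and congestion $\leq p+q$ on dimension-$1$ edges finishes the proof; the dilation bound $2\sum_{i=1}^d\ell_i$ is immediate from the same accounting. Since this is precisely the argument sketched in the paragraph preceding the lemma, I would simply write it out with the ordering and the single invocation of Lemma~\ref{lm:GeneralRouting} made explicit.
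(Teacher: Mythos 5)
Your proposal is correct and follows essentially the same route as the paper: the paper also reduces to the ordering $\ell_1=\ell_h$, $\ell_2\geq\cdots\geq\ell_d$, absorbs the aspect ratio only in the first splitting step via $\lceil\ell_2\cdot\max\{p,q\}/\ell_1\rceil\leq\lceil\gM\cdot\max\{p,q\}\rceil$, and then invokes Lemma~\ref{lm:GeneralRouting} on the sorted $(d-1)$-dimensional subgrid so that no further $\gM$ factors accumulate. The subtle point you flag (that naively re-running the induction would compound aspect-ratio factors) and your fix (a single outer split followed by one application of Lemma~\ref{lm:GeneralRouting}) is exactly how the paper argues.
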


\subsection{Separator-Based Embedding}
\label{ssc:SeparatorBasedEmbedding}
The following is our core theorem:
\begin{theorem}
\label{th:EmbedCore}
Suppose that $G$ is a graph with
 $N$ nodes, maximum node degree $\Delta$, and with a $\beta$-decomposition tree
 of expansion $Cn^\alpha$ ($C>0$, $0\leq\alpha< 1$, $1/2\leq\beta<1$),
 and that
$\GridLabel$ is a grid
 with a dimension $d\geq 2$,
 at least $N$ nodes,
 and with constant aspect ratio.
Then, $G$ can be embedded into $\GridLabel$
 with a dilation of $O(dN^{1/d})$, and with an edge-congestion of
 $O(dC+d^2\Delta)$ if $d>2/(1-\alpha)$,
 $O(C/(1-\alpha-\frac{1}{d})+d^2\Delta)$ if $1/(1-\alpha)<d\leq 2/(1-\alpha)$,
 and $O(C(N^{\alpha-1+\frac{1}{d}}+\log N)+d^2\Delta)$ if $d\leq 1/(1-\alpha)$,
\end{theorem}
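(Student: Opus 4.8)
The plan is to embed $G$ into $\GridLabel$ by a recursive divide-and-conquer that mirrors the given $\beta$-decomposition tree $\mathcal{T}$ of $G$ on one side, and a comparable $\beta'$-decomposition tree of $\GridLabel$ on the other side. First I would normalize: since $\GridLabel$ has constant aspect ratio and dimension $d\geq 2$, its side lengths are all $\Theta(N^{1/d})$, and removing $O(dN^{(k-1)/d})$ suitably-chosen edges splits a subgrid of $\approx k$ nodes roughly in half; iterating, $\GridLabel$ admits a $\beta'$-decomposition tree $\mathcal{T}_M$ (for some constant $\beta'$ close to $1/2$) with expansion $O(d\,n^{1-1/d})$, i.e. a subgrid on $n$ nodes has $O(d\,n^{1-1/d})$ boundary edges. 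The recursion takes a node $H\in V(\mathcal{T})$ paired with a subgrid $\GridLabel_H\in V(\mathcal{T}_M)$ with $|V(\GridLabel_H)|\geq |V(H)|$, places the two children of $H$ into the two children of $\GridLabel_H$ (matching the larger child of $H$ to the larger subgrid), and routes the cut edges of $H$ — together with the external edges of $H$ inherited from higher levels, which by the decomposition-tree property number at most $Cn^\alpha$ and are all incident to the $O(Cn^\alpha)$ separator/boundary nodes — across the cut of $\GridLabel_H$.

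The key device for the routing step is Lemma~\ref{lm:GeneralRouting2} (and underneath it Lemma~\ref{lm:PermutationToEmbedding}): at a node on $n$ nodes, the routing requests crossing the cut form a routing graph whose projection $\bpi_h$ (for a suitable coordinate $h$) is a $p$-$q$ routing graph with $p,q = O(Cn^\alpha / n^{1-1/d}) = O(Cn^{\alpha-1+1/d})$, because $O(Cn^\alpha)$ crossing edges must be spread across a cut consisting of $\Theta(n^{1-1/d})$ parallel rows; so they can be routed through the cut subgrid (a $(d-1)$-dimensional grid on $\Theta(n^{1-1/d})$ nodes) with edge-congestion $O(Cn^{\alpha-1+1/d})$ plus the $O(d^2\Delta)$-type additive term coming from the $\lceil\Delta/2\rceil$ permutation-routing instances needed by Lemma~\ref{lm:PermutationToEmbedding}. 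Summing the per-level contribution $O(Cn^{\alpha-1+1/d})$ over the $\Theta(\log N)$ levels of the recursion, with $n$ halving each level, gives a geometric series in the ratio $2^{-(1-\alpha-1/d)}$ per level; this converges to $O(C)$ — more precisely $O(C/(1-\alpha-1/d))$ — when $1-\alpha-1/d>0$, i.e. $d>1/(1-\alpha)$, diverges logarithmically to $O(C\log N)$ when $d=1/(1-\alpha)$, and is dominated by the top (largest) term $O(CN^{\alpha-1+1/d})$ when $d<1/(1-\alpha)$. The $\max\{p,q\}$ in the various lemmas is what forces a cap at $O(dC+d^2\Delta)$ rather than $O(C/(1-\alpha-1/d))$ once $d>2/(1-\alpha)$, since then $1-\alpha-1/d$ can exceed $1/2$ but the relevant convergence rate is bounded by the grid's own $\beta'\approx 1/2$ decomposition; here one uses the $2/(1-\alpha)$ threshold precisely to separate the regime where the guest's expansion decay outpaces the grid halving from the regime where the grid's own geometry is the bottleneck.

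The main obstacle — and the reason Theorem~\ref{th:EmbedCore} is not just a one-line consequence of the lemmas — is the overlap of routings from different recursive levels on the same grid edge, which naively costs an extra $\Theta(\log N)$ factor and would destroy the $O(\Delta)$ bound in the good regime. I would handle this exactly as sketched in the introduction: partition $E(\GridLabel)$ into $\Theta(\log N)$ edge-classes, sized geometrically, so that the cut edges used at recursive depth $i$ (whose subgrids have $\Theta(N/2^i)$ nodes and whose routings stay within a thin slab around the cut) come from a class that is reused only $O(1)$ times across all of level $i$; because the slabs at a fixed level are on disjoint subgrids, $O(1)$ reuse suffices and the $\log N$ factor collapses to $O(1)$. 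A secondary technical point is bookkeeping the inherited external edges: one must check that when the $Cn^\alpha$ external edges of $H$ are distributed to the children $H_1,H_2$, each child still sees only $O(C|V(H_j)|^\alpha)$ of them — this is exactly the content guaranteed by the decomposition tree having expansion $Cn^\alpha$, so no new work is needed beyond invoking it. Finally the dilation bound $O(dN^{1/d})$ follows by summing, over the $O(\log N)$ levels, the per-level dilation $2\sum_i \ell_i = O(d\,(N/2^k)^{1/d})$ from Lemma~\ref{lm:GeneralRouting2}, another convergent geometric series, plus the $O(d)$-length dimension-$h$ detours; and the additive $O(d^2\Delta)$ in the congestion is the $\lceil\Delta/2\rceil$ factor from Lemma~\ref{lm:PermutationToEmbedding} times the $O(d)$ from Lemma~\ref{lm:GeneralRouting2}, again summed over levels but now with the halving dominating so that it stays $O(d^2\Delta)$.
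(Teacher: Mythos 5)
Your high-level architecture --- recursive bisection of both $G$ and $\GridLabel$, routing cut edges at each level via Lemmas \ref{lm:PermutationToEmbedding}--\ref{lm:GeneralRouting2}, and the regime arithmetic --- matches the skeleton of the paper's algorithm SBE, and for $d\leq 1/(1-\alpha)$ your summation over all $\Theta(\log N)$ levels is essentially what the paper does. But the two steps that carry the technical weight for $d>1/(1-\alpha)$ are asserted rather than proved, and one assertion is false. First, to get per-level congestion $O(Cn^{\alpha-1+\frac 1d})$ from Lemma~\ref{lm:GeneralRouting} you need the projection $\bpi_h$ of the routing graph of the $O(Cn^\alpha)$ crossing edges to be a $p$-$q$ routing graph with $p,q=O(Cn^{\alpha-1+\frac 1d})$. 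That is a statement about \emph{where $\phi$ has placed} the separator nodes, not about how many crossing edges there are: nothing in your construction prevents a child's recursive embedding from concentrating all those endpoints at positions sharing one $\bpi_h$-coordinate, making $p=\Theta(Cn^\alpha)$. The paper maintains an explicit uniformity invariant to rule this out --- each recursive call outputs a mapping $\psi$ of external-edge endpoints onto $\CF{\GridLabel}{w}$ uniform across the direction of minimum cross-section, and the base-level $\phi$ is built in Lemma~\ref{lm:BaseEmbedding} to spread boundary-incident degrees --- and every application of Lemmas \ref{lm:GeneralRouting} and~\ref{lm:GeneralRouting2} leans on it.

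Second, your claim that the level-$i$ routings ``stay within a thin slab around the cut'' is not true: the endpoints of the cut edges between $H_1$ and $H_2$ are spread throughout the child subgrids, so a one-shot routing at level $i$ must traverse the whole subgrid --- exactly the edges reused by all deeper levels. Your edge-class partition by depth therefore does not resolve cross-level reuse (within one level the subgrids are already disjoint, so there is nothing to fix there), and the geometric series $\sum_i Cn_i^{\alpha-1+\frac 1d}$ cannot simply be summed, because once $Cn_i^{\alpha-1+\frac 1d}<1$ the true per-level contribution is a ceiling, at least $1$, and $\Theta(\log N)$ levels then give the very $\Omega(\log N)$ barrier the theorem must beat when $d>1/(1-\alpha)$. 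The paper's resolution is the \emph{partial routing} of external edges through a hierarchy of channels: an endpoint is relayed from $\CF{\GridLabel_j}{w_j}$ to $\CF{\GridLabel_j}{w}$, where the channel for $w$ is the subgrid thinned by $2^w$ in dimensions $1$ and $2$ with $2^{2w}\approx n^{1-\tilde\alpha-\frac 1d}$, so its cross-section has size $\Omega(n^{\tilde\alpha})$ and absorbs the $Cn^\alpha$ external edges at density $O(C)$; channels for distinct $w$ are edge-disjoint and each $w$ serves only $O(1/(1-\tilde\alpha-\frac 1d))$ consecutive levels (Lemmas \ref{lm:P1} and~\ref{lm:P0}). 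This relay mechanism is the central idea of the proof and is absent from your proposal. Relatedly, your explanation of the $d>2/(1-\alpha)$ threshold is off: it arises because the channel can be thinned only in two dimensions, each by at most a side length $\approx n^{1/d}$, which caps $2^{2w}$ near $n^{1/d}$ and forces $O(d)$ levels per channel; it has nothing to do with $\max\{p,q\}$ or a grid-side halving rate, and in that regime $O(dC)$ is simply the weaker of the two stated bounds.
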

In fact, we can obtain
 Theorem~\ref{th:EmbedGeneral} by combining
 Theorem~\ref{th:EmbedCore} with Lemma~\ref{lm:DecompositionTreeExpansion}.
If $G$ is a graph with $N$ nodes, maximum node degree $\Delta$,
 and with a $\beta$-node-separator of size $O(n^\alpha)$, then
 by Lemma~\ref{lm:DecompositionTreeExpansion},
 $G$ has a $\frac\beta{1-\epsilon}$-decomposition tree with expansion
 $O(\Delta n^\alpha/\epsilon)=O(\Delta n^\alpha)$
 for any $0<\epsilon<1-\beta$.
By Theorem~\ref{th:EmbedCore}, therefore,
 $G$ can be embedded into $\GridLabel$ with a dilation of $O(dN^{\frac 1 d})$,
 and with an edge-congestion of
 $O(\Delta\cdot\max\{d,1/(1-\alpha-\frac{1}{d})\}+d^2\Delta)=O(\Delta)$
 if 
 $d>1/(1-\alpha)$ is fixed, 
 $O(\Delta(N^{\alpha-1+\frac{1}{d}}+\log N)+d^2\Delta) =O(\Delta\log N)$
 if 
 $d=1/(1-\alpha)$, 
 and
 $O(\Delta(N^{\alpha-1+\frac{1}{d}}+\log N)+d^2\Delta)
 =O(\Delta N^{\alpha-1+\frac{1}{d}})$
 if 
 $d<1/(1-\alpha)$.

We prove Theorem~\ref{th:EmbedCore} by constructing
 a desired embedding algorithm, called SBE.
We first outline ideas and analysis of SBE,
 then specify the definition of SBE,
 and finally prove the correctness and the edge-congestion.

\subsection*{Proof Sketch}
We describe
 a proof sketch
 for the case $1/(1-\alpha)<d\leq 2/(1-\alpha)$ since
 essential part of idea appears in this case.
Basically, we partition $G$ according to its decomposition tree,
 recursively embed the partitioned subgraphs of $G$ into 
 partitioned subgrids of the host grid
 $\GridLabel:=\Grid{\ell_i}_{i\in [d]}$,
 and route cut edges, i.e., edges removed in partitioning $G$.
In order to avoid an edge of $\GridLabel$ being used in too many recursive
 steps,
 we route the cut edges on one of edge-disjoint subgraphs of $\GridLabel$,
 called \emph{channels}.
The channel associated with a positive integer $w$ roughly equal to
 $\frac{1}{2}(1-\alpha-\frac{1}{d})\log_2 N$
 is a grid-like graph homeomorphic to 
 $\GridLabel':=\Grid{\frac{\ell_1}{2^w},\frac{\ell_2}{2^w},\ell_3,\ldots,\ell_d}$
 and
 induced by
 the nodes $v\in V(\GridLabel)$
 with $\pi_i(v)\equiv 2^{w-1} \pmod{2^w}$ for each $i=1,2$.
We can find the channel
 in $\GridLabel$ as a non-empty subgraph
 if $d\leq 2/(1-\alpha)$.
When we embed an $n$-node subgraph $H$ of $G$ appeared
 in the decomposition tree,
 we partially route the external edges of each child of $H$
 to the channel
 associated with $w\simeq\frac{1}{2}(1-\alpha-\frac{1}{d})\log_2 n$
 and route the cut edges of $H$ by
 connecting the two sets of the external edges of children of $H$
 on this channel.
We here say ``partially'' in two meanings:
One meaning is that external edges are viewed as
 half-edges just leaving a child of $H$
 and are routed halfway.
The other is that 
 an external edge leaving a node in the decomposition tree
 is also an external edge of some descendants
 and is routed step by step among recursive steps.
I.e.,
 a cut edge is routed by connecting two partially routed
 external edges of the children,
 which are recursively routed using partially routed external edges
 of grandchildren, and so on.
Consequently, 
 cut edges of $H$ are routed through channels associated with
 integers up to $\frac{1}{2}(1-\alpha-\frac{1}{d})\log_2 n$
 in recursive steps from base embeddings to the embedding of~$H$.

The section of $\GridLabel'$ across a dimension $i$ is
 a $(d-1)$-dimensional grid with node set $\bpi_i(V(\GridLabel'))$.
If $\GridLabel'$ is associated with
 $w=\frac{1}{2}(1-\alpha-\frac{1}{d})\log_2 n$, then
 the minimum size $S$ of the section is
 $\min_i|\bpi_i(V(\GridLabel'))|
 =\Omega(n^{(d-1)/d}/2^{2w})=\Omega(n^\alpha)$.
Since $H$ and children of $H$ have at most $Cn^\alpha$ external edges,
 we route the external edges (or half-edges) of each child
 to the channel so that at most $D:=Cn^\alpha/S=O(C)$ halfway points $v$ 
 have the same $\bpi_i(v)$, where
 $i$ minimizes $|\bpi_i(V(\GridLabel'))|$.
We can inductively observe that this routing can be done
 with an edge-congestion $O(D)$
 using Lemma~\ref{lm:GeneralRouting}.
The integer $w$ decreases after $P=O(1/(1-\alpha-\frac{1}{d}))$
 recursive steps proceed because
 a guest graph becomes roughly half in one recursive step.
Therefore, the total edge-congestion incurred by entire recursive steps is
 at most $O(PD)=O(C/(1-\alpha-\frac{1}{d}))$
 since channels associated with different $w$'s are edge-disjoint.
We stop the recursive procedure at the point $\min\{\ell_i\}=\Theta(d)$,
 by which
 we can obtain a base embedding with an edge-congestion $B=O(d\Delta)$ using
 Lemmas \ref{lm:PermutationToEmbedding}--\ref{lm:GeneralRouting2}.
Because an edge of $\GridLabel$ can possibly be contained
 in $O(d)$ base embeddings, the total edge-congestion is
 at most $O(PD+dB)=O(C/(1-\alpha-\frac{1}{d})+d^2\Delta)$.

The reason of the limit $\min\{\ell_i\}=\Theta(d)$ of recursive procedure
 is as follows:
We cannot always partition the host grid with
 a ``flat'' section due to the difference between
 the number of nodes of a partitioned guest graph and multiples of
 the size of the section.
In our algorithm, therefore,
 we partition a host grid into two subgrids that
 may share a $(d-1)$-dimensional grid as ``ragged'' sections.
Such a $(d-1)$-dimensional grid might be used as channels in two partitioned
 grids
 during $O(d)$ recursive steps in the worst case,
 which would yield a $2^{O(d)}$ factor in the edge-congestion.
To avoid this, we actually remove any boundary 
 of a host grid from a channel,
 so that two partitioned grids have disjoint channels.
However, we might have an exponential factor again
 if we would continue the recursive procedure
 until $\min\{\ell_i\}$ is much smaller than~$d$.
For instance, if $\min\{\ell_i\}=O(1)$, then
 removal of the boundary 
 for each dimension would shrink
 the channel exponentially, implying
 $S=n^\alpha/2^{O(d)}$ and hence $D=O(C2^{O(d)})$.

\subsection*{Definition of SBE}
Suppose that $G_0$
 and $\GridLabel_0$ are a guest graph and a host grid, respectively, satisfying
 the conditions of Theorem~\ref{th:EmbedCore}.
Let $\mathcal T$ be a $\beta$-decomposition tree for $G_0$ with
 expansion~$Cn^\alpha$.
We set $\gM$ to the larger number of
 the aspect ratio of $\GridLabel_0$ and
\[
\frac{1}{1-\beta}\left(\frac{1}{7\beta}+e\beta\right)+\frac{5}{4}>4,
\]
 where
 $e$ is base of the natural logarithm.
We assume that
 any proper subgrid of $\GridLabel_0$ has less than $N$ nodes.

\begin{figure}
\begin{center}
\includegraphics[scale=1.0]{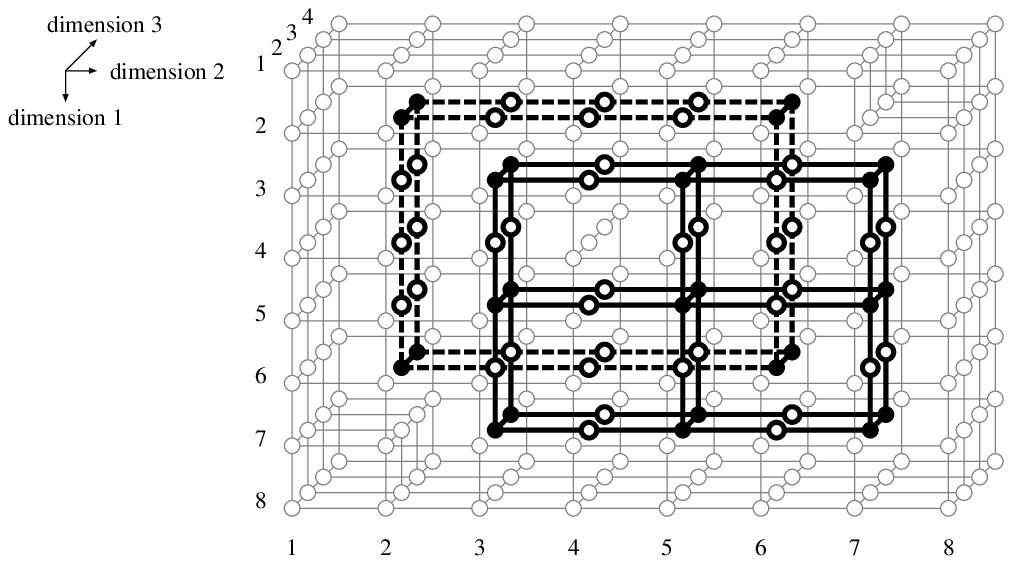}
\end{center}
\caption{%
Channels of $\CF \GridLabel 1$ and $\CF \GridLabel 2$ for
 $\GridLabel=\Grid{8,8,4}$.
Black nodes are contained in $\CF \GridLabel 1$ or $\CF \GridLabel 2$.
Dashed lines represent dimension-$1$ and -$2$ edges of the channel of 
 $\CF \GridLabel 2$.
}
\label{fig:channel}
\end{figure}

We use the following notations to define SBE formally:
Let $\V{w}:=\{v\in V(\GridLabel_0)\mid\pi_i(v)\equiv 2^{w-1}\pmod{2^w}
 \text{ for each $i=1,2$}\}$ for an integer $w\geq 1$, and let
 $\V 0:=V(\GridLabel_0)$.
For a $d$-dimensional subgrid $\GridLabel$ of $\GridLabel_0$,
 let
 $\CF \GridLabel w:=\{v\in \V{w}\cap V(\GridLabel)\mid
 \deg_\GridLabel(v)=2d\}$, where
 $\deg_\GridLabel(\cdot)$ is the node degree in $\GridLabel$.
The \emph{channel of $\CF \GridLabel w$}
 is the subgraph of $\GridLabel$ homeomorphic to
 a $d$-dimensional grid having $\CF \GridLabel w$ as grid points.
Specifically, this graph is induced by the node set
 $\CF\GridLabel w\cup
 \{s\in V(\GridLabel)\mid\exists i\in\{1,2\}\
 \exists\{u,v\}\subseteq\CF\GridLabel w\
 \pi_i(u)<\pi_i(s)<\pi_i(v)=\pi_i(u)+2^w,\ \bpi_i(u)=\bpi_i(s)=\bpi_i(v)\}$.
Two channels in $\Grid{8,8,4}$ are illustrated in Fig.~\ref{fig:channel}.
It should be noted that for any $w>w'\geq 1$,
 $\CF\GridLabel w\cap\CF\GridLabel {w'}=\emptyset$
 and channels of $\CF\GridLabel w$ and $\CF\GridLabel {w'}$
 are edge-disjoint.
The \emph{direction of $\CF{\GridLabel}{w}$} is a dimension
 $i\in [d]$ minimizing $\CS \GridLabel i w:=|\bpi_i(\CF \GridLabel w)|$.
In other words,
 the direction is a dimension of the longest side length
 of a grid having grid points $\CF{\GridLabel}{w}$.
In Fig.~\ref{fig:channel},
 the channel for $w=1$ has direction
 $1$ or $2$ because
 $\CS \GridLabel 1 1=\CS \GridLabel 2 1=6$ and $\CS \GridLabel 3 1=9$.
A mapping $\psi: X\rightarrow \CF \GridLabel w$ is said to be
 \emph{uniform across dimension $i$} if
 $\psi(X)$ are uniformly distributed on $\bpi_i(\CF \GridLabel w)$, i.e.,
 $\lambda_i(\psi):=\max_{v\in\bpi_i(\CF\GridLabel w)}
 |\{s\in X\mid \bpi_i(\psi(s))=v\}|=\lceil|X|/\CS \GridLabel i w\rceil$.
In Fig.~\ref{fig:channel}, for example,
 if $\psi:[4]\rightarrow\CF \GridLabel 1$ maps
 $1,2,3,4$ to $(3,3,2),(3,5,2),(5,3,3),(5,5,3)$, respectively, then
 $\psi$ is uniform across dimensions both $1$ and $3$ but not dimension $2$
 because
 $\lambda_1(\psi)=1=\lceil|[4]|/\CS\GridLabel 1 1\rceil=\lceil 4/6\rceil$,
 $\lambda_2(\psi)=2>\lceil|[4]|/\CS\GridLabel 2 1\rceil=\lceil 4/6\rceil$,
 and
 $\lambda_3(\psi)=1=\lceil|[4]|/\CS\GridLabel 3 1\rceil=\lceil 4/9\rceil$.
We note here that for any two dimensions $i$ and $j$,
 we can construct $\psi$ uniform across dimensions both $i$ and $j$ by
 uniformly distributing $\psi(X)$ among nodes
 on a $(d-1)$-dimensional diagonal hyperplane between dimensions $i$ and $j$
 in $\CF{\GridLabel} w$.
 
\newcounter{SBEStep}
\newcounter{LabelSBEInputOutput}
\newcounter{LabelSBEChannelConfiguration}
\newcounter{LabelSBEBaseEmbedding}
\newcounter{LabelSBESeparation}
\newcounter{LabelSBEResizingChannel}
\newcounter{LabelSBERecursiveEmbedding}
\newcounter{LabelSBERoutingEdges}
\subsubsection*{Step~\theSBEStep---Input and Output}
\setcounter{LabelSBEInputOutput}{\value{SBEStep}}
The formal input and output of SBE is as follows:
\paragraph*{Algorithm {\ttfamily SBE}($G$, $X$, $\GridLabel$, $U$)}
\begin{description}
\item[Input]\mbox{}
\begin{itemize}
\item
An $n$-node subgraph $G$ of $G_0$ contained in $V(\mathcal T)$.
\item
A multiset $X$ of nodes of $G$ incident to distinct
 external edges of $G$, i.e.,
 a node appears in $X$ as many times as the number of
 the external edges incident to the node.
\item
A subgrid $\GridLabel=\Grid{\ell_i}_{i\in [d]}$ of $\GridLabel_0$ with
 aspect ratio at most $\gM$, together with
 a set $U\subseteq V(\GridLabel)$ such that
 $U\supseteq\{v\in V(\GridLabel)\mid\deg_\GridLabel(v)=2d\}$ and
 $|U|=n$.
Suppose that $h$ is a dimension such that $\ell_h=\max_{i\in [d]}\{\ell_i\}$.
\end{itemize}
\item[Output]\mbox{}
\begin{itemize}
\item
An embedding $\langle\phi,\rho\rangle$ 
 of $G$ into $\GridLabel$
such that $\phi(V(G))=U$.
\item
A mapping $\psi:X\rightarrow \CF \GridLabel w$ uniform
 across the direction $k$ of $\CF \GridLabel w$, where
 $w\geq 0$ is an integer defined in Step~1.
\item
A routing 
 $\sigma$ of the routing graph with
 node set $\phi(X)\cup\psi(X)$ and
 edge set $\{(\phi(u),\psi(u))\mid u\in X\}$.
\end{itemize}
\end{description}

Initially, we arbitrarily choose $U$ as desired and
 perform
 {\ttfamily SBE}($G_0$, $\emptyset$, $\GridLabel_0$, $U$).

\stepcounter{SBEStep}
\subsubsection*{Step~\theSBEStep---Channel Configuration}
\setcounter{LabelSBEChannelConfiguration}{\value{SBEStep}}
This step sets an integer $w$, by which we configure the channel
 of $\CF\GridLabel w$ to route $\rho$ and~$\sigma$.
We define $w:=\max\{
\lfloor\frac{1}{2}
((1-\tilde\alpha-\frac{1}{d})\log_2 n-\log_2\frac\gM{1-\beta})\rfloor,0\}$,
 where
 $\tilde\alpha:=\max\{1-\frac 2 d,\alpha\}$.
In other words,
 $w=\max\{
\lfloor\frac{1}{2}
(\frac 1 d \log_2 n-\log_2\frac\gM{1-\beta})\rfloor,0\}$
 if $d>2/(1-\alpha)$,
 $w=\max\{\lfloor\frac{1}{2}
((1-\alpha-\frac{1}{d})\log_2 n-\log_2\frac\gM{1-\beta})\rfloor,0\}$
 if $1/(1-\alpha)<d\leq 2/(1-\alpha)$, and
 $w=0$ if $d\leq 1/(1-\alpha)$.

\stepcounter{SBEStep}
\subsubsection*{Step~\theSBEStep---Base Embedding}
\setcounter{LabelSBEBaseEmbedding}{\value{SBEStep}}
If 
 $\ell_h\leq 2\gM d$, then
 SBE does not call itself recursively any longer
 and constructs a base embedding as follows:
\begin{enumerate}
\item
If $X=\emptyset$, then
 let $\phi:V(G)\rightarrow U$ be an arbitrary one-to-one mapping.
Otherwise,
 let $Y$ be the set (not a multiset) of nodes incident to external edges of
 $G$.
We construct a one-to-one mapping $\phi:V(G)\rightarrow U$ so that
 degrees of nodes in $Y$ are uniformly distributed on $\bpi_k(U)$.
Specifically,
 for any $v\in\bpi_k(U)$,
 the sum of degrees of nodes $s\in Y$ with $\bpi_k(\phi(s))=v$
 is at most $e|X|/|\bpi_k(U)|+\Delta$.
We prove later in Lemma~\ref{lm:BaseEmbedding} that
 $\phi$ can be constructed as desired.
This construction implies that
 if $R$ is the routing graph to be routed by $\sigma$, then
 $\bpi_k(R)$ has maximum outdegree at most
 $e|X|/|\bpi_k(U)|+\Delta\leq e|X|/\CS{\GridLabel} k w+\Delta$.
\item
Construct a mapping $\psi:X\rightarrow \CF \GridLabel w$ so that
 $\psi$ is uniform across dimension $k$, implying
 that
 $\bpi_k(R)$ has maximum indegree at most
 $\lceil |X|/\CS{\GridLabel} k w\rceil$.
\item
Apply Lemmas \ref{lm:PermutationToEmbedding} and~\ref{lm:GeneralRouting}
 on $\GridLabel$ to obtain $\rho$.
\item
If $X\neq\emptyset$, then
 apply Lemma~\ref{lm:GeneralRouting2} on $\GridLabel$
 to obtain $\sigma$.
\item
Return.
\end{enumerate}

\begin{figure}
\begin{center}
\includegraphics[scale=1.0]{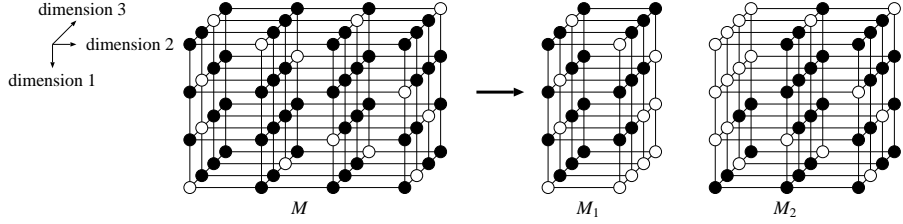}
\end{center}
\caption{An example of partitioning a host grid $\GridLabel$
 with 50 nodes in $U$ (represented by black nodes)
 into
 a grid $\GridLabel_1$ with $|U_1|=20$
 and
 a grid $\GridLabel_2$ with $|U_2|=30$.
}
\label{fig:GridPartition}
\end{figure}
 
\stepcounter{SBEStep}
\subsubsection*{Step~\theSBEStep---Partition}
\setcounter{LabelSBESeparation}{\value{SBEStep}}
Suppose $\ell_h> 2\gM d$.
Let $G_1$ and $G_2$ be children of $G$ in $\mathcal T$ and have
 $n_1$ and $n_2$ nodes, respectively.
Now $\GridLabel$ is partitioned
 into two subgrids
 $\GridLabel_1:=\Grid{\ell_1,\ldots,\ell_{h-1},m_1,\ell_{h+1},\linebreak[0]\ldots,\ell_d}$ and
 $\GridLabel_2:=\Grid{\ell_1,\ldots,\ell_{h-1},m_2,\ell_{h+1},\ldots,\ell_d}$, together
 with node sets
 $U_1\subseteq V(\GridLabel_1)$ and $U_2\subseteq V(\GridLabel_2)$
 such that
 $m_1+m_2=\ell_h+1$,
 $U_1\cup U_2=U$,
 $U_1\cap U_2=\emptyset$,
 and
 $|U_j|=n_j$ for $j=1,2$ (Fig~\ref{fig:GridPartition}).
We here duplicate the $(d-1)$-dimensional grid induced by
 $\{v\in V(\GridLabel)\mid\pi_h(v)=m_1\}$ to be shared
 by $\GridLabel_1$ and $\GridLabel_2$,
 so that $m_1+m_2$ equals not $\ell_h$ but $\ell_h+1$.
We do this because
 $\GridLabel_1$ and $\GridLabel_2$ must have
 enough numbers of nodes in $U$
 onto which $V(G_1)$ and $V(G_2)$ can be mapped, respectively.
In Fig~\ref{fig:GridPartition}, actually,
 however we partition $\GridLabel$ so that $m_1+m_2=l_h$,
 either $|V(\GridLabel_1)\cap U|<20$ or
 $|V(\GridLabel_2)\cap U|<30$.
We will prove later in Lemma~\ref{lm:GridPartition} that
 the resulting subgrids have aspect ratio at most $\mu$.

\stepcounter{SBEStep}
\subsubsection*{Step~\theSBEStep---Recursive Embedding}
\setcounter{LabelSBERecursiveEmbedding}{\value{SBEStep}}
This step recursively
 embeds $G_1$ and $G_2$ into
 $\GridLabel_1$ and $\GridLabel_2$, respectively.
We also construct a routing $\tilde\sigma_j$ for $j=1,2$,
 which draws the external edges of $G_j$ to $\tilde\psi_j(X_j)$.
Here,
 $X_j$ is the multiset of nodes of $G_j$ incident to
 distinct external edges of $G_j$, and
 $\tilde\psi_j: X_j\rightarrow \CF{\GridLabel_j}{w}$
 is a mapping uniform across dimension $k$, i.e.,
 the direction of $\CF \GridLabel{w}$.
With this routing, in the subsequent step,
 we will make a routing graph
 with sources and targets in $\tilde\psi_1(X_1)$ and
 $\tilde\psi_2(X_2)$, respectively,
 on the channel of $\CF \GridLabel{w}$ to connect cut edges.
Projecting this routing graph along dimension $k$ yields
 a $\lceil|X_1|/\CS\GridLabel k w\rceil$-$\lceil|X_2|/\CS\GridLabel k w\rceil$
 routing graph.
Thus, the cut edges will be routed on the channel of
 $\CF{\GridLabel} w$
 with an edge-congestion of
 $2\cdot\max_{j=1,2}\lceil|X_j|/\CS\GridLabel k w\rceil$
 using Lemma~\ref{lm:GeneralRouting}.
In applying Lemma~\ref{lm:GeneralRouting},
 we regard the underlying channel as the homeomorphic grid with
 grid points~$\CF\GridLabel w$.
A detailed analysis for this edge-congestion will later be provided in
 Lemmas~\ref{lm:PartialRouting}-\ref{lm:D}.
We aim to suppress the edge-congestion of $\tilde\sigma_j$
 in a similar way.
Therefore,
 we make $\tilde\psi_j$ uniform across
 not only dimension $k$ but also
 the direction $k_j$ of $\CF{\GridLabel_j}{w_j}$, where
 $w_j$ is $w$ computed for $n_j$ in the recursive procedure,
 since
 $\psi_j$ is made uniform across dimension $k_j$
 in the recursive procedure (Fig~\ref{fig:RecursiveEmbedding}).
We need to treat two more matters in constructing $\tilde\sigma_j$.

\begin{figure}
\begin{center}
\includegraphics[scale=1.0]{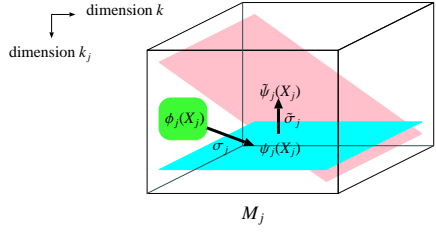}
\end{center}
\caption{Mapping $\tilde\psi_j$ and routing $\tilde\sigma_j$ 
 constructed in Step~{\theLabelSBERecursiveEmbedding}.
Symbols $\phi_j$, $\sigma_j$, and $\psi_j$ are the output
 $\phi$, $\sigma$, and $\psi$ in the embedding of $\GridLabel_j$, respectively.
The horizontal and diagonal planes represent the uniformness of
 $\psi_j(X_j)$ and $\tilde\psi_j(X_j)$, respectively.}
\label{fig:RecursiveEmbedding}
\end{figure}

First, we need a channel (a grid-like graph) containing both
 $\psi_j(X_j)\subseteq\CF{\GridLabel_j}{w_j}$ and
 $\tilde\psi_j(X_j)\subseteq\CF{\GridLabel_j}{w}$.
Just taking union of two channels of
 $\CF{\GridLabel_j}{w_j}$ and $\CF{\GridLabel_j}{w}$ would not suffice
 because $\CF{\GridLabel_j}{w_j}$ and $\CF{\GridLabel_j}{w}$ are disjoint
 if $w_j\neq w$.
We define
 the channel for $w_j$ and $w$
 as the graph homeomorphic to a $d$-dimensional grid having
 $\CF{\GridLabel_j}{w_j,w}:=\{v\in V(\GridLabel_0)\mid
 \pi_i(v)\equiv 2^{w_j-1}\ \text{or}\  2^{w-1}\pmod{2^w}\ \text{for $i=1,2$}\}
 \cap\{v\in V(\GridLabel_j)\mid\deg_{\GridLabel_j}(v)=2d\}$
 as grid points.
It should be noted that
 if $w>w_j>0$, then
 edges contained in the channel of 
 $\CF{\GridLabel_j}{w_j,w}$ but in
 the channel of neither $\CF{\GridLabel_j}{w_j}$ nor $\CF{\GridLabel_j}{w}$
 are uniquely determined by $w_j$ and $w$ and not contained any other
 channel in $\GridLabel_j$
 except the channel of $\CF{\GridLabel_j}0$.

The second matter is that
 the direction of $\CF{\GridLabel_j}{w_j,w}$ 
 may differ from~$k_j$,
 across which $\psi_j$ and $\tilde\psi_j$ are uniform.
This means that just applying Lemma~\ref{lm:GeneralRouting}
 would not guarantee a desired edge-congestion.
If we applied the algorithm of Lemma~\ref{lm:GeneralRouting}
 on the channel of $\CF{\GridLabel_j}{w_j,w}$, then
 the algorithm would be recursively called in the non-increasing order of
 side lengths of a grid having $\CF{\GridLabel_j}{w_j,w}$
 as grid points.
We here modify the order by replacing $\CF{\GridLabel_j}{w_j,w}$ with
 $\CF{\GridLabel_j}{w_j}$.
The modified algorithm yields a desired edge-congestion
 as we will later prove in Lemma~\ref{lm:PartialRouting}.

For each $j=1,2$, specifically, SBE performs the following procedures:

\begin{enumerate}
\item
Call {\ttfamily SBE}($G_j$, $X_j$, $\GridLabel_j$, $U_j$).
Let $\phi_j$, $\rho_j$, $\psi_j$, and $\sigma_j$ denote
 the output $\phi$, $\rho$, $\psi$, and $\sigma$
 of the recursive call, respectively.
\item
Construct a mapping $\tilde\psi_j:X_j\rightarrow\CF{\GridLabel_j}{w}$
 uniform across dimensions both $k$ and~$k_j$,
 where $k_j$ is the direction of $\CF{\GridLabel_j}{w_j}$, and
 $w_j:=\max\{\lfloor\frac{1}{2}
((1-\tilde\alpha-\frac{1}{d})\log_2 n_j-\log_2\frac\gM{1-\beta})\rfloor,0\}$.
\item
Let $\tilde\sigma_j$ be a routing from $\psi_j(X_j)$ to $\tilde\psi_j(X_j)$
 on the channel of $\CF{\GridLabel_j}{w_j,w}$
 obtained by using the modified algorithm of
 Lemma~\ref{lm:GeneralRouting} in which
 we recursively call the algorithm in the non-increasing order
 of side lengths of a grid having $\CF{\GridLabel_j}{w_j}$ as grid points.
\end{enumerate}

\stepcounter{SBEStep}
\subsubsection*{Step~\theSBEStep---Routing Cut and External Edges}
\setcounter{LabelSBERoutingEdges}{\value{SBEStep}}
This step constructs $\psi$, then
 completes $\rho$ and $\sigma$ using
 $\tilde\psi_j$ and $\psi$.
The routings $\rho$ and $\sigma$ are obtained
 simply using Lemma~\ref{lm:GeneralRouting}
 on the channel of $\CF\GridLabel w$ since
 $\tilde\psi_j(X_j)\subseteq\CF{\GridLabel_j}{w}\subseteq\CF\GridLabel{w}$.
The following are specific procedures of this step:
\begin{enumerate}
\item
Construct a mapping $\psi:X\rightarrow \CF \GridLabel w$
 uniform across dimension $k$.
\item
By using Lemma~\ref{lm:GeneralRouting}, construct
 $\tilde\sigma$ for the routing graph on
 the channel of $\CF \GridLabel{w}$ with node set
 $\tilde\psi_1(X_1)\cup\tilde\psi_2(X_2)$
 and edge set $\{(\tilde\psi_1(s_1),\tilde\psi_2(s_2))\mid
 s_1\in X_1\setminus X,\ s_2\in X_2\setminus X,\ (s_1,s_2)\in E(G)\}\cup
 \bigcup_{j=1,2}\{(\tilde\psi_j(s),\psi(s))\mid s\in X_j\cap X\}$
 (Fig~\ref{fig:RoutingEdges}).
It should be noted that
 $(s_1,s_2)\in E(G)$ with
 $s_1\in X_1\setminus X$ and $s_2\in X_2\setminus X$ is
 a cut edge of $G$, and
 $s\in X_j\cap X$ is a node incident to an external edge of $G$.

\begin{figure}
\begin{center}
\includegraphics[scale=1.0]{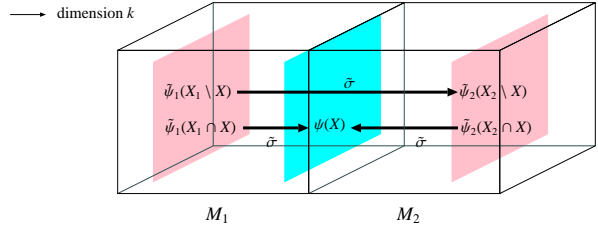}
\end{center}
\caption{Mapping $\psi$ and routing $\tilde\sigma$ constructed in
 Step~{\theLabelSBERoutingEdges}.
The left, center, and right planes represent the uniformness of
 $\tilde\psi_1(X_1)$,
 $\psi(X)$, and
 $\tilde\psi_2(X_2)$, respectively.}
\label{fig:RoutingEdges}
\end{figure}

\item
Let $\rho$ map the cut edges of $G$ onto paths obtained by
 concatenating the images of
 $\sigma_1$, $\tilde\sigma_1$, $\tilde\sigma$, $\tilde\sigma_2$, and $\sigma_2$.
Specifically, for $(s_1,s_2)\in E(G)$ with
 $s_1\in X_1\setminus X$ and $s_2\in X_2\setminus X$,
 let
\[
\begin{split}
\rho(\phi(s_1),\phi(s_2))
 := &\bigcup_{j=1,2}\left(\sigma_j(\phi(s_j),\psi_j(s_j))
 \cup\tilde\sigma_j(\psi_j(s_j),\tilde\psi_j(s_j))\right)\\
 &\quad \cup\tilde\sigma(\tilde\psi_1(s_1),\tilde\psi_2(s_2)).
\end{split}
\]
\item
Let $\sigma$ map the external edges of $G$ onto paths obtained by
 concatenating the images of
 $\sigma_j$, $\tilde\sigma_j$, $\tilde\sigma$.
Specifically,
 for $s\in X_j\cap X$ ($j=1,2$),
 let 
\[
\begin{split}
\sigma(\phi(s),\psi(s)):= & \sigma_j(\phi(s),\psi_j(s))
 \cup\tilde\sigma_j(\psi_j(s),\tilde\psi_j(s))
 \cup\tilde\sigma(\tilde\psi_j(s),\psi(s)).
\end{split}
\]
\end{enumerate}

\subsection*{Correctness}
To prove that SBE yields an output satisfying
 the conditions specified in Step~{\theLabelSBEInputOutput},
 we first prove in Lemma~\ref{lm:BaseEmbedding} below that
 we can construct $\phi$ as desired in Step~{\theLabelSBEBaseEmbedding}
 for the case $X\neq\emptyset$.
We then prove
 in Lemmas \ref{lm:GridPartition} and~\ref{lm:ChannelNonEmpty} below
 that $\GridLabel_j$ 
 defined in Step~{\theLabelSBESeparation}
 has aspect ratio at most~$\gM$, and that
 $\CF{\GridLabel_j} w$ is non-empty as well as $\CF\GridLabel w$.
These facts guarantee that a valid input is given to a child procedure
 in Step~{\theLabelSBERecursiveEmbedding}, and
 that mappings $\tilde\psi_j$ and $\psi$ can be constructed
 in Steps {\theLabelSBERecursiveEmbedding} and~{\theLabelSBERoutingEdges},
 respectively.
\begin{lemma}
\label{lm:BaseEmbedding}
Suppose $X\neq\emptyset$ in Step~{\theLabelSBEBaseEmbedding}.
We can construct a one-to-one mapping $\phi:V(G)\rightarrow U$ such that
 for any $v\in\bpi_k(U)$,
 the sum of degrees of nodes $s\in Y$ with $\bpi_k(\phi(s))=v$
 is at most $e|X|/|\bpi_k(U)|+\Delta$, where
 $Y$ is the set (not a multiset) of nodes incident to external edges of $G$.
\end{lemma}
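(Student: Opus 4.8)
The plan is to construct $\phi$ greedily by assigning nodes of $Y$ to the columns $\bpi_k(U)$ one column at a time, treating this as a balls-into-bins / scheduling problem where the ``weight'' of a node $s\in Y$ is its degree $\deg_G(s)$. First I would note that $\sum_{s\in Y}\deg_G(s)$ equals $|X|$ plus the number of internal edges of $G$ incident to $Y$; the relevant quantity for the routing graph $R$ is actually the number of \emph{external} half-edges at each column, which is exactly $\sum_{s: \bpi_k(\phi(s))=v}(\text{number of external edges at }s)\le\sum_{s:\bpi_k(\phi(s))=v}\deg_G(s)$, so it suffices to bound the total degree per column. The target per column is $|X|/|\bpi_k(U)|$ (since the total external-edge count is $|X|$), but we are bounding by the coarser $e|X|/|\bpi_k(U)|+\Delta$, which gives slack.

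The core of the argument is a capacity/averaging lemma: one must show that $|\bpi_k(U)|$ columns, each able to hold up to some number $c$ of nodes, can absorb all of $Y$ while keeping per-column total degree at most $e|X|/|\bpi_k(U)|+\Delta$. Here the constraint $|U|=n=|V(G)|$ and $|Y|\le n$ means each column gets on average $n/|\bpi_k(U)|$ nodes, and since each node has degree $\le\Delta$, a naive bound gives $\Delta n/|\bpi_k(U)|$ per column — far worse than what we want. So the greedy assignment must be degree-sensitive: I would sort nodes of $Y$ by degree and place each node into the column of currently smallest total degree (a classic LPT-style / list-scheduling argument), while also respecting the hard capacity that each column holds exactly the prescribed number of $U$-nodes it is allocated (the columns $\bpi_k(U)$ may have differing sizes $|\{v\in U:\bpi_k(v)=\text{that column}\}|$, so one packs $Y$ into these fixed-capacity bins). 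The key inequality is that when a node $s$ of degree $\delta$ is placed into the least-loaded feasible column, that column's load before placement is at most the average load over still-open columns, which is at most (total remaining degree)/(number of open columns). Bounding this sum and using that at most $\Delta$ can be added in the final step yields the $e|X|/|\bpi_k(U)|+\Delta$ bound; the factor $e$ presumably arises from a product telescoping over the fill-up phases of columns, matching the constant $e$ that already appears in the definition of $\gM$ in the algorithm setup.

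The main obstacle I expect is the interaction between the \emph{two} simultaneous constraints: the hard combinatorial constraint that exactly the right number of $U$-nodes lie in each column (so $\phi$ is a genuine bijection onto $U$ with $\phi(V(G))=U$, including the non-$Y$ nodes, which act as ``filler'' of degree possibly up to $\Delta$ as well), and the soft load-balancing constraint on total external degree per column. A clean way around this is to observe that non-$Y$ nodes have \emph{no} external edges, so they contribute $0$ to the quantity being bounded; hence one can first distribute $Y$ among the columns subject only to column capacities and the degree-balancing objective, then fill remaining slots with $V(G)\setminus Y$ arbitrarily. This decouples the problem: the real content is just ``pack weighted items (weights $\le\Delta$, total external-weight $|X|$, total count $\le n$) into $|\bpi_k(U)|$ bins of prescribed capacities summing to $\ge|Y|$, minimizing max bin external-weight,'' and the answer $\le e|X|/|\bpi_k(U)|+\Delta$ follows from a greedy/averaging argument. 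I would structure the proof as: (1) reduce to the packing statement by the decoupling observation; (2) give the greedy placement rule; (3) prove the per-column bound via the telescoping averaging estimate that produces the constant $e$; (4) conclude.
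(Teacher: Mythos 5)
Your high-level reduction matches the paper's first step: only nodes of $Y$ matter (the rest of $V(G)$ is filler contributing nothing), the problem becomes packing weighted items of weight at most $\Delta$ into the columns of $\bpi_k(U)$, and the additive $\Delta$ is the granularity of a single item. Indeed, for the case $U=V(\GridLabel)$ the paper does exactly this with equal-capacity bins of $\ell_k$ slots each and obtains $|X|/|\bpi_k(U)|+\Delta$ with no factor $e$ at all. (One caution on your first paragraph: the quantity actually needed is the number of \emph{external} edges per column, and the paper's proof reads $\sum_{s\in Y}\deg_G(s)$ as $|X|$; bounding the full in-$G$ degree per column by $e|X|/|\bpi_k(U)|+\Delta$ is neither required nor in general possible.)

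The genuine gap is your account of the constant $e$. It does not come from any telescoping in a greedy/LPT analysis; it comes from the boundary of the grid. The input only guarantees $U\supseteq\{v\in V(\GridLabel)\mid\deg_{\GridLabel}(v)=2d\}$, so columns of $\bpi_k(U)$ that are on the boundary in the other dimensions may contain very few nodes of $U$ and cannot absorb their proportional share of $Y$. In the worst case essentially all of $Y$ must be packed into the $\prod_{i\in[d]\setminus\{k\}}(\ell_i-2)$ interior columns, and the loss relative to $|\bpi_k(U)|\leq\prod_{i\neq k}\ell_i$ is
$\prod_{i\neq k}\ell_i/(\ell_i-2)<\left(2d/(2d-2)\right)^{d-1}<e$,
which uses the invariant $\min_i\ell_i>2d$ supplied by Lemma~\ref{lm:GridPartition} (available because $X\neq\emptyset$ forces the base case to have been reached through a parent call). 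Without this observation your target bound is unattainable: if most columns have capacity $O(1)$, no assignment whatsoever achieves maximum column load $|X|/|\bpi_k(U)|+\Delta$, so no greedy rule can rescue the argument. Relatedly, your averaging step (``the least-loaded open column is at most the total remaining degree over the number of open columns'') degrades precisely as columns close, which is the same phenomenon in disguise; the missing ingredient is not a cleverer placement rule but the capacity/counting argument above that quantifies how many columns can close.
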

\begin{proof}
If $U=V(\GridLabel)$, then
 we can map $Y$ onto $U$ in a trivial manner so that
 $\max_{v\in\bpi_k(U)}\linebreak[0]
\sum_{s\in Y,\ \bpi_k(\phi(s))=v}\deg_{G}(s)\leq
 \sum_{s\in Y}\deg_{G}(s)/|\bpi_k(U)|+\Delta=|X|/|\bpi_k(U)|+\Delta$.
This is because this mapping can be viewed as
 a packing of $|Y|$ items of size at most $\Delta$
 to $|\bpi_k(U)|$ bins that
 can contain the same number $|U|/|\bpi_k(U)|=\ell_k$ of items.
If $U\subset V(\GridLabel)$, i.e.,
 $U$ does not contain some nodes on the boundary of $\GridLabel$, then
 some of the bins cannot contain $\ell_k$ items.
An upper bound can be obtained in the assumption that
 $U$ contains no node on the boundary of $\GridLabel$, and that
 we must map $Y$
 onto $\prod_{i\in [d]\setminus\{k\}}(\ell_i-2)$ bins, in which
 the mapping is not one-to-one if $|Y|>\prod_{i\in [d]}(\ell_i-2)$.
Thus, we have
\begin{equation}
\label{eq:DegreeDistribution}
\begin{split}
\max_{v\in\bpi_k(U)}\sum_{\substack{s\in Y\\\bpi_k(\phi(s))=v}}\deg_{G}(s)
&\leq\frac{\sum_{s\in Y}\deg_{G}(s)}{\prod_{i\in [d]\setminus\{k\}}(\ell_i-2)}
 +\Delta
\leq\frac{|X|}{\prod_{i\in [d]\setminus\{k\}}(\ell_i-2)}\cdot
\frac{\prod_{i\in [d]\setminus\{k\}}\ell_i}{|\bpi_k(U)|}+\Delta.\\
\end{split}
\end{equation}
We have $X\neq\emptyset$ only if
 the current base embedding is called by a parent procedure.
This implies
 $\min_{i\in [d]}\ell_i>2d$ as proved in Lemma~\ref{lm:GridPartition} below,
 and therefore,
\[
\frac{\prod_{i\in [d]\setminus\{k\}}\ell_i}{\prod_{i\in [d]\setminus\{k\}}(\ell_i-2)}
=\prod_{i\in [d]\setminus\{k\}}\frac{\ell_i}{\ell_i-2}
<\left(\frac{2d}{2d-2}\right)^{d-1}<e.
\]
Combined with (\ref{eq:DegreeDistribution}), we have the lemma.
\end{proof}
\begin{lemma}
\label{lm:GridPartition}
For $j=1,2$,
 $\GridLabel_j$ defined in Step~{\theLabelSBESeparation}
 has aspect ratio at most $\gM$
 and $\min\{\ell_1,\linebreak[0]\ldots,\ell_{h-1},m_j,\ell_{h+1},\ldots,\ell_d\}>2d$.
\end{lemma}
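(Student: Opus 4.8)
The lemma has two assertions about the subgrids $\GridLabel_1,\GridLabel_2$ produced in Step~{\theLabelSBESeparation}: a bound on their aspect ratios, and the bound $\min\{\ell_1,\dots,\ell_{h-1},m_j,\ell_{h+1},\dots,\ell_d\}>2d$ on their shortest side. My plan is to first pin down exactly how $m_1$ (and hence $m_2=\ell_h+1-m_1$) is chosen, because the statement of Step~{\theLabelSBESeparation} only asserts the existence of the partition with the required cardinalities $|U_j|=n_j$ and does not commit to a concrete rule. The natural choice, and the one I would make explicit, is to take $m_1$ as small as possible subject to $\GridLabel_1$ containing at least $n_1=|V(G_1)|$ nodes eligible for $U$ (i.e. $|U\cap V(\GridLabel_1)|\ge n_1$), and symmetrically $m_2$ as small as possible for $G_2$; one then checks these two requirements are compatible with $m_1+m_2=\ell_h+1$ using $n_1+n_2=n=|U|$ and the fact that duplicating the middle hyperplane only adds nodes. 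With $m_j$ so defined, the shortest-side bound is the easy half: we only reach Step~{\theLabelSBESeparation} when $\ell_h>2\gM d$, the other side lengths $\ell_i$ ($i\ne h$) are unchanged and were already $>2d$ by the induction hypothesis (the root grid $\GridLabel_0$ satisfies this because we assumed any proper subgrid has fewer than $N$ nodes, so the recursion cannot have shrunk a side below $2d$ before now), and one shows $m_j$ cannot be too small — since each ``slab'' $\{v:\pi_h(v)=t\}$ of $\GridLabel$ contributes at most $\prod_{i\ne h}\ell_i$ nodes to $U$, having $|U\cap V(\GridLabel_1)|\ge n_1$ forces $m_1\gtrsim n_1/\prod_{i\ne h}\ell_i$, and combining with the aspect-ratio bound on $\GridLabel$ and $n=|U|\le\prod_i\ell_i$ gives $m_1>2d$; symmetrically for $m_2$.

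The aspect-ratio bound is the real content and the main obstacle. After partitioning, the candidate aspect ratios are $\ell_a/m_j$ and $m_j/\ell_b$ for dimensions $a,b\ne h$ (ratios among the unchanged $\ell_i$ are still $\le\gM$). Since $m_j\le\ell_h=\max_i\ell_i$, the ratio $m_j/\ell_b$ is at most $\ell_h/\ell_b\le\gM$, so that direction is fine. The dangerous quantity is $\ell_a/m_j$: we must show $m_j$ is not too small relative to the longest of the frozen sides. This is where the specific value $\gM\ge\frac{1}{1-\beta}\!\left(\frac{1}{7\beta}+e\beta\right)+\frac54$ chosen for SBE enters. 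The plan is: (i) lower-bound $\min\{n_1,n_2\}$ using that $G_1,G_2$ come from a $\beta$-decomposition tree, so $\min\{n_1,n_2\}\ge\lfloor(1-\beta)n\rfloor\ge(1-\beta)n-1$; (ii) lower-bound $m_j$ by the slab-counting argument $m_j\ge \lceil n_j/\prod_{i\ne h}\ell_i\rceil$ (being careful about boundary nodes of $\GridLabel$ possibly excluded from $U$ — this is the same subtlety handled in Lemma~\ref{lm:BaseEmbedding}, so I would reuse that bound, replacing $\prod_{i\ne h}\ell_i$ by $\prod_{i\ne h}(\ell_i-2)$ in the pessimistic case and using $\ell_i>2d$ to keep the loss a constant factor $<e$); (iii) upper-bound $\ell_a$ via $\prod_i\ell_i\ge n$ and the aspect-ratio hypothesis on $\GridLabel$, writing $\ell_a\le\gM\,\ell_h$ and $\prod_{i\ne h}\ell_i\ge (\ell_h/\gM)^{\,d-1}$, hence $\ell_a\le \gM^{d}\,\ell_h^{\,?}/\cdots$ — more usefully, $\ell_a/\prod_{i\ne h}\ell_i\le \gM\ell_h/\prod_{i\ne h}\ell_i$ and $n=|U|\le\prod_i\ell_i$ gives $\prod_{i\ne h}\ell_i\ge n/\ell_h$, so $\ell_a/\prod_{i\ne h}\ell_i\le \gM\ell_h^2/n$. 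Assembling, $\ell_a/m_j\le \gM\ell_h^2/n \cdot n_j^{-1}\cdot(\text{const})\cdot\prod\text{corrections}$; substituting $n_j\ge(1-\beta)n-1$ and $\ell_h^2/n\le$ (a bounded quantity coming from $\prod_i\ell_i\ge n$ together with $\ell_i\le\gM\ell_h$, which gives $\ell_h\le\gM^{(d-1)/d} n^{1/d}$, so $\ell_h^2/n\le \gM^{2(d-1)/d} n^{2/d-1}\le\gM^2$ when $d\ge2$) reduces everything to an inequality in $\beta$ and $e$ that the chosen value of $\gM$ is exactly engineered to satisfy.

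So the skeleton is: (1) make the choice of $m_1,m_2$ explicit and verify feasibility from $n_1+n_2=|U|$; (2) prove $m_j>2d$ and recall $\ell_i>2d$ for $i\ne h$, giving the min-side claim; (3) bound all post-partition side-ratios, dispatching $m_j/\ell_b\le\gM$ trivially and reducing $\ell_a/m_j\le\gM$ to an arithmetic inequality closed by the definition of $\gM$ (using $\min\{n_1,n_2\}\ge(1-\beta)n-1$, the slab count for $m_j$, $\prod_i\ell_i\ge n$, the input aspect-ratio bound $\le\gM$ on $\GridLabel$, and $\ell_i>2d$ to absorb the boundary-exclusion factor $<e$). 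I expect step~(3), specifically propagating the ``$-1$'' rounding losses and the boundary corrections through the ratio bound without blowing the constant past what $\gM$ allows, to be the fiddly heart of the proof; the slab-counting lower bound on $m_j$ and the reuse of the boundary estimate from Lemma~\ref{lm:BaseEmbedding} are the key technical devices, and the ``$+\tfrac54$'' and the precise coefficient $\frac{1}{1-\beta}(\frac{1}{7\beta}+e\beta)$ in $\gM$ are there precisely to make the final numeric inequality hold with room to spare.
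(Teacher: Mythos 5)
Your overall architecture is reasonable and you have correctly isolated the two key devices: counting nodes of $U$ per slab of the host grid across dimension $h$, and the lower bound $\min\{n_1,n_2\}\ge(1-\beta)n-1$ coming from the $\beta$-decomposition tree. You have also correctly dispatched the easy ratios ($m_j/\ell_b\le\ell_h/\ell_b\le\gM$ and the unchanged $\ell_a/\ell_b$). The problem is the central step: the one-sided chaining you propose does not close the inequality $\ell_h/m_1\le\gM$ for the specific $\gM$ the algorithm fixes. From the slab count you only get $m_1\ge n_1/\prod_{i\ne h}\ell_i$, hence $\ell_h/m_1\le\ell_h\prod_{i\ne h}\ell_i/n_1=\prod_{i}\ell_i/n_1$; relating $\prod_i\ell_i$ to $n$ costs the boundary factor $\prod_i\ell_i/(\ell_i-2)$, which is only bounded by a constant near $e$ (e.g.\ $(5/3)^2=25/9$ for $d=2$ with $\ell_i\ge 2d+1$). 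You therefore end up with a worst-case bound of about $e/(1-\beta)$ on $\ell_h/m_1$, which at $\beta=1/2$ is roughly $5.4$--$5.6$, whereas $\gM$ can be as small as $\frac{1}{1-\beta}\bigl(\frac{1}{7\beta}+e\beta\bigr)+\frac54\approx 4.54$. The same deficiency hits your route to $m_1>2d$, which reduces to the same false inequality $\gM>e/(1-\beta)$. So the ``arithmetic inequality that the chosen value of $\gM$ is engineered to satisfy,'' which you defer to, is not in fact satisfied by the bound your method produces.

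The missing idea is to use the sandwich $(m_j-2)\prod_{i\ne h}(\ell_i-2)\le n_j\le m_j\prod_{i\ne h}\ell_i$ in \emph{both} directions and to compare $m_2$ with $m_1$ through the shared middle quantity: $\frac{m_2-2}{n_2}\le\frac{1}{\prod_{i\ne h}(\ell_i-2)}<\frac{em_1}{n_1}$, whence $m_2-2<em_1n_2/n_1$. Combined with $m_1+m_2=\ell_h+1$ and $n_2/n_1\le\beta/(1-\beta)$ up to rounding, this gives $\ell_h/m_1<1+\frac{e\beta}{1-\beta}+(\text{lower-order terms})$; the additive $1$ in place of your multiplicative $e/(1-\beta)$ is exactly what the $+\frac54$ and the $\frac{1}{7\beta(1-\beta)}$ slack in $\gM$ are designed to absorb, and $m_1>2d$ then follows from $(\gM-\frac14)m_1>\ell_h-1>2\gM d-1$ rather than from the slab count directly. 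A secondary sign of trouble in your step (ii): the bound $m_j\ge n_j/\prod_{i\ne h}\ell_i$ needs no boundary correction at all (it is just $|U_j|\le|V(\GridLabel_j)|$), while replacing $\ell_i$ by $\ell_i-2$ there would assert a \emph{stronger}, generally false, lower bound on $m_j$; the inequality that legitimately involves $\prod_{i\ne h}(\ell_i-2)$ is the interior count $(m_j-2)\prod_{i\ne h}(\ell_i-2)\le n_j$, and it bounds $m_j$ from above, not below.
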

\begin{proof}
Assume without loss of generally that $m_1\leq m_2$.
Because $\GridLabel$ has aspect ratio at most $\gM$ and
 $\min_{i\in[d]}{\ell_i}>2d$
 (for otherwise,
 $\ell_h\leq\gM\cdot\min_{i\in[d]}{\ell_i}\leq 2\gM d$, and hence,
 SBE entered the base step),
 it suffices to prove that
 $m_1>2d$ and
 $\ell_h/m_1\leq\gM$.

Because $n_j=|U_j|\geq |\{v\in V(\GridLabel_j)\mid\deg_{\GridLabel_j}(v)=2d\}|$ for $j=1,2$,
 it follows that
 $(m_j-2)\prod_{i\in [d]\setminus\{h\}}(\ell_i-2)\leq n_j\leq
 m_j\prod_{i\in [d]\setminus\{h\}} \ell_i$, and hence,
\[
m_j-2\leq\frac{n_j}{\prod_{i\in [d]\setminus\{h\}}(\ell_i-2)}\leq
 m_j\prod_{i\in [d]\setminus\{h\}}\frac{\ell_i}{\ell_i-2}
 <m_j\left(\frac{2d}{2d-2}\right)^{d-1}<e m_j.
\]
We have by the inequalities that
\begin{equation}
\label{eq:GridPartition_mn}
\frac{m_2-2}{n_2}\leq\frac{1}{\prod_{i\in [d]\setminus\{h\}}(l_i-2)}
<\frac{e m_1}{n_1}.
\end{equation}

Because $n-n_1=n_2\leq\lceil\beta n\rceil\leq\beta n+1$, it follows that
\begin{equation}
\label{eq:GridPartition_n1n2}
n_1\geq(1-\beta)n-1\geq
(1-\beta)\frac{n_2-1}{\beta}-1=\frac{(1-\beta)n_2}{\beta}-\frac{1}{\beta}.
\end{equation}
Moreover, it follows that
$n\geq\prod_{i\in [d]}(\ell_i-2)>(2\gM d-2)(2d-2)^{d-1}
 \geq 8\gM-4$, which is larger than $7\gM$ because $\gM>4$.
Hence, it follows that
\begin{equation}
\label{eq:GridPartition_n1}
n_1\geq (1-\beta)n-1>7\gM(1-\beta)-1>7e\beta.
\end{equation}
Thus, by (\ref{eq:GridPartition_mn})--(\ref{eq:GridPartition_n1}),
\[
m_2-2<\frac{e m_1n_2}{n_1}
 \leq\frac{e m_1}{1-\beta}\left(\beta+\frac{1}{n_1}\right)
 <\frac{e m_1}{1-\beta}\left(\beta+\frac{1}{7e\beta}\right)
\leq\left(\gM-\frac{5}{4}\right)m_1,
\]
 by which we obtain
 $(\gM-\frac{1}{4})m_1> m_1+m_2-2=\ell_h-1>2\gM d-1$.
Therefore, 
\begin{align*}
m_1&>\frac{2\gM d-1}{\gM-\frac 1 4}=\frac{2d(\gM-\frac 1{2d})}{\gM-\frac 1 4}
\geq 2d,\ \text{and}\\
\frac{\ell_h}{m_1}&<\gM-\frac{1}{4}+\frac{1}{m_1}
<\gM-\frac{1}{4}+\frac{1}{2d}\leq\gM.
\end{align*}
\end{proof}
\begin{lemma}
\label{lm:ChannelNonEmpty}
For $j=1,2$, $\CF{\GridLabel_j} w$ in Step~{\theLabelSBERecursiveEmbedding}
 is non-empty.
\end{lemma}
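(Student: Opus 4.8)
The plan is to show that $\CF{\GridLabel_j}{w}$ is non-empty by exhibiting at least one node $v\in V(\GridLabel_j)$ with $\deg_{\GridLabel_j}(v)=2d$ (i.e.\ an interior node of $\GridLabel_j$) whose coordinates satisfy $\pi_i(v)\equiv 2^{w-1}\pmod{2^w}$ for $i=1,2$. Equivalently, writing $\GridLabel_j=\Grid{\ell'_i}_{i\in[d]}$ with $\ell'_i=\ell_i$ for $i\neq h$ and $\ell'_h=m_j$, it suffices that along every dimension $i$ the open interval $(1,\ell'_i)$ of interior coordinates contains a residue class representative required by $\V{w}$. For $i\notin\{1,2\}$ any interior coordinate works, so we only need $2\leq 2^{w-1}\cdot\mathrm{(odd)}\leq \ell'_i-1$ to be solvable for $i=1,2$; the tightest such requirement is simply that $2^{w-1}<\ell'_i-1$, or more crudely $2^{w}\le\ell'_i$, for $i=1,2$ — and since $w\ge1$ forces $2^{w-1}\ge 1$, we in fact need $\ell'_i\ge 3$ together with $2^{w-1}\le\ell'_i-2$.

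First I would dispose of the trivial case $w=0$: then $\V{0}=V(\GridLabel_0)$ and $\CF{\GridLabel_j}{0}$ is just the set of interior nodes of $\GridLabel_j$, which is non-empty because $\min_i\ell'_i>2d\ge 4$ by Lemma~\ref{lm:GridPartition}. So assume $w\ge1$. The key quantitative step is to bound $2^w$ from above in terms of the side lengths of $\GridLabel_j$. By the definition of $w$ in Step~{\theLabelSBEChannelConfiguration}, $2^w\le 2^{\frac12((1-\tilde\alpha-\frac1d)\log_2 n-\log_2\frac{\gM}{1-\beta})}= (n^{\,1-\tilde\alpha-\frac1d}\cdot\frac{1-\beta}{\gM})^{1/2}$. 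Since $\tilde\alpha\ge 1-\tfrac2d$ we have $1-\tilde\alpha-\tfrac1d\le\tfrac1d$, hence $2^w\le (n^{1/d}\cdot\frac{1-\beta}{\gM})^{1/2}\le (\ell_h\cdot\tfrac{1-\beta}{\gM})^{1/2}$, using $n\le\prod_i\ell_i\le \ell_h^{\,d}$ from the aspect-ratio bound (here $\ell_h=\max_i\ell_i$). On the other hand, for $i=1,2$, $\ell'_i\ge\ell_h/\gM$ since $\GridLabel_j$ has aspect ratio at most $\gM$ (Lemma~\ref{lm:GridPartition}). Combining, $2^w\le(\ell_h(1-\beta)/\gM)^{1/2}$ while $\ell'_i\ge\ell_h/\gM$; since $\ell_h>2\gM d$ (we are in the recursive, non-base case) one checks $\ell'_i/2^w\ge (\ell_h/\gM)/(\ell_h(1-\beta)/\gM)^{1/2}=(\ell_h/(\gM(1-\beta)))^{1/2}>(2d)^{1/2}>2$ for $d\ge 2$, so $2^{w-1}\le 2^w<\ell'_i/2\le\ell'_i-2$ (using $\ell'_i>2d\ge4$), giving the desired interior coordinate $2^{w-1}$ in dimension $i$.

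Putting the two dimensions together, $v$ with $\pi_1(v)=\pi_2(v)=2^{w-1}$ and any interior coordinate in the remaining dimensions lies in $\V w\cap V(\GridLabel_j)$ and is interior to $\GridLabel_j$, so $\CF{\GridLabel_j}{w}\neq\emptyset$; the same computation with $\GridLabel$ in place of $\GridLabel_j$ (using that $\GridLabel$ too has aspect ratio $\le\gM$ and $\min_i\ell_i>2d$) shows $\CF\GridLabel w\ne\emptyset$ as well. The main obstacle I anticipate is getting the constants in the chain of inequalities to line up cleanly — in particular making sure the factor $\frac{\gM}{1-\beta}$ introduced in the definition of $w$ is exactly what is needed to absorb the aspect-ratio distortion of $\GridLabel_j$ relative to $\GridLabel$, and verifying the edge case where $m_j$ is the smallest side. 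I expect this is precisely why that factor was placed in the definition of $w$ in the first place, so the bookkeeping should close, but it requires care rather than cleverness.
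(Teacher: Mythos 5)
Your proof is correct in substance and follows essentially the same route as the paper: both arguments reduce the claim to showing $\ell'_i\ge 2\cdot 2^w$ for $i=1,2$ by combining the upper bound $2^w\le((1-\beta)n^{1/d}/\gM)^{1/2}$ from the definition of $w$ with a lower bound on the side lengths of $\GridLabel_j$ (the paper goes through $n_j>(1-\beta)^2 n$ and the child's aspect ratio, you go through $\ell_h>2\gM d$ and $\ell'_i\ge\ell_h/\gM$). Two small repairs are needed: first, the coordinates in $\V{w}$ are absolute coordinates of $\GridLabel_0$ and $\GridLabel_j$ is in general not anchored at $(1,\ldots,1)$, so you cannot name the witness node by $\pi_1(v)=\pi_2(v)=2^{w-1}$; instead note that your bound gives $\ell'_i-2\ge 2^w$, so the $\ell'_i-2$ interior coordinates of $\GridLabel_j$ in dimension $i$ necessarily contain a representative of the residue class $2^{w-1}\pmod{2^w}$. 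Second, $\ell'_i\ge\ell_h/\gM$ does not follow from the aspect ratio of $\GridLabel_j$ alone (its longest side may be shorter than $\ell_h$); for $i\ne h$ it follows from the parent's aspect ratio, and for $i=h$ from the bound $\ell_h/m_1\le\gM$ established in Lemma~\ref{lm:GridPartition} --- both available, so the argument closes.
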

\begin{proof}
Suppose
 $\GridLabel_j=\Grid{\ell^j_i}_{i\in [d]}$ and
 $\lmin^j:=\min_{i\in[d]}{\ell^j_i}$.
We can observe by the definition of $\CF{\GridLabel_j} w$
 that
 if $\lfloor(\ell^j_i-2)/2^w\rfloor>0$ for $i=1,2$, and if
 $\ell^j_i-2>0$ for $3\leq i\leq d$, then
 $\CF{\GridLabel_j} w$ is non-empty.
Because $\lmin^j>2d>4$ by Lemma~\ref{lm:GridPartition},
 the lemma holds if $w=0$.
Assume $w\geq 1$.
Then, the lemma is implied by $\lmin^j/2^w\geq 2$.
The assumption $w\geq 1$ implies that
 $d>1/(1-\alpha)$ by the definition of $w$, and that
\begin{equation}
\label{eq:ChannelNonEmpty_w}
2^w
 \leq\left(\frac{(1-\beta)n^{1-\tilde\alpha-\frac 1 d}}{\gM}\right)^{1/2}
 =\left(\frac{(1-\beta)n^{\min\{1/d,1-\alpha-\frac 1 d\}}}{\gM}\right)^{1/2}
 \leq\left(\frac{(1-\beta)n^{1/d}}{\gM}\right)^{1/2}.
\end{equation}
As estimated in (\ref{eq:GridPartition_n1n2}) and (\ref{eq:GridPartition_n1}),
 it follows that $n_j\geq (1-\beta)n-1$ and $n>7\gM$.
Therefore,
\begin{equation}
\label{eq:Ratio_njn}
n_j\geq\left(1-\beta-\frac 1 n\right)n>\left(1-\beta-\frac 1 {7\gM}\right)n
>\left(1-\beta-(1-\beta)\beta\right)n=(1-\beta)^2n.
\end{equation}
Because
 $\GridLabel_j$ has aspect ratio at most $\gM$
 by Lemma~\ref{lm:GridPartition},
 it follows that
\begin{equation}
\label{eq:nj_lmin}
 n_j^{1/d} \leq\max_{i\in[d]}\{l^j_i\}\leq\gM\lmin^j.
\end{equation}
Combining
 (\ref{eq:ChannelNonEmpty_w}), (\ref{eq:Ratio_njn}), and~(\ref{eq:nj_lmin}),
\[
\frac{\lmin^j}{2^w}
\geq\frac{n_j^{1/d}}{\gM}\left(\frac{\gM}{(1-\beta)n^{1/d}}\right)^{1/2}
>\left(\frac{(1-\beta)^{\frac 4 d -1}n^{1/d}}{\gM}\right)^{1/2}
\geq\left(\frac{(1-\beta)n^{1/d}}{\gM}\right)^{1/2}
\geq 2^w\geq 2.
\]
\end{proof}

\subsection*{Edge-Congestion}
We first estimate the edge-congestion of $\tilde\sigma_j$ 
 and $\tilde\sigma$
 in each recursive call of SBE.
Then, we prove the total edge-congestion.
In what follows, 
 for an $n$-node guest graph given to SBE as input,
 we will use $D^{w}(n)$
 to denote the maximum value of
 $\max_{i\in [d]}\lceil Cn^\alpha/\CS H i w\rceil$
 over all feasible $d$-dimensional host grids $H$.
\begin{lemma}
\label{lm:PartialRouting}
For $j=1,2$, $\tilde\sigma_j$ in Step~{\theLabelSBERecursiveEmbedding}
 imposes an edge-congestion at most
 $2D^w(n_j)$
 on the channel of $\CF{\GridLabel_j}{w_j,w}$.
\end{lemma}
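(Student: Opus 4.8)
The plan is to first identify the routing graph that $\tilde\sigma_j$ realises, then reduce the claim to a degree bound on one projection of it together with the correctness of the modified form of Lemma~\ref{lm:GeneralRouting} used in Step~{\theLabelSBERecursiveEmbedding}. If $X_j=\emptyset$ there is nothing to route and the bound is vacuous, so assume $X_j\neq\emptyset$. The routing graph $R_j$ handled by $\tilde\sigma_j$ has node set $\psi_j(X_j)\cup\tilde\psi_j(X_j)$ and exactly one edge $(\psi_j(s),\tilde\psi_j(s))$ per $s\in X_j$, so $|E(R_j)|=|X_j|\leq Cn_j^\alpha$ because $G_j\in V(\mathcal T)$ and $\mathcal T$ has expansion $Cn^\alpha$. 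Since $\tilde\sigma_j$ is produced by applying the algorithm of Lemma~\ref{lm:GeneralRouting} to the grid homeomorphic to the channel of $\CF{\GridLabel_j}{w_j,w}$ with dimension $k_j$ removed first, it suffices to show (a) that $\bpi_{k_j}(R_j)$ is a $p$-$q$ routing graph with $\max\{p,q\}\leq D^w(n_j)$, and (b) that running the recursion in the non-increasing order of the side lengths of the grid having $\CF{\GridLabel_j}{w_j}$ as grid points does not spoil the edge-congestion bound $2\max\{p,q\}$ of Lemma~\ref{lm:GeneralRouting}.

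For (a): $\psi_j$ is the output mapping $\psi$ of the recursive call on $\GridLabel_j$, hence uniform across the direction $k_j$ of $\CF{\GridLabel_j}{w_j}$, so at most $\lceil|X_j|/\CS{\GridLabel_j}{k_j}{w_j}\rceil$ sources of $R_j$ have the same image under $\bpi_{k_j}$; and $\tilde\psi_j$ was built uniform across $k_j$ as well, so at most $\lceil|X_j|/\CS{\GridLabel_j}{k_j}{w}\rceil$ targets do. Thus $\bpi_{k_j}(R_j)$ is a $p$-$q$ routing graph with $p\leq\lceil Cn_j^\alpha/\CS{\GridLabel_j}{k_j}{w_j}\rceil$ and $q\leq\lceil Cn_j^\alpha/\CS{\GridLabel_j}{k_j}{w}\rceil$. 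Now $w$, as a function of the guest size, is non-decreasing and $n_j\leq n$, so $w_j\leq w$; and $\CS{\GridLabel_j}{i}{w}$ is non-increasing in $w$ for every $i$ (raising $w$ halves the number of grid points of $\CF{\GridLabel_j}{w}$ in each of the first two coordinate directions). Hence $\CS{\GridLabel_j}{k_j}{w_j}\geq\CS{\GridLabel_j}{k_j}{w}$ and both $p,q\leq\lceil Cn_j^\alpha/\CS{\GridLabel_j}{k_j}{w}\rceil$. Finally $\GridLabel_j$ together with $U_j$ is a feasible $d$-dimensional host grid for an $n_j$-node guest (aspect ratio at most $\gM$ by Lemma~\ref{lm:GridPartition}, $|U_j|=n_j$, and $U_j$ contains all degree-$2d$ nodes), so $\lceil Cn_j^\alpha/\CS{\GridLabel_j}{k_j}{w}\rceil$ is one of the quantities maximised in the definition of $D^w(n_j)$, giving $\max\{p,q\}\leq D^w(n_j)$.

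For (b): view the channel as the grid $\GridLabel'$ on its grid points. The proof of Lemma~\ref{lm:GeneralRouting} carries over step by step once one checks that at each recursive level the edge set of the current sub-routing-graph splits, along the dimension being removed, into slabs carrying at most $\max\{p,q\}$ colours each; in the unmodified algorithm this rests on removing a longest remaining dimension, and in the modified algorithm it rests on the $\CF{\GridLabel_j}{w_j}$-order being a non-increasing order of the side lengths of $\GridLabel'$ (which has to be checked). Granting this, the dimension-$k_j$-type edges incur congestion at most $p+q\leq 2\max\{p,q\}$ and the recursion preserves the $\max\{p,q\}$-$\max\{p,q\}$ property of the two-largest-dimension projection, so the whole routing has edge-congestion at most $2\max\{p,q\}\leq 2D^w(n_j)$, which is the claim.

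The step I expect to be the main obstacle is exactly this last check: the $(w_j,w)$-channel must carry both $\psi_j(X_j)\subseteq\CF{\GridLabel_j}{w_j}$ and $\tilde\psi_j(X_j)\subseteq\CF{\GridLabel_j}{w}$, and one has to confirm that this hybrid grid-like graph is a legitimate input to the modified algorithm with $k_j$ removed first, with no extra constant leaking into the congestion. The tool for this is $w-w_j=O(1)$, which follows from $n_j=\Theta(n)$ -- concretely $(1-\beta)^2n<n_j\leq\beta n+1$, as in the proof of Lemma~\ref{lm:ChannelNonEmpty} -- so that $\GridLabel'$ and $\CF{\GridLabel_j}{w_j}$ have comparable side lengths in the first two coordinates and agree elsewhere; one then verifies that the precise definitions of $w$ and of the channel keep each slab below $\max\{p,q\}$ colours. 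The remaining ingredients -- the projected-degree bounds, the monotonicity of $\CS{\GridLabel_j}{k_j}{w}$ in $w$, and the feasibility of $\GridLabel_j$ -- are routine given the earlier lemmas.
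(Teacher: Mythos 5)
Your part (a) is exactly the paper's argument: the projected routing graph $\bpi_{k_j}(R)$ is $p$-$q$ with $p=\lceil|X_j|/\CS{\GridLabel_j}{k_j}{w_j}\rceil$ and $q=\lceil|X_j|/\CS{\GridLabel_j}{k_j}{w}\rceil$, and $w_j\leq w$ gives $p\leq q\leq D^w(n_j)$. The problem is part (b). You correctly identify that the substance of the lemma is showing that the modified recursion order still yields edge-congestion $2\max\{p,q\}$, but you then write ``which has to be checked'' and ``Granting this,'' so the one step that actually constitutes the proof is left unproven. Worse, the tool you propose for closing it --- $w-w_j=O(1)$ via $n_j=\Theta(n)$, so that the channel of $\CF{\GridLabel_j}{w_j,w}$ and the grid on $\CF{\GridLabel_j}{w_j}$ have ``comparable'' side lengths --- is not the mechanism that makes the argument work, and a comparability-up-to-constants argument would at best leak a constant factor into the congestion, which you yourself say must not happen.

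The actual check is both sharper and simpler, and uses no quantitative relation between $w$ and $w_j$ at all. Write $\GridLabel'=\Grid{\ell'_1,\ldots,\ell'_d}$ and $\GridLabel''=\Grid{\ell''_1,\ldots,\ell''_d}$ for the grids having $\CF{\GridLabel_j}{w_j}$ and $\CF{\GridLabel_j}{w_j,w}$ as grid points. The only two facts needed are: (i) $\ell'_i\leq\ell''_i$ for every $i$, which follows from the containment $\CF{\GridLabel_j}{w_j}\subseteq\CF{\GridLabel_j}{w_j,w}$, not from any bound on $w-w_j$; and (ii) the total number of routed edges satisfies $|X_j|\leq\CS{\GridLabel_j}{k_j}{w_j}\cdot p=\bigl(\prod_{i\neq k_j}\ell'_i\bigr)p$ because the \emph{sources} $\psi_j(X_j)$ lie in $\CF{\GridLabel_j}{w_j}$ and are uniform across $k_j$. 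Then slicing into $\ell'_{k_j}=\max_i\ell'_i$ edge-disjoint pieces (for $d=2$ by counting, $|E(R_i)|\leq\lceil\ell'_2q/\ell'_1\rceil\leq q$; for $d\geq 3$ by the edge-coloring of Lemma~\ref{lm:GeneralRouting} applied to the $\ell'_2p$-$\ell'_2q$ graph $\bpi_2(\bpi_1(R))$) keeps each piece at $\max\{p,q\}$ colours precisely because the slice count and the per-section degree bounds are both measured on $\GridLabel'$; and (i) guarantees $\GridLabel''$ has at least $\ell'_{k_j}$ slices to receive them, so the recursion goes through with congestion exactly $2\max\{p,q\}=2q$. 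Without spelling out (i) and (ii) the lemma is not proved, since an unmodified application of Lemma~\ref{lm:GeneralRouting} to $\GridLabel''$ would order the recursion by the $\ell''_i$ and would give a bound in terms of sections of $\CF{\GridLabel_j}{w_j,w}$ rather than $D^w(n_j)$.
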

\begin{proof}
As described in Step~{\theLabelSBERecursiveEmbedding},
 in constructing $\tilde\sigma_j$, we recursively
 call the algorithm of Lemma~\ref{lm:GeneralRouting} 
 in the non-increasing order of side lengths of a grid
 having not $\CF{\GridLabel_j}{w_j,w}$ but $\CF{\GridLabel_j}{w_j}$
 as grid points.
We can prove
 through a similar argument to that of Lemma~\ref{lm:GeneralRouting}
 that
 the modified algorithm achieves
 an edge-congestion of
 $2\lceil |X_j|/\CS{\GridLabel_j}{k_j}{w}\rceil
 \leq 2\lceil Cn_j^\alpha/\CS{\GridLabel_j}{k_j}{w}\rceil\leq 2D^w(n_j)$,
 noting that
 $\CF{\GridLabel_j}{w_j}\cup\CF{\GridLabel_j}{w}
\subseteq\CF{\GridLabel_j}{w_j,w}$
 and $|\pi_i(\CF{\GridLabel_j}{w_j})|\geq |\pi_i(\CF{\GridLabel_j}{w})|$
 for $i\in [d]$.
The following is an explicit proof.

Suppose that grids having
 $\CF{\GridLabel_j}{w_j}$ and $\CF{\GridLabel_j}{w_j,w}$
 as grid points are
 $\GridLabel':=\Grid{\ell'_1,\ldots,\ell'_d}$ and
 $\GridLabel'':=\Grid{\ell''_1,\ldots,\ell''_d}$, respectively.
I.e.,
 $\ell'_i=|\pi_i(\CF{\GridLabel_j}{w_j})|$ and
 $\ell''_i=|\pi_i(\CF{\GridLabel_j}{w_j,w})|$ for $i\in [d]$.
Because $\CF{\GridLabel_j}{w_j}\subseteq\CF{\GridLabel_j}{w_j,w}$,
 $\ell'_i\leq \ell''_i$ for each $i\in [d]$.
In this proof,
 we assume without loss of generality that
 $k_j=1$ and
 $\ell'_1=\ell'_{k_j}\geq\ell'_2\cdots\geq \ell'_{d}$.
Moreover, we regard the routing graph $R$
 from sources $\psi_j(X_j)$ to targets $\tilde\psi_j(X_j)$ on
 the channel of $\CF{\GridLabel_j}{w_j,w}$ as
 its corresponding routing graph on $\GridLabel''$.
Then, 
 $\bpi_{1}(R)$ is a $p$-$q$ routing graph
 with node set $\bpi_{1}(V(\GridLabel''))$,
 where $p:=\lceil|X_j|/\CS{\GridLabel_j}{k_j}{w_j}\rceil
=\lceil |X_j|/|\bpi_{k_j}(\CF{\GridLabel_j}{w_j})|\rceil$
 and $q:=\lceil|X_j|/\CS{\GridLabel_j}{k_j}{w}\rceil
=\lceil |X_j|/|\bpi_{k_j}(\CF{\GridLabel_j}{w})|\rceil$,
 because $\psi_j:X_j\rightarrow\CF{\GridLabel_j}{w_j}$
 and $\tilde\psi_j:X_j\rightarrow\CF{\GridLabel_j}{w}$ are uniform
 across dimension $k_j=1$.
Because $w_j\leq w$, it follows that
 $|\pi_i(\CF{\GridLabel_j}{w_j})|\geq|\pi_i(\CF{\GridLabel_j}{w})|$
 and
 $|\bpi_i(\CF{\GridLabel_j}{w_j})|\geq|\bpi_i(\CF{\GridLabel_j}{w})|$
 for $i\in [d]$.
This implies that $p\leq q$.

If $d=2$, then
 $R$ has $|X_j|\leq \ell'_{2}p\leq\ell'_2q$ edges.
Therefore,
 we can decompose $R$ into $\ell'_{1}$ edge-disjoint subgraphs
 $R_1,\ldots,R_{\ell'_{1}}$ so that
 $\bigcup_{i=1}^{\ell'_{1}} E(R_i)=E(R)$ and
 $|E(R_i)|\leq\lceil\ell'_{2}q/\ell'_{1}\rceil\leq q$
 for $i\in [\ell'_{1}]$.
Since $\ell''_{1}\geq\ell'_{1}$, $\bpi_{1}(R_i)$ can be routed
 with an edge-congestion at most $q$
 on the $1$-dimensional subgrid of $\GridLabel''$ induced by the nodes
 $\{v\in V(\GridLabel'')\mid\pi_{1}(v)=i\}$ for each
 $1\leq i\leq\ell'_{1}\leq\ell''_{1}$.
Thus, we can route $R$ on $\GridLabel''$ with an edge-congestion at most $2q$
 as 
 in the proof of Lemma~\ref{lm:GeneralRouting}.

If $d\geq 3$, then 
 since $\ell'_{2}=|\pi_{2}(\CF{\GridLabel_j}{w_j})|
\geq |\pi_{2}(\CF{\GridLabel_j}{w})|$,
 $\bpi_2(\bpi_1(R))$ is
 an $\ell'_{2}p$-$\ell'_{2}q$
 routing graph with node set $\bpi_{2}(\bpi_{1}(V(\GridLabel'')))$.
Using an edge-coloring
 described in the proof of Lemma~\ref{lm:GeneralRouting}, therefore,
 we can decompose $R$ into $\ell'_{1}$ edge-disjoint subgraphs
 $R_1,\ldots,R_{\ell'_{1}}$
 such that $\bpi_{2}(\bpi_{1}(R_i))$ is a $\max\{p,q\}$-$\max\{p,q\}$
 routing graph with node set $\bpi_{2}(\bpi_{1}(V(\GridLabel'')))$.
Since $\ell''_{1}\geq\ell'_{1}$,
 $\bpi_{1}(R_{i})$ can inductively  be routed
 with an edge-congestion at most $2\cdot\max\{p,q\}=2q$
 on the $(d-1)$-dimensional subgrid
 induced by the nodes $\{v\in V(\GridLabel'')\mid \pi_{1}(v)=i\}$
 for each $1\leq i\leq\ell'_{1}\leq\ell''_{1}$.
Thus, we can route $R$ on $\GridLabel''$ with
 an edge-congestion at most $2q$
 as in the proof of Lemma~\ref{lm:GeneralRouting}.
\end{proof}
\begin{lemma}
\label{lm:CompletionRouting}
The routing $\tilde\sigma$ in Step~{\theLabelSBERoutingEdges}
 imposes an edge-congestion at most
$2\cdot\max\{2D^w(n_j),\linebreak[0]D^w(n_j)+D^w(n)\}$
 on the channel of~$\CF\GridLabel{w}$.
\end{lemma}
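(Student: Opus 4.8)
The plan is to apply Lemma~\ref{lm:GeneralRouting} to the routing graph $R$ assembled in Step~{\theLabelSBERoutingEdges}, after identifying the channel of $\CF\GridLabel w$ with the homeomorphic $d$-dimensional grid $\GridLabel^\ast$ whose grid points are $\CF\GridLabel w$. Because the direction $k$ of $\CF\GridLabel w$ minimizes $\CS\GridLabel i w=|\bpi_i(\CF\GridLabel w)|$, it is the dimension of largest side length of $\GridLabel^\ast$ and thus plays the role of the dimension $h$ in Lemma~\ref{lm:GeneralRouting}; unlike in Lemma~\ref{lm:PartialRouting}, no modification of the routing algorithm is needed here, since $\tilde\psi_1$, $\tilde\psi_2$, and $\psi$ are all uniform across exactly this dimension $k$. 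Hence it suffices to produce $p,q$ such that $\bpi_k(R)$ is a $p$-$q$ routing graph on $\bpi_k(V(\GridLabel^\ast))$ with $p\le 2D^w(n_j)$ and $q\le D^w(n_j)+D^w(n)$, reading $D^w(n_j)$ as $\max_{j\in\{1,2\}}D^w(n_j)$; then Lemma~\ref{lm:GeneralRouting} gives edge-congestion $2\max\{p,q\}$, which is the asserted bound.

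The first step would be to classify the edges of $R$ according to which mapping supplies their source and which supplies their target. Every occurrence in the multiset $X_1$ contributes exactly one edge of $R$ whose source is $\tilde\psi_1(\cdot)$: an occurrence in $X_1\setminus X$ is the $G_1$-endpoint $s_1$ of a cut edge $(s_1,s_2)$ of $G$ and gives $(\tilde\psi_1(s_1),\tilde\psi_2(s_2))$, while an occurrence in $X_1\cap X$ carries an external edge of $G$ and gives $(\tilde\psi_1(s),\psi(s))$. In addition, every occurrence in $X_2\cap X$ contributes an edge with source $\tilde\psi_2(\cdot)$ and target $\psi(\cdot)$, and every occurrence in $X_2\setminus X$ contributes an edge with target $\tilde\psi_2(\cdot)$. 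Consequently the out-degree of any $v$ in $\bpi_k(R)$ is at most $|\{s\in X_1:\bpi_k(\tilde\psi_1(s))=v\}|+|\{s\in X_2:\bpi_k(\tilde\psi_2(s))=v\}|\le\lambda_k(\tilde\psi_1)+\lambda_k(\tilde\psi_2)$, and its in-degree is at most $|\{s\in X_2:\bpi_k(\tilde\psi_2(s))=v\}|+|\{s\in X:\bpi_k(\psi(s))=v\}|\le\lambda_k(\tilde\psi_2)+\lambda_k(\psi)$.

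Finally I would bound these quantities using uniformness and the expansion hypothesis. Since $\tilde\psi_j$ is uniform across dimension $k$, we have $\lambda_k(\tilde\psi_j)=\lceil|X_j|/\CS{\GridLabel_j}{k}{w}\rceil$; since $G_j\in V(\mathcal T)$ has at most $Cn_j^\alpha$ external edges, $|X_j|\le Cn_j^\alpha$, and since $\GridLabel_j$ is a feasible host grid for an $n_j$-node guest graph (Lemmas~\ref{lm:GridPartition} and~\ref{lm:ChannelNonEmpty}), this is at most $\lceil Cn_j^\alpha/\CS{\GridLabel_j}{k}{w}\rceil\le D^w(n_j)$; likewise $\lambda_k(\psi)=\lceil|X|/\CS\GridLabel k w\rceil\le\lceil Cn^\alpha/\CS\GridLabel k w\rceil\le D^w(n)$. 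This yields $p\le D^w(n_1)+D^w(n_2)\le 2\max_jD^w(n_j)$ and $q\le D^w(n_2)+D^w(n)\le\max_jD^w(n_j)+D^w(n)$, which finishes the proof. The only obstacle I anticipate is careful bookkeeping rather than a new idea: one must notice that external edges of $G$ incident to $G_2$ put $\tilde\psi_2$ on the \emph{source} side of $R$ as well, so the out-degree picks up a $D^w(n_2)$ term and not merely a $D^w(n_1)$ term, and that the in-degree genuinely mixes a $\tilde\psi_2$ contribution with a $\psi$ contribution; one should also double-check that Lemma~\ref{lm:GeneralRouting} applies verbatim to $\GridLabel^\ast$, which it does precisely because $k$ is simultaneously the direction of $\CF\GridLabel w$ and the longest dimension of $\GridLabel^\ast$.
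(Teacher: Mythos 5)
Your proposal is correct and follows essentially the same route as the paper: identify the channel with the homeomorphic grid, observe that the direction $k$ is its longest dimension so Lemma~\ref{lm:GeneralRouting} applies with $h=k$, and bound the out- and in-degrees of $\bpi_k(R)$ by $\lambda_k(\tilde\psi_1)+\lambda_k(\tilde\psi_2)$ and $\lambda_k(\tilde\psi_2)+\lambda_k(\psi)$ respectively, each term being at most the appropriate $D^w(\cdot)$ by uniformness and the expansion bound. Your explicit bookkeeping of which mapping supplies sources versus targets (in particular that $X_2\cap X$ contributes $\tilde\psi_2$ on the source side) is exactly the content the paper conveys via its figure.
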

\begin{proof}
Because $\tilde\psi_1$, $\tilde\psi_2$, and
 $\psi$
 are uniform across dimension $k$,
 it follows that
 $\lambda_{k}(\tilde\psi_j)
 =\lceil|X_j|/\CS{\GridLabel_j}{k}{w}\rceil
 \leq\lceil Cn_j^\alpha/\CS{\GridLabel_j}{k}{w}\rceil\leq D^w(n_j)$
 for $j=1,2$, and that
 $\lambda_{k}(\psi)
 =\lceil |X|/\CS{\GridLabel}{k}{w}\rceil
 \leq\lceil Cn^\alpha/\CS{\GridLabel}{k}{w}\rceil\leq D^w(n)$.
Therefore,
 if $R$ is the routing graph for $\tilde\sigma$ on the channel
 of $\CF\GridLabel{w}$,
 then
 $\bpi_{k}(R)$ has
 maximum outdegree at most
 $\lambda_{k}(\tilde\psi_1)+\lambda_{k}(\tilde\psi_2)\leq 2D^w(n_j) $ and
 maximum indegree at most
 $\lambda_{k}(\tilde\psi_2)+\lambda_{k}(\psi)\leq D^w(n_j)+D^w(n)$
 as shown in Fig.~\ref{fig:RoutingEdges}.
By Lemma~\ref{lm:GeneralRouting}, therefore,
 $\tilde\sigma$ has a desired edge-congestion.
\end{proof}
\begin{lemma}
\label{lm:D}
For $j=1,2$,
 $\max_{i\in [d]}\lceil Cn_j^\alpha/\CS{\GridLabel_j}{i}{w}\rceil$
 is $O(C)$ if $d>1/(1-\alpha)$, and $O(Cn_j^{\alpha-1+\frac 1 d})$ otherwise.
\end{lemma}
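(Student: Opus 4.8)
The plan is to bound $\CS{\GridLabel_j}{i}{w}=|\bpi_i(\CF{\GridLabel_j}{w})|$ from below, uniformly in $i$, and then to substitute the value of $w$ fixed for the parent graph in Step~\theLabelSBEChannelConfiguration. Write $\GridLabel_j=\Grid{\ell^j_i}_{i\in[d]}$ and $\lmin^j=\min_{i\in[d]}\ell^j_i$. The set $\CF{\GridLabel_j}{w}$ is a Cartesian product of coordinate sets: $v\in\CF{\GridLabel_j}{w}$ precisely when $2\le\pi_i(v)\le\ell^j_i-1$ for every $i$ and, in addition, $\pi_i(v)\equiv 2^{w-1}\pmod{2^w}$ for $i=1,2$. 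Hence $\CS{\GridLabel_j}{i}{w}=|\CF{\GridLabel_j}{w}|/a_i$, where $a_i$ is the number of admissible values of coordinate $i$; thus $a_i=\ell^j_i-2$ for $3\le i\le d$, while for $i=1,2$ the number $a_i$ lies between $\lfloor(\ell^j_i-2)/2^w\rfloor$ and $\lceil(\ell^j_i-2)/2^w\rceil$. The quantitative core of the proof is the bound $\CS{\GridLabel_j}{i}{w}\ge n_j^{1-1/d}/(2^{d+2w+3}\gM)$ for every $i$.

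To get this, I would bound $|\CF{\GridLabel_j}{w}|$ from below and $\max_i a_i$ from above. For the first, Lemma~\ref{lm:GridPartition} gives $\ell^j_i>2d\ge 4$, and the proof of Lemma~\ref{lm:ChannelNonEmpty} gives $\lmin^j\ge 2\cdot 2^w$, hence $\ell^j_i-2\ge 2^w$ and so $a_i\ge\lfloor(\ell^j_i-2)/2^w\rfloor\ge(\ell^j_i-2)/2^{w+1}$ for $i=1,2$ (using $\lfloor x\rfloor\ge x/2$ for $x\ge1$; the case $w=0$ is immediate). Multiplying over $i$, $|\CF{\GridLabel_j}{w}|=\prod_i a_i\ge 2^{-(2w+2)}\prod_i(\ell^j_i-2)$, and since $\ell^j_i>4$ gives $\ell^j_i-2>\ell^j_i/2$ and $n_j=|U_j|\le|V(\GridLabel_j)|=\prod_i\ell^j_i$, this is at least $n_j/2^{d+2w+2}$. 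For the second, $\prod_i(\ell^j_i-2)\le|U_j|=n_j$ (since $U_j$ contains every interior node) together with the aspect-ratio bound $\gM$ gives $(\lmin^j)^d\le\prod_i\ell^j_i<2^dn_j$, so $\max_i a_i\le\max_i\ell^j_i\le\gM\lmin^j<2\gM n_j^{1/d}$. Combining yields $\CS{\GridLabel_j}{i}{w}\ge n_j^{1-1/d}/(2^{d+2w+3}\gM)$, and therefore
\[
\max_{i\in[d]}\left\lceil\frac{Cn_j^\alpha}{\CS{\GridLabel_j}{i}{w}}\right\rceil\le\left\lceil 2^{d+3}\gM\cdot 2^{2w}\cdot Cn_j^{\alpha-1+\frac1d}\right\rceil.
\]

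It then remains to control $2^{2w}$. If $d\le 1/(1-\alpha)$ then $w=0$ and $\alpha-1+\frac1d\ge 0$, so the right-hand side above is $O(Cn_j^{\alpha-1+\frac1d})$ (equal to $O(C)$ when the exponent vanishes), as required. If $d>1/(1-\alpha)$ then $\alpha-1+\frac1d<0$; should $w=0$, we simply use $n_j^{\alpha-1+\frac1d}\le 1$ to get $O(C)$. Finally, if $w\ge1$ then, by its definition, $w\le\frac12\bigl((1-\tilde\alpha-\frac1d)\log_2 n-\log_2\frac{\gM}{1-\beta}\bigr)$, so $2^{2w}\le\frac{1-\beta}{\gM}\,n^{1-\tilde\alpha-\frac1d}$; using $n\le n_j/(1-\beta)^2$ from~(\ref{eq:Ratio_njn}) and $1-\tilde\alpha-\frac1d\ge 0$ (which holds when $d>1/(1-\alpha)$), this gives $2^{2w}n_j^{\alpha-1+\frac1d}\le\gM^{-1}(1-\beta)^{2\tilde\alpha+\frac2d-1}n_j^{\alpha-\tilde\alpha}$. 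In the sub-case $d>2/(1-\alpha)$ we have $\tilde\alpha=1-\frac2d$, so $n_j^{\alpha-\tilde\alpha}=n_j^{\alpha-1+\frac2d}\le 1$ because $\alpha<1-\frac2d$; in the sub-case $1/(1-\alpha)<d\le 2/(1-\alpha)$ we have $\tilde\alpha=\alpha$, so $n_j^{\alpha-\tilde\alpha}=1$. Moreover $2\tilde\alpha+\frac2d-1\ge 1-\frac2d\ge 0$ since $d\ge 2$, hence $(1-\beta)^{2\tilde\alpha+\frac2d-1}\le 1$ and therefore $2^{2w}n_j^{\alpha-1+\frac1d}\le\gM^{-1}$. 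Substituting back, the displayed bound is $\lceil 2^{d+3}C\rceil=O(C)$.

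I expect the only delicate point to be the bookkeeping of the powers of $2$: confirming that the channel $\CF{\GridLabel_j}{w}$ never contracts faster than by the factor $2^{2w}$, and that the constants $2^{d+3}$ and the powers of $1-\beta$ stay finite for every fixed $d$. Everything else is essentially a substitution of the value of $w$ together with the short case analysis above, once the product structure of $\CF{\GridLabel_j}{w}$ and the estimate $n_j=\Theta(n)$ are in place.
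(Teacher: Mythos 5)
Your proof is correct and follows essentially the same route as the paper's: lower-bound $\min_{i\in[d]}\CS{\GridLabel_j}{i}{w}$ by roughly $n_j^{1-\frac1d}/(\gM 2^{2w})$ using $\lmax^j=O(\gM n_j^{1/d})$ and $\lmin^j/2^w\geq 2$, then absorb the factor $2^{2w}$ via the definition of $w$ together with $n_j\geq(1-\beta)^2n$ from~(\ref{eq:Ratio_njn}); your case analysis on $w=0$ versus $w\geq 1$ and on the two branches of $\tilde\alpha$ matches the paper's. The one substantive difference is quantitative: by estimating $\ell^j_i-2>\ell^j_i/2$ in every coordinate you incur a factor $2^{d}$, arriving at $\lceil 2^{d+3}\gM 2^{2w}Cn_j^{\alpha-1+\frac1d}\rceil$, whereas the paper uses $\ell^j_i>2d$ (Lemma~\ref{lm:GridPartition}) to bound $\prod_{i}(1-2/\ell^j_i)>(1-1/d)^d\geq 1/4$ and obtains the absolute constant $8e\gM$. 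For the lemma as literally stated with $d$ fixed this is harmless, but the paper deliberately keeps these constants free of $2^{\Theta(d)}$ factors --- this is precisely the stated reason for cutting off the recursion at $\ell_h\leq 2\gM d$ rather than at $O(1)$ --- because Theorem~\ref{th:EmbedCore} tracks the dependence on $d$ explicitly; propagating your constant through Lemmas \ref{lm:B} and~\ref{lm:EdgeCongestion} would degrade the final congestion from $O(dC+d^2\Delta)$ to $O(d2^dC+\cdots)$.
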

\begin{proof}
Suppose that $\GridLabel_j=\Grid{\ell^j_i}_{i\in [d]}$,
 $\lmax^j:=\max_{i\in [d]}\{\ell^j_i\}$, and
 $\lmin^j:=\min_{i\in [d]}\{\ell^j_i\}$.
We begin with bounds of $\lmax^j$ and
 $\lfloor(\ell_i-2)/2^w\rfloor$. 
Because $\lmin^j>2d$ by Lemma~\ref{lm:GridPartition},
 it follows that
\[
n_j\geq\prod_{i\in [d]}(\ell^j_i-2)
 =\prod_{i\in [d]}\left(1-\frac{2}{\ell^j_i}\right)\ell^j_i
 >\left(1-\frac{1}{d}\right)^d\left(\frac{\lmax^j}{\gM}\right)^d
 \geq\left(\frac{\lmax^j}{2\gM}\right)^d,
\]
 yielding
\begin{equation}
\label{eq:D_lh}
\lmax^j<
 2\gM n_j^{1/d}.
\end{equation}
It follows from the proof of Lemma~\ref{lm:ChannelNonEmpty} that
 $\lmin^j/2^w\geq 2$.
Therefore,
\begin{equation}
\label{eq:D_li}
\left\lfloor\frac{\ell^j_i-2}{2^w}\right\rfloor\geq\frac{\ell^j_i-2^w-1}{2^w}
 \geq\frac{\ell^j_i-\frac{\lmin^j}{2}-1}{2^w}
 \geq
\frac{\ell^j_i-2}{2^{w+1}}.
\end{equation}

If $k'$ is the direction of $\CF{\GridLabel_j} w$, then
 it follows from (\ref{eq:D_lh}) and (\ref{eq:D_li}) that
\[
\begin{split}
\min_{i\in [d]}\{\CS{\GridLabel_j} i w\}
 & \geq\left(\prod_{i\in\{1,2\}\setminus\{k'\}}
 \left\lfloor\frac{\ell^j_i-2}{2^w}\right\rfloor\right)
 \left(\prod_{i\in[d]\setminus\{1,2,k'\}}(\ell^j_i-2)\right)
 \geq\frac{1}{2^{2w+2}}\prod_{i\in[d]\setminus\{k'\}}(\ell^j_i-2)\\
 &=\frac{1}{2^{2w+2}}\prod_{i\in[d]\setminus\{k'\}}
 \left(1-\frac{2}{\ell^j_i}\right)\ell^j_i
 >
 \left(1-\frac{1}{d}\right)^{d-1}
 \frac{\prod_{i\in[d]\setminus\{k'\}}\ell^j_i}{2^{2w+2}}
>\frac{n_j/\lmax^j}{4e2^{2w}}
>\frac{n_j^{1-\frac{1}{d}}}{8e\gM2^{2w}}.
\end{split}
\]
Therefore,
 we have
 $\max_{i\in [d]}\lceil Cn_j^\alpha/\CS{\GridLabel_j}{i}{w}\rceil
 <\lceil 8e\gM 2^{2w}Cn_j^{\alpha-1+\frac{1}{d}}\rceil$.
If $w=0$, then the lemma is immediate.
If $w\geq 1$,  which implies
 $d>1/(1-\alpha)$,
 then it follows from inequalities in (\ref{eq:ChannelNonEmpty_w})
 and (\ref{eq:Ratio_njn}) that
\[
2^{2w}
 \leq\frac{(1-\beta)n^{\min\{1/d,1-\alpha-\frac 1 d\}}}{\gM}
 \leq\frac{(1-\beta)n^{1-\alpha-\frac 1 d}}{\gM}
 \leq\frac{(1-\beta)^{-1+2\alpha+\frac 2 d}n_j^{1-\alpha-\frac 1 d}}{\gM}.
\]
Therefore, we have
 $\max_{i\in [d]}\lceil Cn_j^\alpha/\CS{\GridLabel_j}{i}{w}\rceil
 <\lceil 8e(1-\beta)^{-1+2\alpha+\frac 2 d}C\rceil=O(C)$.
\end{proof}

Through an analysis similar to that of Lemma~\ref{lm:D},
 we can prove that
 $\lceil Cn^\alpha/\CS{\GridLabel}{k}{w}\rceil$ is $O(C)$ if $d>1/(1-\alpha)$,
 and $O(Cn^{\alpha-1+\frac 1 d})$ otherwise.
These upper bounds
 are independent of~$w$ and
 hold
 for any feasible guest and host graphs processed in each step of SBE
 unless the minimum side length of the host grid is at most $2d$.
In what follows, therefore,
 we simply write $D(\cdot)$ to denote $D^w(\cdot)$ and
 use these upper bounds of $D(\cdot)$ on the condition that the host grid
 has the minimum side length larger than $2d$.
\begin{lemma}
\label{lm:B}
The edge-congestion $B$ of the base embedding in
 Step~{\theLabelSBEBaseEmbedding} is $O(d\Delta+C)$.
\end{lemma}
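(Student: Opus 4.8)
The plan is to bound $B$ by the sum of the edge-congestions of the two routings produced in Step~\theLabelSBEBaseEmbedding: the routing $\rho$ that realizes the embedding $\langle\phi,\rho\rangle$ of $G$, and (when $X\neq\emptyset$) the routing $\sigma$ that draws the external edges of $G$ into $\CF{\GridLabel}{w}$. For $\rho$ I would observe that any $1$-$1$ routing graph $R$ on $\GridLabel$ has $\bpi_h(R)$ of maximum out- and indegree at most $\ell_h$, so Lemma~\ref{lm:GeneralRouting} routes it with edge-congestion at most $2\ell_h$, and hence by Lemma~\ref{lm:PermutationToEmbedding} the $\rho$ obtained in Step~\theLabelSBEBaseEmbedding has edge-congestion at most $2\ell_h\lceil\Delta/2\rceil$. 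Since the base step is entered only when $\ell_h\le 2\gM d$ and $\gM=O(1)$, this is $O(d\Delta)$. If $X=\emptyset$ there is no $\sigma$ and we are done, so from now on I assume $X\neq\emptyset$.

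For $\sigma$, the construction of $\phi$ in Step~\theLabelSBEBaseEmbedding (via Lemma~\ref{lm:BaseEmbedding}) together with the uniformity of $\psi$ across the direction $k$ of $\CF{\GridLabel}{w}$ makes the routing graph $R$ handled by $\sigma$ satisfy that $\bpi_k(R)$ has maximum outdegree at most $e|X|/\CS{\GridLabel}{k}{w}+\Delta$ and maximum indegree at most $\lceil|X|/\CS{\GridLabel}{k}{w}\rceil$. Lemma~\ref{lm:GeneralRouting2}, applied with its dimension parameter taken to be $k$, then gives $\sigma$ an edge-congestion at most $2\lceil\gM(e|X|/\CS{\GridLabel}{k}{w}+\Delta+1)\rceil=O\bigl(|X|/\CS{\GridLabel}{k}{w}+\Delta\bigr)$. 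So it remains to show $|X|/\CS{\GridLabel}{k}{w}=O(C)$.

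The crucial observation is that $X\neq\emptyset$ forces this invocation of SBE to have been made on a subgrid produced in Step~\theLabelSBESeparation of some parent call, so by Lemma~\ref{lm:GridPartition} every side length of $\GridLabel$ exceeds $2d$; in particular $\CF{\GridLabel}{w}\neq\emptyset$ (by the argument of Lemma~\ref{lm:ChannelNonEmpty}, using $\lmin/2^w\ge2$ when $w\ge1$), so $\CS{\GridLabel}{k}{w}\ge1$, and moreover the uniform upper bounds on $\lceil Cn^\alpha/\CS{H}{i}{w}\rceil$ stated just after Lemma~\ref{lm:D} apply to $\GridLabel$. Since $|X|\le Cn^\alpha$ with $n:=|V(G)|$, these bounds give $|X|/\CS{\GridLabel}{k}{w}=O(C)$ if $d>1/(1-\alpha)$ and $O(Cn^{\alpha-1+\frac1d})$ otherwise. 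In the base step $n=|U|\le\prod_{i\in[d]}\ell_i\le(2\gM d)^d$, so when $d>1/(1-\alpha)$ we are done; and when $d\le1/(1-\alpha)$ the dimension $d$ is itself bounded by the absolute constant $1/(1-\alpha)$, whence $(2\gM d)^d=O(1)$ and $|\alpha-1+\frac1d|\le1$ force $n^{\alpha-1+\frac1d}=O(1)$, so again $|X|/\CS{\GridLabel}{k}{w}=O(C)$. Combining with the bound on $\rho$ yields $B=O(d\Delta)+O(C+\Delta)=O(d\Delta+C)$.

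The step I expect to need the most care is this last estimate: the generic bound $O(Cn^{\alpha-1+\frac1d})$ carries a factor $n^{\alpha-1+\frac1d}$ that is $\le1$ only when $d>1/(1-\alpha)$, so to get the clean $O(C)$ one must exploit both that in the base step $n$ is bounded by $(2\gM d)^d$ and that $d\le1/(1-\alpha)$ in the complementary regime; without this double observation one would only obtain $B=O(d\Delta+Cn^{\alpha-1+\frac1d})$, which is too weak. A couple of minor points to check along the way are that $\lmin>2d$ indeed yields $\CF{\GridLabel}{w}\neq\emptyset$ as in Lemma~\ref{lm:ChannelNonEmpty}, and that the out-/indegree bounds on $\bpi_k(R)$ recorded in Step~\theLabelSBEBaseEmbedding are exactly the hypothesis consumed by Lemma~\ref{lm:GeneralRouting2} (after padding the node set of $\bpi_k(R)$ up to $\bpi_k(V(\GridLabel))$ with isolated nodes).
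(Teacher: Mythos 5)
Your proposal is correct and follows essentially the same route as the paper's proof: bound $\rho$ by $2\lceil\Delta/2\rceil\ell_h=O(d\Delta)$ via Lemmas~\ref{lm:PermutationToEmbedding} and~\ref{lm:GeneralRouting}, bound $\sigma$ via Lemma~\ref{lm:GeneralRouting2} using the degree bounds from Lemma~\ref{lm:BaseEmbedding} and the uniformity of $\psi$, and then absorb $D(n)$ into $O(C)$ using $n\le\ell_h^d\le(2\gM d)^d$ together with the fact that $d$ is bounded by the constant $1/(1-\alpha)$ in the regime where the generic bound is only $O(Cn^{\alpha-1+\frac1d})$. Your explicit handling of that last case matches the paper's computation $C(2\gM d)^{d(\alpha-1)+1}=O(C)$.
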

\begin{proof}
The edge-congestion of $\rho$ constructed in the base embedding is at most
 $2\lceil\frac{\Delta}{2}\rceil \ell_h$
 by Lemmas \ref{lm:PermutationToEmbedding} and~\ref{lm:GeneralRouting}.
If $X\neq\emptyset$,
 then the edge-congestion of $\sigma$ is at most
 with an edge-congestion at most
 $2\lceil\gM(e|X|/\CS{\GridLabel} k w+\Delta)\rceil
\leq 2\lceil\gM(eD(n)+\Delta)\rceil$.
This bound is obtained from Lemma~\ref{lm:GeneralRouting2}
 and the fact that
 for the routing graph $R$ to be routed by $\sigma$,
 $\bpi_k(R)$ has
 maximum outdegree at most $e|X|/\CS{\GridLabel} k w+\Delta$
 by Lemma~\ref{lm:BaseEmbedding}
 and
 maximum indegree at most $\lceil |X|/\CS{\GridLabel} k w\rceil$
 as mentioned in Step~{\theLabelSBEBaseEmbedding}.
Because $n^{1/d}\leq \ell_h\leq 2\gM d$
 in the base embedding and $\min_{i\in [d]}\ell_i>2d$ as described
 in the proof of Lemma~\ref{lm:BaseEmbedding},
 we have $B\leq
 2(\lceil\frac{\Delta}{2}\rceil \ell_h+\lceil\gM(eD(n)+\Delta)\rceil)
=O(d\Delta+C)$ if
 $d>1/(1-\alpha)$,
 and $B=O(d\Delta+Cn^{\alpha-1+\frac{1}{d}})
 =O(d\Delta+C(2\gM d)^{d(\alpha-1)+1})=O(d\Delta+C)$ otherwise.
\end{proof}

We now estimate the total congestion of a fixed edge $r$ of $\GridLabel_0$.
If $r$ is contained in the channel of
 $\CF{\GridLabel_0}{w,w'}$ but
 of neither $\CF{\GridLabel_0}{w}$ nor $\CF{\GridLabel_0}{w'}$
 for a certain unique pair of $w>0$ and $w'>0$,
 then $r$ incurs a congestion less
 than that in the case that
 it is contained in the channel of either $\CF{\GridLabel_0}{w}$ or
 $\CF{\GridLabel_0}{w'}$.
This is because
 the channel of $\CF{\GridLabel_0}{w,w'}$ is used
 only in Step~{\theLabelSBERecursiveEmbedding}, while
 the channels
 of $\CF{\GridLabel_0}{w}$ and $\CF{\GridLabel_0}{w'}$
 are used in other steps as well. 
To analyze an upper bound of congestion of $r$,
 therefore, 
 it suffices to assume that
 $r$ is contained in the channel of $\CF{\GridLabel_0}{w_r}$ with $w_r\geq 1$
 uniquely determined by $r$, as well as in
 the channel of $\CF{\GridLabel_0}0$ and some base embeddings.
\begin{lemma}
\label{lm:P1}
The number $P_1$ of recursive calls of SBE that
 set $w\geq 1$ in Step~{\theLabelSBEChannelConfiguration},
 perform inductive steps (i.e., not a base embedding), and
 use a channel containing $r$ is 
 $O(d)$ if $d>2/(1-\alpha)$, 
 $O(1/(1-\alpha-\frac{1}{d}))$ if $1/(1-\alpha)<d\leq 2/(1-\alpha)$, 
 and $0$ otherwise.
\end{lemma}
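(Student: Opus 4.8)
The plan is to show that every recursive call counted by $P_1$ must set $w=w_r$, that all such calls lie on a single root-to-descendant path of the recursion tree, and that along that path $n$ shrinks geometrically while the range of $n$ for which $w$ stays equal to $w_r$ is only a bounded multiplicative window. First I would note that the channel of $\CF{\GridLabel}{w}$ used by a call on a subgrid $\GridLabel\subseteq\GridLabel_0$ is, as an edge-subgraph, contained in the channel of $\CF{\GridLabel_0}{w}$: since $\deg_\GridLabel\le\deg_{\GridLabel_0}$ we get $\CF{\GridLabel}{w}\subseteq\CF{\GridLabel_0}{w}$, and each ``between'' node witnessing the channel of $\CF{\GridLabel}{w}$ (via two grid points at distance $2^w$) also witnesses the channel of $\CF{\GridLabel_0}{w}$. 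As channels of distinct levels in $\GridLabel_0$ are edge-disjoint and $r$ lies in the channel of $\CF{\GridLabel_0}{w_r}$, a call whose channel contains $r$ must have $w=w_r$; in particular, if $d\le 1/(1-\alpha)$ then Step~{\theLabelSBEChannelConfiguration} always produces $w=0<w_r$, so $P_1=0$. The same containment argument shows the channel of $\CF{\GridLabel}{w}$ lies in the interior of $\GridLabel$ (its grid points have degree $2d$ in $\GridLabel$, and every between-node inherits interiority from its two defining endpoints), so a call whose channel contains $r$ has both endpoints of $r$ interior to its grid. When a grid is split along a hyperplane into $\GridLabel_1,\GridLabel_2$ sharing only a boundary face, $r$ is interior to at most one of them, and to neither once one of its endpoints lands on a boundary (which is irreversible). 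Hence the calls whose grid has $r$ in its interior form a prefix $\GridLabel_0=H^{(0)},H^{(1)},\dots,H^{(L)}$ of a root-to-leaf path, and $P_1$ is at most the number of inductive $H^{(i)}$ with $w(H^{(i)})=w_r$.

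Next I would bound that number. Writing $n^{(i)}:=|V(H^{(i)})|$, the $\beta$-decomposition property gives $n^{(i+1)}\le\lceil\beta n^{(i)}\rceil\le\beta n^{(i)}+1$, and since an inductive call has $\min_i\ell_i>2d$ by Lemma~\ref{lm:GridPartition}, hence $n^{(i)}$ larger than a fixed constant, we get $n^{(i+1)}\le\beta' n^{(i)}$ for some fixed $\beta'\in(\beta,1)$; so $n^{(i)}$ decays geometrically along the path. The value $w(H)=\max\{\lfloor\frac12((1-\tilde\alpha-\frac1d)\log_2|V(H)|-\log_2\frac{\gM}{1-\beta})\rfloor,0\}$ is non-decreasing in $|V(H)|$, and $1-\tilde\alpha-\frac1d>0$ whenever $d>1/(1-\alpha)$, so $w(H^{(i)})$ is non-increasing in $i$ and the indices with $w(H^{(i)})=w_r$ form a contiguous interval $[a,b]$. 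Unwinding the floor, $w_r\le g(n^{(b)})\le g(n^{(a)})<w_r+1$ with $g$ the affine-in-$\log_2$ function above, which forces $g(n^{(a)})-g(n^{(b)})<1$, i.e. $\log_2(n^{(a)}/n^{(b)})<2/(1-\tilde\alpha-\frac1d)$. Combined with $n^{(a)}/n^{(b)}\ge(1/\beta')^{\,b-a}$ this gives $b-a=O(1/(1-\tilde\alpha-\frac1d))$, so $P_1\le b-a+1=O(1/(1-\tilde\alpha-\frac1d))$. Substituting $\tilde\alpha=\max\{1-\frac2d,\alpha\}$ completes the case analysis: for $1/(1-\alpha)<d\le 2/(1-\alpha)$ we have $\tilde\alpha=\alpha$ and $1-\tilde\alpha-\frac1d=1-\alpha-\frac1d$, giving $O(1/(1-\alpha-\frac1d))$; for $d>2/(1-\alpha)$ we have $\tilde\alpha=1-\frac2d$ and $1-\tilde\alpha-\frac1d=\frac1d$, giving $O(d)$.

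I expect the main obstacle to be the first part: carefully establishing the edge-set containment of the channel of $\CF{\GridLabel}{w}$ inside that of $\CF{\GridLabel_0}{w}$ together with the interiority of channels, and then that ``$r$ is interior to the current grid'' is monotone down the recursion, so the qualifying calls lie on one path. This is exactly what pins $w$ to $w_r$ and reduces the count to a single path; the remaining geometric-decay estimate is routine once the definition of $w$ in Step~{\theLabelSBEChannelConfiguration} is unwound.
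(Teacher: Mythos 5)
Your proposal is correct and follows essentially the same route as the paper: the qualifying calls form a single parent-to-child chain (the paper justifies this by noting that channels configured in the two partitioned subgrids are edge-disjoint because channels avoid grid boundaries, which is the same interiority observation you make), and the chain's length is bounded by combining the geometric decay $n^{(i+1)}\leq\beta n^{(i)}+1$ (with $n^{(i)}>2\gM d>2/(1-\beta)$ for inductive calls, so the decay is genuinely geometric) with the fact that the floor in the definition of $w$ can only remain equal to $w_r$ over a multiplicative window of width $2^{2/(1-\tilde\alpha-1/d)}$ in $n$. The one imprecision is your claim that every counted call must have $w=w_r$: a call that sets $w>w_r$ can still touch $r$ through the combined channel of $\CF{\GridLabel_j}{w_j,w}$ used for $\tilde\sigma_j$ in Step~{\theLabelSBERecursiveEmbedding} when its child has $w_j=w_r$; the paper accounts for exactly one such extra call at the head of the sequence, which adds $1$ to $P_1$ and does not affect the asymptotic bounds.
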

\begin{proof}
Because SBE sets $w\geq 1$ only if $d>1/(1-\alpha)$,
 $P_1=0$ if $d\leq 1/(1-\alpha)$.
Assume $d>1/(1-\alpha)$.
In Step~{\theLabelSBERecursiveEmbedding}, channels configured
 in partitioned grids are edge-disjoint because the channels
 do not contain boundaries of the partitioned grids.
Therefore, there is a unique sequence of $P_1$ recursive calls that
 set $w\geq 1$,
 perform inductive steps, and
 use a channel containing $r$.
Two consecutive calls in the sequence are a parent and its child procedures.
Moreover, all but the first call (the ancestor of any other call) in
 the sequence set $w$ to $w_r\geq 1$, while
 the first call may set $w>w_r$ and use the channel associated with
 $w$ and $w_r$ in Step~{\theLabelSBERecursiveEmbedding}.
The number $n$ of nodes of the guest graph in the second call in the sequence 
 decreases to
\[
n'\leq\beta^{P_1-2}\left(n-\frac{1}{1-\beta}\right)+\frac{1}{1-\beta}\leq
 \beta^{P_1-2}n+\frac{1}{1-\beta}
\]
 at the last call in the sequence.
Because the last call performs inductive steps, it follows that
 $n'>2\gM d>2/(1-\beta)$.
Thus, we have $n'<\beta^{P_1-2}n+\frac{n'}2$, yielding $n'<2\beta^{P_1-2}n$.
Because the second and last calls
 set $w=w_r\geq 1$ in Step~{\theLabelSBEChannelConfiguration},
 it follows that
\[
\left\lfloor\frac{1}{2}\left(\left(1-\tilde\alpha-\frac{1}{d}\right)\log_2 n
-\log_2\frac\gM{1-\beta}\right)\right\rfloor
=\left\lfloor\frac{1}{2}\left(\left(1-\tilde\alpha-\frac{1}{d}\right)\log_2 n'
-\log_2\frac\gM{1-\beta}\right)\right\rfloor.
\]
Removing the floors,
\[
\frac{1}{2}\left(\left(1-\tilde\alpha-\frac{1}{d}\right)\log_2 n
-\log_2\frac\gM{1-\beta}\right)
<\frac{1}{2}\left(\left(1-\tilde\alpha-\frac{1}{d}\right)\log_2 n'
-\log_2\frac\gM{1-\beta}\right)+1.
\]
Combined with the upper bound of $n'$ obtained above,
\[
\log_2 n
<\log_2 n'+\frac{2}{1-\tilde\alpha-\frac{1}{d}}
<\log_2(2\beta^{P_1-2}n)+\frac{2}{1-\tilde\alpha-\frac{1}{d}},
\]
by which we obtain
\[
P_1<\frac{1+\frac{2}{1-\tilde\alpha-\frac{1}{d}}}{\log_2\beta^{-1}}+2.
\]
Because $\tilde\alpha=\max\{1-\frac 2 d,\alpha\}$,
 $P_1$ is $O(d)$ if $d>2/(1-\alpha)$, and $O(1/(1-\alpha-\frac 1 d))$ if
 $1/(1-\alpha)<d\leq 2/(1-\alpha)$.
\end{proof}
\begin{lemma}
\label{lm:P0}
The number $P_0$ of recursive calls of SBE that
 set $w=0$ in Step~{\theLabelSBEChannelConfiguration},
 perform inductive steps (i.e., not a base embedding), and
 use a channel containing $r$ is 
 $O(d)$ if $d>2/(1-\alpha)$, 
 $O(1/(1-\alpha-\frac{1}{d}))$ if $1/(1-\alpha)<d\leq 2/(1-\alpha)$, and
 at most $\log_{1/\beta} N$ otherwise.
\end{lemma}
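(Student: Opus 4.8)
\emph{Proof plan.} I would follow the pattern of Lemma~\ref{lm:P1}. The first step is to note that when a recursive call of SBE sets $w=0$, the channel of $\CF\GridLabel 0$ it uses in Step~{\theLabelSBERoutingEdges} is exactly the subgraph of $\GridLabel$ induced by its interior nodes (those of degree $2d$): for $w=0$ the ``extra'' node set in the definition of the channel is empty and $\V 0=V(\GridLabel_0)$. Hence such a call ``uses a channel containing $r$'' iff both endpoints of $r$ are interior to its host subgrid $\GridLabel$. Since interiority is monotone under inclusion of subgrids, and since the two child subgrids created in Step~{\theLabelSBESeparation} overlap only in a single $(d-1)$-dimensional boundary slice, an edge interior to $\GridLabel$ is interior to at most one of the two children; equivalently, the channels of $\CF{\GridLabel_1}{0}$ and $\CF{\GridLabel_2}{0}$ are edge-disjoint, exactly as in Lemma~\ref{lm:P1}. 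Consequently the subgrids whose interior contains $r$ form a single root-to-leaf path $v_0,v_1,\dots,v_L$ of the recursion tree (with $v_{k+1}$ a child of $v_k$), where $v_0,\dots,v_{L-1}$ are inductive steps and $v_L$ is either an inductive step or a base embedding. The calls counted by $P_0$ are precisely the inductive steps among $v_0,\dots,v_L$ whose guest graph is small enough that Step~{\theLabelSBEChannelConfiguration} sets $w=0$.

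The second step is bookkeeping. From the formula in Step~{\theLabelSBEChannelConfiguration}, a call with an $n$-node guest graph sets $w=0$ iff $(1-\tilde\alpha-\tfrac1d)\log_2 n<2+\log_2\tfrac\gM{1-\beta}$; when $d>1/(1-\alpha)$ this means $n<n^\ast:=\bigl(\tfrac{4\gM}{1-\beta}\bigr)^{1/(1-\tilde\alpha-1/d)}$, using $1-\tilde\alpha-\tfrac1d=\min\{\tfrac1d,\,1-\alpha-\tfrac1d\}>0$, while when $d\le 1/(1-\alpha)$ every call sets $w=0$. Along the path, writing $n^{(k)}:=|V(v_k)|$, the size estimate of Lemma~\ref{lm:P1} gives $n^{(k)}-\tfrac1{1-\beta}\le\beta^k\bigl(N-\tfrac1{1-\beta}\bigr)$, because $n^{(k+1)}\le\lceil\beta n^{(k)}\rceil$ and $\tfrac1{1-\beta}(1-\beta)=1$; on a suffix starting at index $j$ the same estimate gives $n^{(k)}-\tfrac1{1-\beta}\le\beta^{k-j}\bigl(n^{(j)}-\tfrac1{1-\beta}\bigr)$. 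I would also invoke two facts already in hand: every inductive-step call has $n>7\gM$ (from the proof of Lemma~\ref{lm:GridPartition}), and the initial choice $\gM\ge\tfrac1{1-\beta}\bigl(\tfrac1{7\beta}+e\beta\bigr)+\tfrac54$ yields $7\gM-\tfrac1{1-\beta}>\tfrac1\beta$.

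The third step is the count. For $d\le 1/(1-\alpha)$, every counted call lies on $v_0,\dots,v_L$; if $v_k$ is the last one then $P_0=k+1$ and $n^{(k)}>7\gM$, so $\beta^k\bigl(N-\tfrac1{1-\beta}\bigr)\ge n^{(k)}-\tfrac1{1-\beta}>7\gM-\tfrac1{1-\beta}>\tfrac1\beta$; taking $\log_{1/\beta}$ of both ends gives $k+1<\log_{1/\beta}\!\bigl(N-\tfrac1{1-\beta}\bigr)<\log_{1/\beta}N$, hence $P_0\le\log_{1/\beta}N$ (and $P_0=0$ if there is no inductive step). For $d>1/(1-\alpha)$, the counted calls are the inductive steps on the contiguous, possibly empty, suffix of $v_0,\dots,v_L$ with $n^{(k)}\le n^\ast$; if that suffix starts at index $j$ and its last inductive step sits at index $k$, then $7\gM-\tfrac1{1-\beta}<n^{(k)}-\tfrac1{1-\beta}\le\beta^{k-j}\bigl(n^\ast-\tfrac1{1-\beta}\bigr)$, so their number is $k-j+1\le\log_{1/\beta}\!\frac{n^\ast-1/(1-\beta)}{\,7\gM-1/(1-\beta)\,}+1=O(\log_{1/\beta}n^\ast)$ when $\beta$ and $\gM$ (hence $1/(1-\beta)$) are treated as constants. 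Finally $\log_{1/\beta}n^\ast=\Theta\!\bigl(\tfrac1{1-\tilde\alpha-1/d}\bigr)$, which is $O(d)$ when $d>2/(1-\alpha)$ (then $\tilde\alpha=1-\tfrac2d$, so $1-\tilde\alpha-\tfrac1d=\tfrac1d$) and $O\!\bigl(\tfrac1{1-\alpha-1/d}\bigr)$ when $1/(1-\alpha)<d\le 2/(1-\alpha)$ (then $\tilde\alpha=\alpha$), as claimed.

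The main obstacle I anticipate is not the counting itself but obtaining the clean bound $\log_{1/\beta}N$ with no additive slack: this is precisely what the strengthened inequality $7\gM-\tfrac1{1-\beta}>\tfrac1\beta$ buys, which in turn is why $\gM$ is fixed at the start of SBE to dominate $\tfrac1{1-\beta}\bigl(\tfrac1{7\beta}+e\beta\bigr)+\tfrac54$. The remaining point requiring care — that the $w=0$, inductive, channel-containing-$r$ calls all lie on one path and form a suffix of it once $n$ drops below $n^\ast$ — reduces to the same edge-disjointness of partitioned channels used in Lemma~\ref{lm:P1}, applied here to the channels of $\CF{\GridLabel}{0}$, together with the monotonicity of $n^{(k)}$ along the path.
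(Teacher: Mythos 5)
Your proposal is correct and follows essentially the same route as the paper: a unique chain of calls obtained from the edge-disjointness of the $w=0$ channels of sibling subgrids, geometric decay of the guest size along that chain, the lower bound $n>7\gM$ (equivalently $n>2\gM d$ in the paper) at the last inductive call to get the clean $\log_{1/\beta}N$ bound, and the upper bound on $n$ forced by the $w=0$ condition to get the $O(1/(1-\tilde\alpha-\frac1d))$ bounds when $d>1/(1-\alpha)$. The only differences are cosmetic (you track $n^{(k)}-\frac1{1-\beta}$ and the threshold $n^\ast$ explicitly, where the paper bounds $\log_{1/\beta}n$ of the first call in the sequence directly), so no further comparison is needed.
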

\begin{proof}
By an argument similar to that in the proof of Lemma~\ref{lm:P1},
 there exists a unique sequence of $P_0$ recursive calls that
 set $w=0$,
 perform inductive steps, and
 use a channel containing $r$.
Moreover, $n$ in the first call of the sequence decreases to
 $n'$ with $2\gM d< n'<2\beta^{P_0-1}n$ at the last call in the sequence.
Therefore, it follows that
 $P_0 <\log_{1/\beta}n-\log_{1/\beta}(\gM d)+1
<\log_{1/\beta}n-\log_{1/\beta} 8+1<\log_{1/\beta}n$.
Because $n\leq N$ obviously, we have the lemma for the case
 $d\leq 1/(1-\alpha)$.
If $d>1/(1-\alpha)$, then $w=0$ implies that
\[
\left\lfloor\frac{1}{2}\left(\left(1-\tilde\alpha-\frac{1}{d}\right)\log_2 n
-\log_2\frac\gM{1-\beta}\right)\right\rfloor\leq 0.
\]
Removing the floor,
\[
\frac{1}{2}\left(\left(1-\tilde\alpha-\frac{1}{d}\right)\log_2 n
-\log_2\frac\gM{1-\beta}\right)<1,
\]
 by which we obtain
\[
\log_{1/\beta} n=\frac{\log_2 n}{\log_2\beta^{-1}}
<\frac{2+\log_2\frac{\gM}{1-\beta}}
{\left(1-\tilde\alpha-\frac 1 d\right)\log_2\beta^{-1}}.
\]
Because $\tilde\alpha=\max\{1-\frac 2 d,\alpha\}$,
 $P_0$ is $O(d)$ if $d>2/(1-\alpha)$, and $O(1/(1-\alpha-\frac 1 d))$ if
 $1/(1-\alpha)<d\leq 2/(1-\alpha)$.
\end{proof}
\begin{lemma}
\label{lm:EdgeCongestion}
The edge-congestion on $r$ is
 $O(dC+d^2\Delta)$ if $d>2/(1-\alpha)$,
 $O(C/(1-\alpha-\frac{1}{d})+d^2\Delta)$ if $1/(1-\alpha)<d\leq 2/(1-\alpha)$,
 and
 $O(C(N^{\alpha-1+\frac{1}{d}}+\log N)+d^2\Delta)$ otherwise.
\end{lemma}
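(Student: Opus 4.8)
The plan is to fix an arbitrary edge $r$ of $\GridLabel_0$ and bound its total congestion by partitioning the routings that can use $r$ into three groups: inductive calls of SBE that route some $\tilde\sigma$ or $\tilde\sigma_j$ on a channel of positive level $w\ge 1$ containing $r$; inductive calls that route on the level-$0$ channel containing $r$; and base embeddings containing $r$. The first move is to invoke the reduction already carried out before Lemma~\ref{lm:P1}: if $r$ lies only in a mixed channel $\CF{\GridLabel_0}{w,w'}$ and in neither $\CF{\GridLabel_0}{w}$ nor $\CF{\GridLabel_0}{w'}$, its congestion is no larger than in the pure-channel case, so it suffices to treat $r$ as lying in $\CF{\GridLabel_0}{w_r}$ for a single $w_r\ge 1$ when $d>1/(1-\alpha)$ (and in no positive-level channel otherwise), together with $\CF{\GridLabel_0}0$ and some base embeddings.

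Next I would handle the positive-level group. By Lemma~\ref{lm:P1} there are only $P_1$ inductive calls that set $w\ge 1$ and use a channel containing $r$, and all but the topmost one set $w=w_r$. Each such call contributes, through the routings $\tilde\sigma_j$ and $\tilde\sigma$ and by Lemmas~\ref{lm:PartialRouting} and~\ref{lm:CompletionRouting}, an edge-congestion $O(D^w(\cdot))$ on the relevant channel; since $w\ge1$ forces $d>1/(1-\alpha)$, Lemma~\ref{lm:D} and the remark following it give $D^w(\cdot)=O(C)$. Hence this group contributes $O(P_1 C)$, which by Lemma~\ref{lm:P1} equals $O(dC)$ for $d>2/(1-\alpha)$, $O(C/(1-\alpha-\frac1d))$ for $1/(1-\alpha)<d\le 2/(1-\alpha)$, and $0$ otherwise.

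For the level-$0$ group I would use Lemma~\ref{lm:P0}: the relevant inductive calls form a chain of length $P_0$ whose guest sizes are at most $N$ and shrink by a factor roughly $\beta$ at each step, and each call contributes $O(D^0(\cdot))$ to the congestion on $r$. When $d>1/(1-\alpha)$ this is $O(C)$ per call, hence $O(P_0 C)$ in total, of the same order as the positive-level group. When $d=1/(1-\alpha)$ the exponent $\alpha-1+\frac1d$ vanishes, so again $O(C)$ per call and, since $P_0\le\log_{1/\beta}N$, the sum is $O(C\log N)$. When $d<1/(1-\alpha)$ the exponent is positive, so the geometric series $\sum_k C(\beta^k N)^{\alpha-1+\frac1d}$ converges and the sum is $O(CN^{\alpha-1+\frac1d})$; combining, this group is $O(C(N^{\alpha-1+\frac1d}+\log N))$ whenever $d\le 1/(1-\alpha)$.

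Finally, a fixed edge of $\GridLabel_0$ lies in $O(d)$ base embeddings (as observed in the proof sketch), each of edge-congestion $B=O(d\Delta+C)$ by Lemma~\ref{lm:B}, contributing $O(dB)=O(d^2\Delta+dC)$; the $dC$ summand is absorbed in each regime (directly if $d>2/(1-\alpha)$; using $d\le 1/(1-\alpha-\frac1d)$ if $1/(1-\alpha)<d\le 2/(1-\alpha)$; and since $d=O(1)$ if $d\le 1/(1-\alpha)$). Adding the three contributions then yields $O(dC+d^2\Delta)$ if $d>2/(1-\alpha)$, $O(C/(1-\alpha-\frac1d)+d^2\Delta)$ if $1/(1-\alpha)<d\le 2/(1-\alpha)$, and $O(C(N^{\alpha-1+\frac1d}+\log N)+d^2\Delta)$ otherwise. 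I expect the main obstacle to be purely bookkeeping: being certain that at most $P_1$, $P_0$, and $O(d)$ routings respectively can touch a given edge and that mixed channels never dominate — which is exactly what the reduction before Lemma~\ref{lm:P1} and the chain arguments of Lemmas~\ref{lm:P1} and~\ref{lm:P0} secure — while the only genuinely quantitative step is the convergent geometric sum in the case $d<1/(1-\alpha)$.
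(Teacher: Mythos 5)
Your proposal is correct and follows essentially the same route as the paper: decompose the congestion on $r$ into the contributions of the inductive calls counted by Lemmas~\ref{lm:P1} and~\ref{lm:P0} (each bounded by $O(D(\cdot))$ via Lemmas~\ref{lm:PartialRouting}, \ref{lm:CompletionRouting}, and~\ref{lm:D}) plus $O(d)$ base embeddings of congestion $B$, and then evaluate the resulting sum, with the geometric series handling the case $d<1/(1-\alpha)$. The only cosmetic difference is that the paper concatenates the two chains of calls into a single sum $\sum_{i=1}^{P_1+P_0}(6D(N_{i+1})+4D(N_i))+2(d-1)B$ rather than treating the positive-level and level-$0$ groups separately.
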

\begin{proof}
The edge $r$ is congested by $\tilde\sigma_j$ with
 $j$ equal to either $1$ or $2$
 and $\tilde\sigma$
 in each recursive call performing inductive steps
 and using a channel containing $r$,
 and by base embeddings.
The congestion on $r$ imposed by $\tilde\sigma_j$ and $\tilde\sigma$
 in the $i$th recursive call in the sequence obtained by concatenating
 the sequences of recursive calls
 mentioned in Lemmas \ref{lm:P1} and~\ref{lm:P0}
 is at most
 $\max\{6D(N_{i+1}),4D(N_{i+1})+2D(N_i)\}\leq 6D(N_{i+1})+2D(N_i)$
 by Lemmas \ref{lm:PartialRouting} and~\ref{lm:CompletionRouting}, where
 $N_i$ is the number of nodes of a guest graph embedded in the $i$th recursive
 call in the concatenated sequence.
If $r$ is on the boundary of a host grid in some base embedding,
 then $r$ can be involved in at most $2(d-1)$ base embeddings in total.
Thus, the congestion on $r$ is at most
 $\sum_{i=1}^{P_1+P_0}(6D(N_{i+1})+4D(N_{i}))+2(d-1)B$.

By Lemmas \ref{lm:D}--\ref{lm:P0},
 this congestion is $O((P_0+P_1)C+d(d\Delta+C))=
 O(dC+d^2\Delta)$ if $d>2/(1-\alpha)$, and
 $O(C/(1-\alpha-\frac{1}{d})+d^2\Delta)$ if $1/(1-\alpha)<d\leq 2/(1-\alpha)$.
If $d\leq 1/(1-\alpha)$, then
 because
 $N_i\leq\beta N_{i-1}+1$, implying
 $N_i\leq\beta^{i-1}(N-\frac 1{1-\beta})+\frac{1}{1-\beta}=O(\beta^{i-1}N)$,
 we have
\[
\begin{split}
\sum_{i=1}^{P_1+P_0}(6D(N_{i+1})+4D(N_{i}))+2(d-1)B
 &\leq\sum_{i=1}^{\log_{1/\beta}N}
O\left(C\left(\beta^{i-1}N\right)^{\alpha-1+\frac{1}{d}}\right)
+O(d(d\Delta+C))\\
 &=O\left(C(N^{\alpha-1+\frac{1}{d}}+\log N)+d^2\Delta\right).
\end{split}
\]
\end{proof}

By Lemma~\ref{lm:GeneralRouting} and (\ref{eq:D_lh}),
 the dilation of SBE is at most
 $\sum_{i\geq 1}O(d(\beta^{i-1}N)^{1/d})
 =O(dN^{1/d})$.
Therefore, we have obtained Theorem~\ref{th:EmbedCore}.

\newcommand{\Gl}{G(\ell)}
\section{Lower Bound on Dilation with Minimum Edge-Congestion}
\label{sc:Dilation}
In this section, we demonstrate that
 minimizing edge-congestion may require a dilation
 of nearly the size of the host grid as stated in the following theorem:

\begin{theorem}
\label{th:Dilation}
There exists an $N$-node graph whose any embedding into
 an $N$-node $2$-dimensional grid with the edge-congestion $1$ has
 a dilation of $\Theta(N)$.
\end{theorem}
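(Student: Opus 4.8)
My plan is to construct a parametrized $N$-node guest graph $G(\ell)$, with $N=\Theta(\ell^2)$, that is forced to look like an ``onion with misaligned gates'': $\Theta(\ell)$ nested cycles, most of whose vertices have degree $4$, so that in any congestion-$1$ embedding these cycles become \emph{walls} that no further routing path can cross, each wall pierced by exactly one low-degree \emph{gate} vertex, with the gates on successive walls placed antipodally. Concretely I would take $G(\ell)$ to consist of $t=\Theta(\ell)$ cycles $Z_1,\dots,Z_t$ of total size $\Theta(\ell^2)$, successive cycles joined by radial edges, together with two extra vertices $x,y$ of degree $2$ attached respectively to $Z_1$ and $Z_t$ and one chord $e_0=xy$; the radial edge at one chosen gate vertex of each $Z_i$ is omitted, so that gate vertex has degree $3$ while every other vertex of $Z_2,\dots,Z_{t-1}$ has degree exactly $4$. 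An essentially equivalent description is: take the $\ell\times\ell$ grid, delete a long spiral path running from the centre to the boundary, and add one edge joining the two endpoints of that path. The number of missing edges is tuned so that $G(\ell)$ has $\bigl(2-\Omega(1)\bigr)N$ edges, leaving a linear amount of routing slack.

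Easy direction. I would first check that $G(\ell)$ does admit a congestion-$1$ embedding into an $N$-node grid: map the $Z_i$ to nested rectangular rings, route every edge of $G(\ell)-e_0$ with dilation $O(1)$, and route $e_0$ through the spiral corridor of free grid edges left open at the successive gates; that corridor has length $\Theta(\ell^2)=\Theta(N)$. Since by definition each guest edge is routed along a \emph{simple} path, every embedding of any $N$-node graph into an $N$-node grid has dilation at most $N-1$; hence this particular embedding has dilation exactly $\Theta(N)$, which supplies the upper bound.

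Hard direction (the crux). Let $\langle\phi,\rho\rangle$ be an arbitrary congestion-$1$ embedding of $G(\ell)$ into an $N$-node grid. The key local fact is that if a guest vertex $v$ has degree $4$, then the four grid edges incident to $\phi(v)$ are exactly the images of the four guest edges at $v$, so no portion of $\rho(e_0)$ may pass through $\phi(v)$; thus each of $\phi(Z_2),\dots,\phi(Z_{t-1})$ is a closed walk of ``saturated'' grid vertices, passable only at the image of its single gate. I would then argue, using planarity of the grid together with the density of the radial interconnections (which force $\phi(Z_{i+1})$ to run alongside $\phi(Z_i)$), that the curves $\phi(Z_i)$ are genuinely nested and separating and that $\phi(x)$ and $\phi(y)$ lie in regions separated by all $t$ of them. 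Consequently $\rho(e_0)$ must visit the images of the gate vertices $g_1,\dots,g_t$ in order, and between $\phi(g_i)$ and $\phi(g_{i+1})$ it is confined to the thin region between two consecutive walls, across which the antipodal placement of the gates forces it to travel a distance of order half the perimeter of wall $i$. Summing over the $\Theta(\ell)$ walls gives $|\rho(e_0)|=\Omega(\ell^2)=\Omega(N)$, i.e.\ dilation $\Omega(N)$.

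The step I expect to be the main obstacle is precisely this geometric claim in the hard direction: ruling out degenerate embeddings in which the images $\phi(Z_i)$ fail to nest --- for instance are laid out as one long thin tube enclosing nothing --- or in which the low-degree ``slack'' vertices are exploited to short-circuit past a wall. This requires a delicate balance, since making the radial interconnections dense enough to force nesting competes with keeping enough slack to route $e_0$ at all. My fallback would be to make $G(\ell)$ genuinely recursive --- $G(\ell)$ containing a copy of $G(\ell-1)$ inside its innermost cycle, attached by a full matching --- so that the statements ``$\phi(x)$ is trapped inside $\phi(Z_1)$'' and ``$\phi(Z_i)$ nests inside $\phi(Z_{i+1})$'' can be established by induction on $\ell$ rather than by one global topological argument, with the accumulated detour summed along the recursion.
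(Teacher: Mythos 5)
Your construction in its workable form (delete a long spiral path of grid edges and add one chord joining its endpoints) is the same device the paper uses (it deletes a serpentine path and adds the chord $e=((3,3),(\ell-2,\ell-2))$), and your easy direction is fine. But there are two genuine problems with the hard direction. First, your primary ``onion'' description is not equivalent to the spiral one and in fact cannot work: by the congestion-$1$ counting you yourself invoke, a routing path can pass \emph{through} a grid node $u$ only if $\deg_{\GridLabel}(u)\geq\deg_{\Gl}(\phi^{-1}(u))+2$, i.e.\ only if the guest vertex mapped there has degree at most $2$ (and $u$ is interior). In the onion version almost every vertex has degree $4$ and the gates have degree $3$, so $\rho(e_0)$ cannot pass through the gates either --- the walls are not ``passable at the gate,'' they are not passable at all, and there is no $\Theta(N)$-long corridor of degree-$\leq 2$ vertices for $\rho(e_0)$ to live in. The chord is only routable if the deleted edges leave behind $\Theta(N)$ guest vertices of degree $2$ forming the corridor, which is exactly what the spiral/serpentine deletion provides and the onion does not; your claim that the graph has $(2-\Omega(1))N$ edges is inconsistent with ``every other vertex of $Z_2,\dots,Z_{t-1}$ has degree exactly $4$.''

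Second, and more importantly, the step you flag as the main obstacle --- that the images $\phi(Z_i)$ are nested separating curves with $\phi(x)$ and $\phi(y)$ on opposite sides of all of them --- is a genuine gap, not a technicality. The images of the cycle edges are only closed trails in the grid, and nothing in your argument prevents degenerate layouts; establishing nesting essentially requires knowing where $\phi$ sends the cycles. The paper avoids topology entirely and instead proves a rigidity statement: using the degree inequality $\deg_{\GridLabel}(v)\geq\deg_{\Gl}(\phi^{-1}(v))+2(k-h)$ it identifies $\phi$ node by node, starting from the corners (the only degree-$2$ vertices adjacent to degree-$3$ vertices), then the boundary, then successive inner rows and columns, showing the congestion-$1$ embedding is unique up to rotation and reflection and that $\rho(e)$ is forced step by step along the deleted corridor. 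If you want to keep a separation-style argument you would need to supply this rigidity (or your proposed recursive fallback, carried out in full) rather than assume the nesting; as written the lower bound does not follow.
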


\begin{proof}
For an integer $\ell\geq 9$ with $\ell\bmod 4=1$,
 we define a guest graph $\Gl$
 obtained from $\Grid{\ell,\ell}$ by removing
 edges
\[
\begin{split}
&\{((i,j),(i,j+1))\mid 3\leq i\leq \ell-2,\ i\bmod 2=1,\ 3\leq j\leq\ell-3\}\\
\cup&
\{((i,3),(i+1,3))\mid 5\leq i\leq \ell-3,\ i\bmod 4\in\{1,2\}\}\\
\cup&
\{((i,\ell-2),(i+1,\ell-2))\mid 3\leq i\leq \ell-5,\ i\bmod 4\in\{3,0\}\}\\
\end{split}
\]
 and by adding an edge joining $e:=((3,3),(\ell-2,\ell-2))$.
We illustrate $G(13)$ in Fig.~\ref{fig:DilationGraph}.
The graph $\Gl$ can be embedded into
 $\Grid{\ell,\ell}$ with the edge-congestion $1$
 with an identity mapping for nodes and
 routing $e$ on the edges removed from $\Grid{\ell,\ell}$ to obtain $\Gl$.
This embedding clearly has a dilation of $\Theta(\ell^2)$.
We prove that
 if $\Gl$ can be embedded
 with the edge-congestion $1$
 into $\GridLabel:=\Grid{\ell_1,\ell_2}$ with
 $\ell_1\leq\ell\leq\ell_2$ and
 $\ell_1\ell_2=\ell^2$, then
 $\ell_1=\ell_2=\ell$ and
 such an embedding $\langle\phi,\rho\rangle$
 is unique within rotation and/or reflection.
Our proof is based on the following observation:
\begin{figure}
\begin{center}
\includegraphics[scale=1.0]{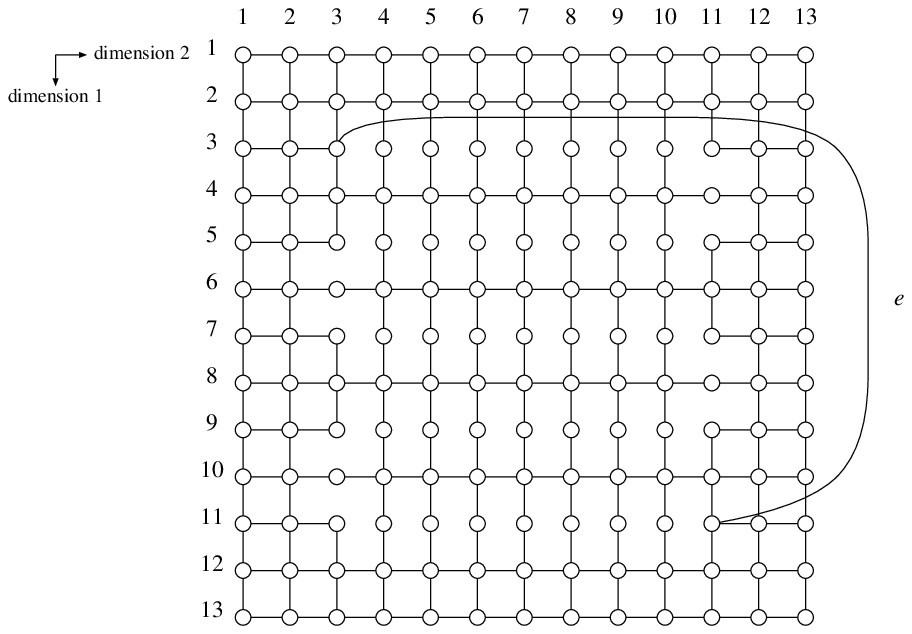}
\end{center}
\caption{%
$G(13)$.
}
\label{fig:DilationGraph}
\end{figure}
\begin{observation}
\label{ob:Dilation}
If $\rho$ maps $k$ edges of $\Gl$ on $k$ paths
 $h$ out of which ends at a node $v$ of $\GridLabel$,
 and the other $k-h$ of which pass through~$v$, then
\[
\deg_{\GridLabel}(v)\geq \deg_{\Gl}(\phi^{-1}(v))+2(k-h).
\]
\end{observation}
It should be noted that because
 $\Gl$ and $\GridLabel$ have exactly the same number of nodes,
 there exists a node $\phi^{-1}(v)$ of $\Gl$ for every node
 $v$ of $\GridLabel$.
We actually use this observation in different forms.

\begin{observation}
\label{ob:DilationDegree3}
If $\rho$ maps an edge of $\Gl$ on a path containing
 a node $v$ of $\GridLabel$ with degree $3$,
 then this edge is incident to $\phi^{-1}(v)$.
\end{observation}

\begin{observation}
\label{ob:DilationDegree4}
If $\rho$ maps two edges of $\Gl$ on two paths containing
 a node $v$ of $\GridLabel$ with degree $4$,
 then at least one of these edges is incident to $\phi^{-1}(v)$.
\end{observation}
Observations \ref{ob:DilationDegree3} and~\ref{ob:DilationDegree4} are implied
 by Observation~\ref{ob:Dilation}
 because
 $\Gl$ has no node with degree less than $2$,
 and therefore, for $k\geq\lfloor\deg_{\GridLabel}(v)/2\rfloor$,
\[
h\geq\frac{\deg_{\Gl}(\phi^{-1}(v))} 2+k
-\frac{\deg_{\GridLabel}(v)} 2
\geq 1+\left\lfloor\frac{\deg_{\GridLabel}(v)} 2\right\rfloor
-\frac{\deg_{\GridLabel}(v)} 2
 \geq\frac 1 2,
\]
 implying $h\geq 1$.

We first identify nodes of $\Gl$ mapped onto the boundary of $\GridLabel$.
Because $\Gl$ has no node with degree less than $2$,
 the node $\phi^{-1}((1,1))$ has degree~$2$.
Two edges of $\Gl$ incident to $\phi^{-1}((1,1))$ must be routed on nodes
 $(1,2)$ and $(2,1)$ of $\GridLabel$ with degree~$3$.
By Observation~\ref{ob:DilationDegree3}, therefore, 
 these edges are incident to $\phi^{-1}((1,2))$ and $\phi^{-1}((2,1))$.
Because only four corner nodes of $\Gl$,
 i.e., $(1,1)$, $(1,\ell)$, $(\ell,1)$, and $(\ell,\ell)$
 have degree $2$ and are incident to a node with degree $3$,
 we may assume without loss of generality that $\phi^{-1}((1,1))=(1,1)$,
 $\phi^{-1}((1,2))=(1,2)$, and $\phi^{-1}((2,1))=(2,1)$.
Repeating a similar argument, we can identify
 $\phi^{-1}((i,1))=(i,1)$ for $3\leq i\leq \ell_1$.
This implies that if $\ell_1<\ell$, then
 $\deg_{\GridLabel}((\ell_1,1))=2$ and $\deg_{\Gl}((\ell_1,1))=3$,
 yielding an edge-congestion more than~$1$.
Hence, we obtain $\ell_1=\ell_2=\ell$.
As a consequence,
 $\phi((i,1))=(i,1)$, $\phi((i,\ell))=(i,\ell)$,
 $\phi((1,i))=(1,i)$, and $\phi((\ell,i))=(\ell,i)$
 for $1\leq i\leq \ell$.

We then identify nodes of $\Gl$ mapped onto nodes on one row and one column
 inside
 the boundary of $\GridLabel$.
Because $\phi((1,2))=(1,2)$ and $\phi((2,1))=(2,1)$,
 two edges of $\Gl$ incident to $(1,2)$ and $(2,1)$ must be routed on
 the node $(2,2)$ of $\GridLabel$.
By Observation~\ref{ob:DilationDegree4}, therefore, 
 at least one of these two edges of $\Gl$ is incident
 to $\phi^{-1}((2,2))$.
Thus, we can identify $\phi^{-1}((2,2))=(2,2)$ because
 all the other nodes of $\Gl$
 adjacent to $(1,2)$ or $(2,1)$, i.e, $(1,1)$, $(1,3)$, and $(3,1)$
 have already been identified to be mapped to other positions.
Repeating a similar argument, we obtain
 $\phi((i,2))=(i,2)$, $\phi((i,\ell-1))=(i,\ell-1)$,
 $\phi((2,i))=(2,i)$, and $\phi((\ell-1,i))=(\ell-1,i)$
 for $2\leq i\leq \ell-1$.

Because $\phi((2,3))=(2,3)$ and $\phi((3,2))=(3,2)$,
 we can identify $\phi((3,3))=(3,3)$
 as done for $\phi((2,2))=(2,2)$,
 and similarly,
 $\phi((i,3))=(i,3)$ for $i\in\{3,4,5,\ell-2\}$ and
 $\phi((i,\ell-2))=(i,\ell-2)$ for $i\in\{3,\ell-4,\ell-3,\ell-2\}$.

Now we identify the routing of $e$.
Two edges of $\Gl$ incident to $\phi^{-1}((3,3))=(3,3)$ and
 $\phi^{-1}((2,4))=(2,4)$ must be routed on
 the node $(3,4)$ of $\GridLabel$.
By Observation~\ref{ob:DilationDegree4}, therefore, 
 we can identify $\phi^{-1}((3,4))=(3,4)$
 and $\rho(e)$ passing through $(3,4)$
 because
 all the other nodes of $\Gl$
 adjacent to $(3,3)$ or $(2,4)$, including $(\ell-2,\ell-2)$,
 have already been identified to be mapped to other positions.
With this fact,
 either $e$ or an edge of $\Gl$ incident to $\phi^{-1}((3,4))=(3,4)$, and
 an edge of $\Gl$ incident to $\phi^{-1}((4,3))=(4,3)$ must be routed on
 the node $(4,4)$ of $\GridLabel$.
By Observation~\ref{ob:DilationDegree4} again,
 we can identify $\phi^{-1}((4,4))=(4,4)$
 because
 all the other nodes of $\Gl$
 adjacent to $(3,3)$, $(3,4)$, or $(4,3)$
 have already been identified to be mapped to other positions.
This implies that $\rho(e)$ passes through $(3,4)$ toward $(3,5)$.
Repeating a similar argument, we obtain
 $\phi((3,i))=(3,i)$,
 $\phi((4,i))=(4,i)$, and
 $\rho(e)$ passes through $(3,i)$ toward $(3,i+1)$
 for $4\leq i\leq \ell-3$.

The path $\rho(e)$ passes through $(3,\ell-2)$ toward $(4,\ell-2)$ because
 it cannot go toward other directions.
Then,
 $\rho(e)$
 passes through $(4,\ell-2)$ and $(5,\ell-2)$ toward $(5,\ell-3)$
 with fixing
 $\phi^{-1}((4,\ell-2))=(4,\ell-2)$,
 $\phi^{-1}((5,\ell-2))=(5,\ell-2)$, and
 $\phi^{-1}((6,\ell-2))=(6,\ell-2)$
 as similarly discussed above.
We can also identify $\phi^{-1}((7,\ell-2))=(7,\ell-2)$
 since 
 $\phi^{-1}((6,\ell-2))=(6,\ell-2)$ and $\phi^{-1}((7,\ell-1))=(7,\ell-1)$.

At this point we have obtained
 the situation for $\rho(e)$ leaving from $(5,\ell-2)$ toward $(5,\ell-3)$,
 together with identified nodes of $\Gl$ mapped onto the $4$th row,
 $(i,\ell-2)$ for $i\in\{5,6,7\}$, and onto $(5,3)$.
Continuing this process until $\rho(e)$ arrives at $(\ell-2,\ell-2)$,
 we conclude that the embedding $\langle\phi,\rho\rangle$ is unique.
\end{proof}

\section{Concluding Remarks}
An open question is to improve the approximation ratio for $d\leq 1/(1-\alpha)$.
A main defect of SBE in approximation for $d\leq 1/(1-\alpha)$
 is 
 the use of an edge of the host grid
 in $\Theta(\log N)$ recursive steps,
 yielding
 a gap of $\Theta(\log N)$ factor to the optimal edge-congestion
 in the worst case.
Another open question is to improve the dilation.
In this connection,
 the author suspects that there is a general trade-off between
 edge-congestion and dilation, such as
 existence of guest graphs whose any embedding into a grid
 does not allow
 constant ratio approximation for both dilation and edge-congestion.

An analogous fact to Theorem~\ref{th:Dilation} for hypercubes
 can also be proved using the existence of an induced path
 of length $\Theta(N)$ in an $N$-node hypercube \cite{AK88}.



\end{document}